\algrenewcommand\algorithmicrequire{\textbf{Input:}}
\algrenewcommand\algorithmicensure{\textbf{Output:}}
\theoremstyle{plain}
\newtheorem{thm}{Theorem}[section]
\newtheorem{lem}[thm]{Lemma}
\newtheorem{rem}[thm]{Remark}
\newtheorem{cor}[thm]{Corollary}
\newcommand{\blind}{1}
\begin{document}

	\def\spacingset#1{\renewcommand{\baselinestretch}%
		{#1}\small\normalsize} \spacingset{1}

	
	\if1\blind
	{
		\title{\bf Consistency of regularized spectral clustering in degree-corrected mixed membership model}
		\author{Huan Qing\\ 
			School of Mathematics, China University of Mining and Technology\\
			and \\
			Jingli Wang \thanks{E-mail: jlwang@nankai.edu.cn} \\
			School of Statistics and Data Science, Nankai University}
		\maketitle
	} \fi
	
	\if0\blind
	{
		\bigskip
		\bigskip
		\bigskip
		\begin{center}
			{\LARGE\bf Consistency of regularized spectral clustering in degree-corrected mixed membership model}
		\end{center}
		\medskip
	} \fi
	
	\bigskip
	\begin{abstract}



Community detection in network analysis is an attractive research area recently.
Here, under the degree-corrected mixed membership (DCMM) model, we propose an efficient approach called mixed regularized spectral clustering (Mixed-RSC for short) based on the regularized Laplacian matrix. Mixed-RSC is designed based on an ideal cone structure of the variant for the eigen-decomposition of the population regularized Laplacian matrix. We show that the algorithm is asymptotically consistent under mild conditions by providing error bounds for the inferred membership vector of each node.  As a byproduct of our bound, we provide the theoretical optimal choice for the regularization parameter $\tau$. 
To demonstrate the performance of our method, we apply it with previous benchmark methods on both simulated and real-world networks.  
To our knowledge, this is the first work to design spectral clustering algorithm for mixed membership community detection problem under DCMM model based on the application of regularized Laplacian matrix.
\end{abstract}

\noindent%
{\it Keywords:}  Community detection; regularized Laplacian matrix; asymptotic analysis; optimal regularization parameter; ideal cone
\newpage
\section{Introduction}
\label{sec:intro}
The study of networks has received substantial attentions in past few years, see \citep{MMSB, football,Newman2004,Newman2007,Tutorial,papadopoulos2012community,RSC}. Networks often have some underlying structures, `communities', that is, nodes are in groups.  Thus it is essential to detect communities to study how a network is organized. If in a network one node only belongs to one community, then the problem is known as (non-overlapping/non-mixed membership) community detection. While if some nodes share among communities, it is known as mixed membership community detection.
While, in a real network some nodes often belong to more than one communities. Thus, it is meaningful and crucial to study the problem of mixed membership community detection.

The stochastic blockmodel (SBM) \citep{SBM} is a well-known and popular model to generate non-mixed membership networks. SBM assumes that nodes in a same community are expected to have same degrees (popularity). While in real cases, the degrees may vary among nodes. Thus some degree corrected models are developed, such as the degree-corrected stochastic block
model (DCSBM) \citep{DCSBM} and overlapping continuous community assignment model (OCCAM) \citep{OCCAM}. For mixed membership networks, the mixed membership stochastic blockmodel (MMSB) \citep{MMSB} is well known and it is an extension of SBM. However, similar as SBM, MMSB doesn't consider the degree  heterogeneity. To overcome this issue,  \cite{mixedSCORE} proposed a Degree Corrected Mixed Membership (DCMM) model which considered both mixed membership and degree heterogeneity. In this paper, we will analyze the performance of regularized spectral clustering for mixed membership community detection in DCMM.

Consider an undirected, unweighted network $\mathcal{N}$ and assume that there are $K$ disjoint blocks $V^{(1)}, V^{(2)}, \ldots, V^{(K)}$ where $K$ is known in this paper. Let the symmetric matrix $A$ be its adjacency matrix such that $A(i,j)=1$ if there is an edge between node $i$ and $j$, $A(i,j)=0$ otherwise, for $i,j=1, \dots,n$. The DCMM model assumes that node $i$ belongs to cluster $V^{(k)}$ with probability $\pi_{i}(k)$, that is,
\begin{align*}
\mathbb{P}(i\in V^{(k)})=\pi_{i}(k), ~~~~\sum_{k=1}^{K}\pi_{i}(k)=1,~~~~ 1\leq k\leq K,~~ 1\leq i\leq n.
\end{align*}
 Denote $\pi_{i}=(\pi_{i}(1), \pi_{i}(2), \ldots, \pi_{i}(K))$ which is known as the Probability Mass Function (PMF) \citep{mixedSCORE}.
A node $i$ is `pure' if one element of $\pi_i$ is 1, and the remaining $K-1$ entries are 0; and it is  a `mixed' node otherwise. Furthermore, $\underset{1\leq k\leq K}{\mathrm{max}}\pi_{i}(k)$ can be used to measure the \textit{purity} of node $i$, for $1\leq i\leq n$.
The model generates the adjacency matrix as follows:
\begin{align*}
\mathbb{P}(A(i,j)=1)=\theta(i)\theta(j)\sum_{k=1}^{K}\sum_{l=1}^{K}\pi_{i}(k)\pi_{j}(l)P(k,l),\\
A(i,j)=A(j,i) \sim Bernoulli(\mathbb{P}(A(i,j)=1)), ind., 1\leq i,j\leq n,
\end{align*}
where $P$ is a $K\times K$ symmetric non-negative, non-singular and irreducible matrix (called mixing matrix in this paper) and $P(i,j)\in [0,1] \mathrm{~for~}1\leq i,j\leq K$, $\theta=(\theta(1), \ldots, \theta(n))'$ is a positive vector which models the degree heterogeneity. Note that since $\mathbb{P}(A(i,j))\in [0,1]$, we need $\theta(i)\in (0,1]$ for $1\leq i\leq n$, where $\theta(i)$ can not be zero otherwise node $i$ is an isolated node which does not belong to any community and should be removed from the network first. Define $\Omega(i,j)=\mathbb{P}(A(i,j)=1), 1\leq i<j\leq n$. Then the expected matrix of A, $\Omega$, can be presented as 
\begin{align}\label{Omega}
\mathbb{E}[A]=\Omega=\Theta \Pi P \Pi' \Theta,
\end{align}
where $\Theta$ is an $n\times n$ diagonal matrix whose $i$-th diagonal entry is $\theta(i)$ for $1\leq i\leq n$, and $\Pi$  is an $n\times K$ membership matrix such that the $i$-th row of $\Pi$ (denoted as $\Pi(i,:)$) is $\pi_{i}$ for all $i\in \{1,2,\ldots, n\}$.  
To emphasize that the DCMM model is closely related with the four model parameters $n, P,\Theta, \Pi$, we call DCMM as $DCMM(n, P,\Theta, \Pi)$.

If all nodes are pure, DCMM reduces to DCSBM \citep{DCSBM}. In the case where $\theta(i)=c_{0}$ (a positive constant) for all $i\in\{1,2,\ldots, n\}$, DCMM degenerates as MMSB \citep{MMSB}.
Given $(A,K)$, the primary goal for mixed membership community detection is to estimate the membership matrix $\Pi$. The identifiability of the DCMM model has been studied by many papers, such as \cite{jin2017a,mixedSCORE,MaoSVM}. Similar as \cite{mixedSCORE}, \cite{MaoSVM} and \cite{OCCAM}, the following two conditions are assumed throughout this paper to guarantee the identifiability of the DCMM model. 
\begin{itemize}
	\item [(I1)] $\mathrm{rank}(P)=K$, and all diagonal entries of $P$ are ones;
	\item [(I2)] Each community has at least one pure node.
\end{itemize}

Many papers have provided very nice literature reviews for community detection including non-mixed and mixed membership networks, such as \cite{Cai2016survey,fortunato2010community,Fortunato2016Community, goldenberg2010a}. Here, we give a brief review of methods for mixed membership community detection. \cite{ZHANG2007ident} identified overlapping communities by mapping the network to Euclidean space and then applying fuzzy $c$-means clustering method and finally obtaining the optimal communities by maximizing a modularity function.
To detect directed, weighted and overlapping communities, \cite{Lancichinetti2011} locally optimized the statistical significance of clusters  with the help of some tools of Extreme and Order Statistics. \cite{Gillis2014ieee} proposed a global optimization algorithm by computing non-negative matrix factorization approximation to the adjacency matrix. \cite{OCCAM} constructed the model OCCAM in which they defined a new vector which can measure the degree of a node belonging to some other communities, and proposed a spectral clustering method based on K-median method. \cite{GeoNMF} designed an optimization method called  GeoNMF for mixed membership community  based on the nonnegative matrix factorization under the MMSB model. \cite{mixedSCORE} proposed the DCMM model and modified the Spectral Clustering On Ratios-of-Eigenvectors (SCORE) \citep{SCORE} (which was designed for non-mixed community detection) to the mixed membership community detection problem by considering a vertex hunting procedure and a membership reconstruction step, and called it as Mixed-SCORE. \cite{mao2020estimating} developed a spectral clustering algorithm based on the leading eigenvectors' simplex structure of the population adjacency matrix  under MMSB and  provided upper bounds of error rates for the inferred community membership vector of each node. \cite{SRSC} designed two regularized spectral clustering approaches based on the ideal simplex structure and the ideal cone structure of the eigen-decomposition of the population regularized Laplacian matrix under MMSB. In this paper, we aim at studying the consistency and the impact of regularization on spectral clustering under DCMM.

This paper makes four contributions in relation to the use of regularized Laplacian matrix on mixed membership community detection. First, based on DCMM model, we propose a regularized spectral clustering method based on the regularized Laplacian matrix instead of directly on the adjacency matrix under the degree-corrected mixed membership model. Thus we call our proposed method as mixed regularized spectral clustering (mixed-RSC for short). Our method is designed based on the ideal cone structure appeared in a carefully designed variant of the eigen-decomposition of the population regularized Laplacian matrix, and we apply the SVM-cone algorithm \citep{MaoSVM} to hunt for the corners of the variants of the eigen-decomposition of the regularized Laplacian matrix for Mixed-RSC. Second, we show the asymptotical consistency of the proposed method and give a upper bound for the error rate of each node under mild condition, where our condition only needs a upper bound requirement of the network sparsity. Third, we study the impact of regularizer on the proposed method and give a theoretical optimal choice for the regularization parameter of the Laplacian matrix based on the error rate's upper bound. 
Our last contribution is, by carefully analyzing the upper bound of error rate, we find that our theoretical results reach the separation condition of a balanced network with $K$ clusters and the sharp threshold of the Erd\"os-R\'enyi random graph $G(n,p)$ \citep{erdos2011on}.

Notations in the paper: $\|\cdot\|_{F}$ for a matrix denotes the Frobenius norm,  $\|\cdot\|$ for a matrix denotes the spectral norm, $\|\cdot\|_{1}$ for a vector denotes the $l_{1}$ norm and $|C|$ means the absolute value of number $C$. For any matrix $X$, set the matrix $\mathrm{max}(X,0)$ such that its $(i,j)$ entry is $\mathrm{max}(X_{ij},0)$. For any matrix $X$, $\|X\|_{2\rightarrow\infty}$ denotes the maximum $l_{2}$-norm of all the rows of $X$, $\|X\|_{\infty}=\mathrm{max}_{i}\sum_{j}|X(i,j)|$, and $\kappa(X)$ denotes the condition number of $X$. For any matrix or vector $X$, $X'$ denotes the transpose of $X$.  For convenience, when we say ``leading eigenvalues'' or ``leading eigenvectors'', we are comparing the \emph{magnitudes} of the eigenvalues and their respective eigenvectors with unit-norm. 
Let $\lambda_{k}(X)$ be the $k$-th leading eigenvalue of the matrix $X$. $X(i,:)$ and $X(:,j)$  denote the $i$-th row and the $j$-th column of matrix $X$, respectively. $X(S_{r},:)$ and $X(:,S_{c})$ denote the rows and columns in the index sets $S_{r}$ and $S_{c}$ of matrix $X$, respectively. For any vector $x$, we use $x_{i}$ or $x(i)$ to denote the $i$-th entry of it occasionally. For any matrix $X\in\mathbb{R}^{m\times m}$, let $\mathrm{diag}(X)$ be the $m\times m$ diagonal matrix whose $i$-th diagonal entry is $X(i,i)$. $\mathbf{1}$ is a column vector with all entries being ones. 
$e_{i}$ is a column vector whose $i$-th entry is 1 while other entries are zero.

\section{Methodology}\label{sec3}
\subsection{The Ideal Cone (IC) and the Ideal algorithms}

First, we introduce a population regularized Laplacian matrix.  Let $\mathscr{D}_{\tau}=\mathscr{D}+\tau I$, where $\mathscr{D}$ is an $n\times n$ diagonal matrix whose $i$-th diagonal entry is $\mathscr{D}(i,i)=\sum_{j=1}^{n}\Omega(i,j)$, and $\tau$ is a nonnegative regularizer (call $\tau$ regularizer or regularization parameter). The population Laplacian matrix with regularization is defined as
\begin{align}\label{Laplacian} \mathscr{L}_{\tau}=\mathscr{D}^{-1/2}_{\tau}\Omega\mathscr{D}^{-1/2}_{\tau}.
\end{align}
Let $\tilde{\Theta}=\mathscr{D}^{-1/2}_{\tau}\Theta$, and $\tilde{\theta}$ be an $n\times 1$ vector whose $i$-th entry is $\tilde{\Theta}(i,i)$ for $1\leq i\leq n$.  Plugging Eq (\ref{Omega}) into Eq (\ref{Laplacian}), we have $$\mathscr{L}_{\tau}=\tilde{\Theta}\Pi P\Pi'\tilde{\Theta}.$$
By basic algebra, we know the rank of $\mathscr{L}_{\tau}$ is $K$. Thus $\mathscr{L}_{\tau}$ has $K$ nonzero eigenvalues. Let $\{\lambda_{i},\eta_{i}\}_{i=1}^{K}$ be such leading $K$ eigenvalues and their respective eigenvectors with unit-norm.

Next two lemmas guarantee the existence of the Ideal Cone structure in the variant of eigen-decomposition of the population regularized Laplacian matrix in the mixed membership network under $DCMM(n, P,\Theta, \Pi)$, where the Ideal Cone is introduced in Problem 1 \cite{MaoSVM}. For convenience, set  $F=P\Pi'\tilde{\Theta}^{2}\Pi$ as a $K\times K$ matrix with full rank.
\begin{lem}\label{ExistB}
Under $DCMM(n, P, \Theta, \Pi)$, let $\mathscr{L}_{\tau}=VEV'$ be the compact eigenvalue decomposition of $\mathscr{L}_{\tau}$ such that $V$ is an $n\times K$ matrix containing the leading $K$ eigenvectors $\{\eta_{1},\eta_{2}, \ldots, \eta_{K}\}$ and $E$ is a $K\times K$ diagonal matrix whose diagonal entries are the leading $K$ eigenvalues $\{\lambda_{1}, \lambda_{2}, \ldots, \lambda_{K}\}$, then there exists an unique $K\times K$ matrix $B$ such that
\begin{itemize}
  \item[(1)] $V=\tilde{\Theta}\Pi B$, and the $k$-th column of $B$ is the $k$-th right eigenvector of $F$, and $\lambda_{k}$ is the $k$-th eigenvalue of $F$ for $1\leq k\leq K$.
  \item[(2)] $B$ can also be written as $B=\tilde{\Theta}^{-1}(\mathcal{I},\mathcal{I})V(\mathcal{I},:)$, where $\mathcal{I}$ is the indices of rows corresponding to $K$ pure nodes, one from each community.
\end{itemize}
\end{lem}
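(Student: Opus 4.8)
The plan is to combine the structural identity $\mathscr{L}_{\tau}=\tilde{\Theta}\Pi P\Pi'\tilde{\Theta}$ with the defining eigen-relation $\mathscr{L}_{\tau}V=VE$, and to construct $B$ from the fact that $V$ and $M:=\tilde{\Theta}\Pi$ have the same $K$-dimensional column space. First I would check that $M$ has full column rank $K$: the diagonal matrix $\tilde{\Theta}=\mathscr{D}^{-1/2}_{\tau}\Theta$ has strictly positive entries, and by identifiability condition (I2) each community contains a pure node, so $\Pi$ has rank $K$; multiplying by the invertible diagonal $\tilde{\Theta}$ preserves rank, giving $\mathrm{rank}(M)=K$. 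Since $P$ is nonsingular by (I1), the range of $\mathscr{L}_{\tau}=MPM'$ coincides with the range of $M$, which is exactly the $K$-dimensional span of the columns of $V$. Hence there exists a $K\times K$ matrix $B$ with $V=MB=\tilde{\Theta}\Pi B$. This $B$ is unique, because $MB=MB'$ together with the injectivity of $M$ as a column map (trivial kernel, as $\mathrm{rank}(M)=K$) forces $B=B'$; and $B$ is invertible since $V$ and $M$ both have rank $K$.

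Next I would recover the eigen-structure in part (1). Substituting $V=MB$ and $\mathscr{L}_{\tau}=MPM'$ into $\mathscr{L}_{\tau}V=VE$ gives $M\,P\,(M'M)\,B=M\,B\,E$. Because $M'M=\Pi'\tilde{\Theta}^{2}\Pi$, the left side equals $M\,(P\Pi'\tilde{\Theta}^{2}\Pi)\,B=M\,(FB)$, so $M(FB)=M(BE)$. Cancelling the injective factor $M$ on the left yields $FB=BE$. Read column by column, this says $F\,B(:,k)=\lambda_{k}\,B(:,k)$, so the $k$-th column of $B$ is the $k$-th right eigenvector of $F$ and $\lambda_{k}$ is the corresponding eigenvalue; since $B$ is invertible we in fact obtain the full similarity $F=BEB^{-1}$, confirming that $F$ and $\mathscr{L}_{\tau}$ share the $K$ nonzero eigenvalues.

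Finally, for part (2) I would restrict the relation $V=\tilde{\Theta}\Pi B$ to the pure-node rows indexed by $\mathcal{I}$. As $\tilde{\Theta}$ is diagonal, $V(\mathcal{I},:)=\tilde{\Theta}(\mathcal{I},\mathcal{I})\,\Pi(\mathcal{I},:)\,B$, and since the nodes in $\mathcal{I}$ are pure with exactly one per community, after ordering $\mathcal{I}$ so that its $j$-th entry lies in community $j$ we have $\Pi(\mathcal{I},:)=I_{K}$. Thus $V(\mathcal{I},:)=\tilde{\Theta}(\mathcal{I},\mathcal{I})B$, and left-multiplying by the invertible diagonal matrix $\tilde{\Theta}^{-1}(\mathcal{I},\mathcal{I})$ gives $B=\tilde{\Theta}^{-1}(\mathcal{I},\mathcal{I})V(\mathcal{I},:)$.

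The individual computations are routine linear algebra; the only points needing care are the two cancellations that rely on $M=\tilde{\Theta}\Pi$ being injective as a column map, which is precisely where identifiability conditions (I1) and (I2) enter. I expect the main subtlety to be verifying that $V$ and $M$ have the same column range (not merely nested ranges) — this requires both the nonsingularity of $P$ and the rank-$K$ property of $M$ — after which the existence, uniqueness, and eigen-characterization of $B$ all follow immediately.
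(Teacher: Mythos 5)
Your proof is correct, and it diverges from the paper's in a meaningful way. The core of part (1) is the same in both: substitute $V=\tilde{\Theta}\Pi B$ and $\mathscr{L}_{\tau}=\tilde{\Theta}\Pi P\Pi'\tilde{\Theta}$ into $\mathscr{L}_{\tau}V=VE$ and cancel to get $FB=BE$. The paper cancels $\Pi$ via a small auxiliary lemma (if $\Pi X=\Pi\tilde{X}$ and every community has a pure node, then $X=\tilde{X}$), which is your injectivity of $M=\tilde{\Theta}\Pi$ in disguise, since (I2) is exactly what forces $\mathrm{rank}(\Pi)=K$; so the two cancellation devices are equivalent. Where you genuinely differ is existence. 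The paper's first paragraph only establishes the conditional ``if $B$ exists then $FB=BE$,'' and its ``therefore $B$ exists'' is a non sequitur as written; existence is actually supplied by the later computation that proves part (2): restricting $\mathscr{L}_{\tau}=VEV'$ to the pure rows to get $P\Pi'\tilde{\Theta}=\tilde{\Theta}^{-1}(\mathcal{I},\mathcal{I})V(\mathcal{I},:)EV'$, substituting into $VE=\mathscr{L}_{\tau}V$, and right-multiplying by $E^{-1}$ to obtain $V=\tilde{\Theta}\Pi\tilde{\Theta}^{-1}(\mathcal{I},\mathcal{I})V(\mathcal{I},:)$, which yields existence and the closed form of $B$ in one stroke. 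You instead prove existence abstractly, by showing $\mathrm{range}(\mathscr{L}_{\tau})=\mathrm{range}(M)=\mathrm{range}(V)$ using (I1) and the full column rank of $M$, after which part (2) becomes the one-line row restriction $V(\mathcal{I},:)=\tilde{\Theta}(\mathcal{I},\mathcal{I})\Pi(\mathcal{I},:)B=\tilde{\Theta}(\mathcal{I},\mathcal{I})B$. Your route is logically cleaner (it repairs the ordering defect in the paper's write-up and makes uniqueness and existence independent of the pure-row computation), while the paper's route is more constructive, producing the explicit formula for $B$ directly from the eigen-decomposition. One cosmetic caution: in your uniqueness step you write $B'$ for a second candidate matrix, but in this paper's notation a prime denotes transpose, so a symbol like $\tilde{B}$ should be used instead.
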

\begin{rem}\label{IndexFix1}
Note that if there is another index set $\tilde{\mathcal{I}}$ such that $\tilde{\mathcal{I}}$ is the indices of rows corresponding to $K$ pure nodes, one from each community, where these $K$ pure nodes may differ from those in $\mathcal{I}$. Since $B$ is unique, we have  $\tilde{\Theta}^{-1}(\mathcal{I},\mathcal{I})V(\mathcal{I},:)\equiv \tilde{\Theta}^{-1}(\tilde{\mathcal{I}},\tilde{\mathcal{I}})V(\tilde{\mathcal{I}},:)$.
\end{rem}
Actually, under MMSB, $V(i,:)=V(j,:)$ if $\Pi(i,:)=\Pi(j,:)$. However, under DCMM, it does not hold. Since from Lemma \ref{ExistB}, we find that $V(i,:)=e'_{i}V=e'_{i}\tilde{\Theta}\Pi B=\tilde{\theta}(i)\Pi(i,:)B$, thus only if $\Pi(i,:)=\Pi(j,:)$ and $\theta(i)=\theta(j)$, we can draw the conclusion that $V(i,:)=V(j,:)$. However, if we consider the row-normalized version of $V$, denoted by $V_{*,1}$, i.e.,$V_{*,1}(i,:)=\frac{V(i,:)}{\|V(i,:)\|_{F}}$, then we can find that if $\Pi(i,:)=\Pi(j,:)$,  $V_{*,1}(i,:)=V_{*}(j,:)$ hold. We present this conclusion in the following lemma. For convenience, let $N_{V}$ be the $n\times n$ diagonal matrix such that $N_{V}(i,i)=\frac{1}{\|V(i,:)\|_{F}}$ for $1\leq i\leq n$, and then $V_{*,1}=N_{V}V$. Next lemma shows that each row of $V_{*,1}$ can be expressed by a scaled combination of $V_{*,1}(\mathcal{I},:)$  and exhibits the existence of the Ideal Cone.
\begin{lem}\label{IdealCone}
Under $DCMM(n, P, \Theta, \Pi)$, there exists a $Y_{1}\in\mathbb{R}^{n\times K}_{\geq 0}$ and no row of $Y_{1}$ is 0 such that
\begin{align*}
V_{*,1}=Y_{1}V_{*,1}(\mathcal{I},:),
\end{align*}
where $Y_{1}$ can be written as $Y_{1}=N_{M_{1}}\Pi \tilde{\Theta}^{-1}(\mathcal{I},\mathcal{I})N_{V}^{-1}(\mathcal{I},\mathcal{I})$, $N_{M_{1}}$ is an $n\times n$ diagonal matrix whose diagonal entries are positive. Meanwhile, for any two distinct nodes $i,j$, when $\Pi(i,:)=\Pi(j,:)$, we have $V_{*,1}(i,:)=V_{*,1}(j,:)$.
\end{lem}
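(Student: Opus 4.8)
The plan is to derive the entire statement from Lemma~\ref{ExistB} by pure bookkeeping, with no fresh spectral analysis: everything follows from the factorization $V=\tilde{\Theta}\Pi B$ and the row-normalization $V_{*,1}=N_VV$. First I would record the row-wise form of $V_{*,1}$. By Lemma~\ref{ExistB}(1), $V(i,:)=e_i'\tilde{\Theta}\Pi B=\tilde{\theta}(i)\,\Pi(i,:)B$, and since $\tilde{\theta}(i)>0$,
$$\|V(i,:)\|_F=\tilde{\theta}(i)\,\|\Pi(i,:)B\|_F,\qquad V_{*,1}(i,:)=\frac{V(i,:)}{\|V(i,:)\|_F}=\frac{\Pi(i,:)B}{\|\Pi(i,:)B\|_F}.$$
Before dividing I must confirm that $N_V$ is well defined, i.e. that $V$ has no zero row. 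Because $B$ is invertible (it has full column rank $K$, as $V=\tilde{\Theta}\Pi B$ has rank $K$ while $\tilde{\Theta}$ is invertible and $\Pi$ has full column rank by (I2)) and $\Pi(i,:)\neq 0$ (its entries sum to $1$), we get $\Pi(i,:)B\neq 0$, so $\|V(i,:)\|_F>0$. The displayed expression for $V_{*,1}(i,:)$ depends on $i$ only through $\Pi(i,:)$, which is exactly the last assertion of the lemma: $\Pi(i,:)=\Pi(j,:)$ forces $V_{*,1}(i,:)=V_{*,1}(j,:)$.

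Next I would build the cone factorization. Lemma~\ref{ExistB}(2) gives $B=\tilde{\Theta}^{-1}(\mathcal{I},\mathcal{I})V(\mathcal{I},:)$, i.e. $V(\mathcal{I},:)=\tilde{\Theta}(\mathcal{I},\mathcal{I})B$; the virtue of using this relation is that it already encodes the pure-node rows in terms of $B$, so I never have to manipulate $\Pi(\mathcal{I},:)$ directly. Normalizing those rows,
$$V_{*,1}(\mathcal{I},:)=N_V(\mathcal{I},\mathcal{I})V(\mathcal{I},:)=N_V(\mathcal{I},\mathcal{I})\tilde{\Theta}(\mathcal{I},\mathcal{I})B,$$
and since $N_V(\mathcal{I},\mathcal{I})$ and $\tilde{\Theta}(\mathcal{I},\mathcal{I})$ are diagonal with strictly positive entries they are invertible, so I can solve $B=\tilde{\Theta}^{-1}(\mathcal{I},\mathcal{I})N_V^{-1}(\mathcal{I},\mathcal{I})V_{*,1}(\mathcal{I},:)$. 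Substituting into $V_{*,1}=N_VV=N_V\tilde{\Theta}\Pi B$ yields
$$V_{*,1}=\big(N_V\tilde{\Theta}\big)\,\Pi\,\tilde{\Theta}^{-1}(\mathcal{I},\mathcal{I})N_V^{-1}(\mathcal{I},\mathcal{I})\,V_{*,1}(\mathcal{I},:),$$
so I set $N_{M_1}:=N_V\tilde{\Theta}$ and $Y_1:=N_{M_1}\Pi\tilde{\Theta}^{-1}(\mathcal{I},\mathcal{I})N_V^{-1}(\mathcal{I},\mathcal{I})$, matching the claimed formula.

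Finally I would check the structural properties. The product $N_{M_1}=N_V\tilde{\Theta}$ of two $n\times n$ diagonal matrices is diagonal with $i$-th entry $\tilde{\theta}(i)/\|V(i,:)\|_F>0$ by the first step. All factors of $Y_1$ are entrywise nonnegative ($N_{M_1}$, $\tilde{\Theta}^{-1}(\mathcal{I},\mathcal{I})$ and $N_V^{-1}(\mathcal{I},\mathcal{I})$ are nonnegative diagonal matrices, and $\Pi\geq 0$), so $Y_1\in\mathbb{R}^{n\times K}_{\geq 0}$; moreover its $i$-th row $N_{M_1}(i,i)\,\Pi(i,:)\tilde{\Theta}^{-1}(\mathcal{I},\mathcal{I})N_V^{-1}(\mathcal{I},\mathcal{I})$ is nonzero because $N_{M_1}(i,i)>0$, $\Pi(i,:)\neq 0$, and the trailing diagonal factors are positive. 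I do not expect a real obstacle here: the argument is almost entirely substitution, and the only genuine points of care are (a) verifying that $V$ has no zero row so that $N_V$, and hence $V_{*,1}$ and $N_{M_1}$, are well defined, and (b) tracking which quantities are positive-diagonal (hence safely invertible and sign-preserving). These are precisely the places where the identifiability conditions (I1)--(I2) and the positivity of $\tilde{\theta}$ enter.
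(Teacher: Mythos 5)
Your proof is correct and follows essentially the same route as the paper's: both rest on Lemma \ref{ExistB}'s factorization $V=\tilde{\Theta}\Pi B$ with $B=\tilde{\Theta}^{-1}(\mathcal{I},\mathcal{I})V(\mathcal{I},:)$, insert $N_{V}^{-1}(\mathcal{I},\mathcal{I})N_{V}(\mathcal{I},\mathcal{I})$ to pass to $V_{*,1}(\mathcal{I},:)$, and read off $Y_{1}=N_{M_{1}}\Pi\tilde{\Theta}^{-1}(\mathcal{I},\mathcal{I})N_{V}^{-1}(\mathcal{I},\mathcal{I})$, with your $N_{M_{1}}=N_{V}\tilde{\Theta}$ coinciding with the paper's $\mathrm{diag}(1/\|M_{1}(i,:)\|_{F})$ because $V=\tilde{\Theta}M_{1}$. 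The only differences are cosmetic — you substitute through $B$ rather than through $M_{1}$, and you add a welcome explicit check that $V$ has no zero row so that $N_{V}$ is well defined.
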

\begin{rem}\label{IndexFix2}
Since $B=\tilde{\Theta}^{-1}(\mathcal{I},\mathcal{I})V(\mathcal{I},:)$, where $\tilde{\Theta}^{-1}(\mathcal{I},\mathcal{I})$ is a diagonal matrix, we have $V_{*,1}(\mathcal{I},:)$ is also obtained by normalizing each rows of $B$ to have unit-length. Since $B$ is unique by Lemma \ref{ExistB}, we say that $V_{*,1}(\mathcal{I},:)$ is unique, i.e., if there is another index set $\tilde{\mathcal{I}}\neq \mathcal{I}$, we still have $V_{*,1}(\mathcal{I},:)=V_{*,1}(\tilde{\mathcal{I}},:)$.
\end{rem}
Lemma \ref{IdealCone} shows that the form of $V_{*,1}=Y_{1}V_{*,1}(\mathcal{I},:)$ is actually the Ideal Cone mentioned in \cite{MaoSVM}. Since $\mathrm{rank}(V_{*,1})=K$, which gives that $\mathrm{rank}(V_{*,1}(\mathcal{I},:))=K$, suggesting that the inverse of $V_{*,1}(\mathcal{I},:)V'_{*,1}(\mathcal{I},:)$ exists. Therefore, Lemma \ref{IdealCone} also gives that
\begin{align}\label{Y1}
Y_{1}=V_{*,1}V'_{*,1}(\mathcal{I},:)(V_{*,1}(\mathcal{I},:)V'_{*,1}(\mathcal{I},:))^{-1}.
\end{align}
\begin{rem}\label{Y1alternativechoice}
Since  $V_{*,1}(\mathcal{I},:)\in\mathbb{R}^{K\times K}$ is full rank and $V^{-1}_{*,1}(\mathcal{I},:)\equiv V'_{*,1}(\mathcal{I},:)(V_{*,1}(\mathcal{I},:)V'_{*,1}(\mathcal{I},:))^{-1}$, we can also set  $Y_{1}=V_{*,1}V^{-1}_{*,1}(\mathcal{I},:)$.
\end{rem}
Since $V_{*,1}=N_{V}V, Y_{1}=N_{M_{1}}\Pi\tilde{\Theta}^{-1}(\mathcal{I},\mathcal{I})N_{V}^{-1}(\mathcal{I},\mathcal{I})$, we have $
N_{V}^{-1}N_{M_{1}}\Pi\tilde{\Theta}^{-1}(\mathcal{I},\mathcal{I})N_{V}^{-1}(\mathcal{I},\mathcal{I})=VV'_{*,1}(\mathcal{I},:)(V_{*,1}(\mathcal{I},:)V'_{*,1}(\mathcal{I},:))^{-1}$, which gives that
\begin{align}\label{ZYJ1}
N_{V}^{-1}N_{M_{1}}\Pi=VV'_{*,1}(\mathcal{I},:)(V_{*,1}(\mathcal{I},:)V'_{*,1}(\mathcal{I},:))^{-1}N_{V}(\mathcal{I},\mathcal{I})\tilde{\Theta}(\mathcal{I},\mathcal{I}).
\end{align}
Recall that $\mathscr{L}_{\tau}=\tilde{\Theta}\Pi P\Pi'\tilde{\Theta}=VEV'$, we have $\mathscr{L}_{\tau}(\mathcal{I},\mathcal{I})=\tilde{\Theta}(\mathcal{I},\mathcal{I})\Pi(\mathcal{I},:) P\Pi'(\mathcal{I},:)\tilde{\Theta}(\mathcal{I},\mathcal{I})=V(\mathcal{I},:)EV'(\mathcal{I},:)$, where we have used  $\Pi(\mathcal{I},:)=I$ as in the proof of Lemma \ref{ExistB}. Then we have $\tilde{\Theta}(\mathcal{I},\mathcal{I})P\tilde{\Theta}(\mathcal{I},\mathcal{I})=V(\mathcal{I},:)EV'(\mathcal{I},:)$, combine the above equality with the fact that  all diagonal entries of $P$ are ones, we have
\begin{align}\label{ThetaTildeIIExpression}
\tilde{\Theta}(\mathcal{I},\mathcal{I})=\sqrt{\mathrm{diag}(V(\mathcal{I},:)EV'(\mathcal{I},:))}.
\end{align}
Set $J_{1}=N_{V}(\mathcal{I},\mathcal{I})\tilde{\Theta}(\mathcal{I},\mathcal{I})$, then we have $J_{1}=\sqrt{\mathrm{diag}(N_{V}(\mathcal{I},\mathcal{I})V(\mathcal{I},:))EV'(\mathcal{I},:)N_{V}(\mathcal{I},\mathcal{I})}\equiv\sqrt{\mathrm{diag}(V_{*}(\mathcal{I},:)EV'_{*}\mathcal{I},:)}$.

For convenience, set $Z_{1}=N_{V}^{-1}N_{M_{1}}\Pi, Y_{\bullet,1}=VV'_{*,1}(\mathcal{I},:)(V_{*,1}(\mathcal{I},:)V'_{*,1}(\mathcal{I},:))^{-1}$. By Eq (\ref{ZYJ1}), we have
\begin{align}\label{Z1}
Z_{1}=Y_{\bullet,1}J_{1}\equiv VV'_{*,1}(\mathcal{I},:)(V_{*,1}(\mathcal{I},:)V'_{*,1}(\mathcal{I},:))^{-1}\sqrt{\mathrm{diag}(V_{*,1}(\mathcal{I},:)EV'_{*,1}\mathcal{I},:)}.
\end{align}
Meanwhile, since $N_{V}^{-1}N_{M_{1}}$ is an $n\times n$ positive diagonal matrix, we have $\Pi(i,:)=\frac{Z_{1}(i,:)}{\|Z_{1}(i,:)\|_{1}}$ for $1\leq i\leq n$. The above analysis shows when $\Omega, K$ are given, we can obtain $\mathscr{L}_{\tau}$ and its leading $K$ eigenvalues and eigenvectors, once we know the $K\times K$ corner matrix $V_{*,1}(\mathcal{I},:)$,  we can exactly recover $\Pi$ by setting $\Pi(i,:)=\frac{Z_{1}(i,:)}{\|Z_{1}(i,:)\|_{1}}$ for $1\leq i\leq n$.

Thus, the only difficulty is in finding the corner matrix $V_{*,1}(\mathcal{I},:)$. From Lemma \ref{IdealCone}, we know that $V_{*,1}=Y_{1}V_{*,1}(\mathcal{I},:)$ forms the Ideal Cone. In \cite{MaoSVM}, their SVM-cone algorithm (presented in the supplemental material) can exactly \footnote{The corner indices set $\mathcal{I}_{SVM-cone}$ returned by SVM-cone may not equal to $\mathcal{I}$, but it is the indices of rows of $V_{*,1}$ corresponding to $K$ pure nodes, one from each community, see section D in supplementary material for detail. By Remark \ref{IndexFix2}, we have $V_{*,1}(\mathcal{I},:)=V_{*,1}(\mathcal{I}_{SVM-cone},:)$, i.e., when the input is $V_{*,1}$ in  the SVM-cone algorithm, we can exactly obtain $V_{*,1}(\mathcal{I},:)$ by the index set returned from SVM-cone algorithm,  hence we state here that SVM-cone can exactly recover $\mathcal{I}$.} obtain the corner indices $\mathcal{I}$ from the Ideal Cone such that if the condition $(Y_{P}Y'_{P})^{-1}\textbf{1}>0$ (in \cite{MaoSVM}'s notations) holds. 
We find that $(V_{*,1}(\mathcal{I},:)V'_{*,1}(\mathcal{I},:))^{-1}\mathbf{1}>0$ holds (see Lemma \ref{LSVM}), which suggests that we can take the advantage of SVM-cone algorithm to deal with $V_{*,1}$ which has the ideal cone structure such that $V_{*,1}=Y_{1}V_{*,1}(\mathcal{I},:)$.
\begin{lem}\label{LSVM}
Under $DCMM(n, P, \Theta, \Pi)$, $(V_{*,1}(\mathcal{I},:)V'_{*,1}(\mathcal{I},:))^{-1}\mathbf{1}>0$ holds.
\end{lem}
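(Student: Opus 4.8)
The plan is to reduce the claim to an elementary positivity statement about the $K\times K$ matrix $W:=\Pi'\tilde{\Theta}^{2}\Pi$. Write $\Lambda:=V_{*,1}(\mathcal{I},:)$. By Remark~\ref{IndexFix2}, $\Lambda$ is exactly the row-normalization of $B$, so that $\Lambda=D^{-1}B$, where $D$ is the $K\times K$ positive diagonal matrix with $D(k,k)=\|B(k,:)\|_{F}$; these entries are strictly positive because $B$ is invertible (it has full rank $K$ by Lemma~\ref{ExistB}) and hence has no zero row. Consequently $\Lambda\Lambda'=D^{-1}(BB')D^{-1}$, and therefore $(\Lambda\Lambda')^{-1}=D(BB')^{-1}D$. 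The whole problem thus hinges on identifying $(BB')^{-1}$.

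The key identity I would establish is $BB'=W^{-1}$. This follows from the orthonormality of the eigenvectors collected in $V$: since $V'V=I$ and $V=\tilde{\Theta}\Pi B$ by Lemma~\ref{ExistB}, we get $I=V'V=B'\Pi'\tilde{\Theta}^{2}\Pi B=B'WB$, whence $W=(B')^{-1}B^{-1}$ and $BB'=W^{-1}$. Here $W$ is invertible (indeed symmetric positive definite): $\tilde{\Theta}^{2}$ is a positive diagonal matrix and $\Pi$ has full column rank $K$ because, by identifiability condition (I2), each community contains a pure node, so the corresponding $K$ rows of $\Pi$ form the identity $I$.

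Combining the two steps gives $(\Lambda\Lambda')^{-1}=DWD$, so that $(\Lambda\Lambda')^{-1}\mathbf{1}=DWd$, where $d:=D\mathbf{1}$ is the column vector with entries $d_{k}=\|B(k,:)\|_{F}>0$. It then remains to check that $Wd>0$ entrywise, after which left-multiplication by the positive diagonal $D$ preserves positivity. For this, observe that $W(k,l)=\sum_{i=1}^{n}\pi_{i}(k)\pi_{i}(l)\tilde{\theta}(i)^{2}\ge 0$, so every entry of $W$ is nonnegative; moreover $W(k,k)=\sum_{i=1}^{n}\pi_{i}(k)^{2}\tilde{\theta}(i)^{2}>0$ since (I2) supplies a pure node $i$ in community $k$ with $\pi_{i}(k)=1$ and $\tilde{\theta}(i)>0$. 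Hence $(Wd)_{k}=\sum_{l}W(k,l)d_{l}\ge W(k,k)d_{k}>0$ for every $k$, and the claim follows.

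The only real obstacle is spotting the correct reduction, namely recognizing that the Gram matrix $\Lambda\Lambda'$ of the normalized corner rows is merely a diagonal rescaling of $(BB')^{-1}$, and that $BB'$ is pinned down by the eigenvector orthonormality $V'V=I$. Once the identity $BB'=(\Pi'\tilde{\Theta}^{2}\Pi)^{-1}$ is in hand, everything collapses to the manifest entrywise nonnegativity of $\Pi'\tilde{\Theta}^{2}\Pi$ together with the strict positivity of its diagonal, the latter guaranteed by condition (I2). No delicate estimates are required.
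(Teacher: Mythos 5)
Your proof is correct and is essentially the paper's own argument: both exploit the orthonormality $V'V=I$ together with $V=\tilde{\Theta}\Pi B$ to show that $(V_{*,1}(\mathcal{I},:)V'_{*,1}(\mathcal{I},:))^{-1}$ equals a positive diagonal conjugation of $\Pi'\tilde{\Theta}^{2}\Pi$ (your $DWD$ is exactly the paper's $N_{V}^{-1}(\mathcal{I},\mathcal{I})\tilde{\Theta}^{-1}(\mathcal{I},\mathcal{I})\,\Pi'\tilde{\Theta}^{2}\Pi\,\tilde{\Theta}^{-1}(\mathcal{I},\mathcal{I})N_{V}^{-1}(\mathcal{I},\mathcal{I})$, since $D=N_{V}^{-1}(\mathcal{I},\mathcal{I})\tilde{\Theta}^{-1}(\mathcal{I},\mathcal{I})$), and then conclude from entrywise nonnegativity plus strictly positive diagonal entries guaranteed by condition (I2). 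The only cosmetic difference is that you route the computation through $B$ and its row norms rather than through $V(\mathcal{I},:)$ and $N_{V}$.
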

The above analysis gives rise to the following three-stage algorithm which we call Ideal Mixed-RSC.
\begin{itemize}
	\item  Input $\Omega, K$. Output: $\Pi$.
  \item Obtain $\mathscr{L}_{\tau}$. Compute $V, E$ from  $\mathscr{L}_{\tau}$, and obtain $V_{*,1}$.
  \item Run SVM-cone algorithm on $V_{*,1}$ and $K$ to obtain the index set $\mathcal{I}_{SVM-cone}$. Then obtain the corner matrix $V_{*,1}(\mathcal{I},:)$ (By Lemma \ref{IndexFix2},  $V_{*,1}(\mathcal{I},:)\equiv V_{*,1}(\mathcal{I}_{SVM-cone},:)$.).
  \item Obtain $Y_{\bullet,1}$, $J_{1}$, and $Z_{1}=Y_{\bullet,1}J_{1}$.
          \item Recover $\Pi(i,:)$ by setting
              $\Pi(i,:)=\frac{Z_{1}(i,:)}{\|Z_{1}(i,:)\|_{1}}$ for $1\leq i\leq n$.
\end{itemize}
The above analysis shows that the Ideal Mixed-RSC exactly recovers the membership matrix $\Pi$.

To demonstrate that $V_{*,1}$ has the form of ideal cone structure, we drew Figure \ref{PlotVstar}. The result shows that all rows respective to mixed nodes of $V_{*,1}$ are located  at one side of the hyperplane formed by the $K$ (where $K$ is 3 in this figure) rows respective to pure nodes of $V_{*,1}$.  Meanwhile, we can exactly obtain the corner matrix $V_{*,1}(\mathcal{I},:)$ using the index set returned by SVM-cone algorithm. 
The data in Figure \ref{PlotVstar} is generated with the following settings: $n=400, K=3$,  and each cluster has 40 pure nodes. For a mixed node $j$ ($j=1,2,\ldots, 280$), we set $\Pi(j,1)=\mathrm{rand}(1)/2, \Pi(j,2)=\mathrm{rand}(1)/2, \Pi(j,3)=1-\Pi(j,1)-\Pi(j,2)$ where $\mathrm{rand}(1)$ is an arbitrary value in $(0,1)$. For $1\leq i\leq n$, $\theta(i)$ is a random value in $(0,1)$. The matrix $P$ is set as \[P=\begin{bmatrix}
    1&0.4&0.3\\
    0.4&1&0.1\\
    0.3&0.1&1\\
\end{bmatrix}.
\]
Then, when $n, \Pi, K, P, \Theta$ are fixed, after computing $\Omega$, we obtain $\mathscr{L}_{\tau}$, and then obtain $V_{*,1}$. Run SVM-cone algorithm on $V_{*,1}$ to obtain $\mathcal{I}_{SVM-cone}$. Since $V_{*,1}(\mathcal{I},:)\equiv V_{*,1}(\mathcal{I}_{SVM-cone},:)$ by Remark \ref{IndexFix2}, we obtain the $K$ corners returned by SVM-cone algorithm. After finishing the above settings, we can plot Figure \ref{PlotVstar}.
\begin{figure}
	\centering
{\includegraphics[width=0.66\textwidth]{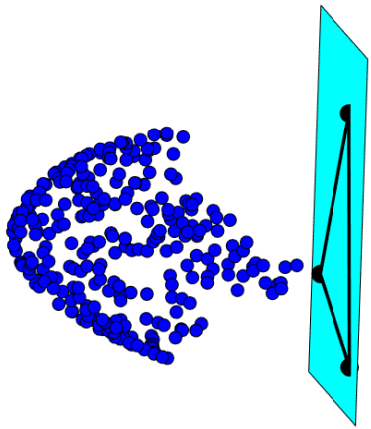}}
	\caption{Plot of $V_{*,1}$ and the hyperplane formed by $V_{*,1}(\mathcal{I},:)$. Blue points denote rows of $V_{*,1}$ where these rows are  respective to mixed nodes; Black points denote the $K$ rows of $V_{*,1}(\mathcal{I}_{SVM-cone},:)$ when the input is $V_{*,1}$ (note that $V_{*,1}(\mathcal{I}_{SVM-cone},:)$ is actually $V_{*,1}(\mathcal{I},:)$  by Remark \ref{IndexFix2}.); We also plot the hyperplane formed by the triangle of the 3 rows of $V_{*,1}(\mathcal{I},:)$. For visualization, we have projected and rotated these points from $\mathbb{R}^{3}$ to $\mathbb{R}^{2}$. Note that by Lemma \ref{IdealCone}, rows respective to pure nodes are same if these pure nodes are from the same cluster, hence rows refer to pure nodes coincide in $\mathbb{R}^{3}$ in this figure.}
\label{PlotVstar}
\end{figure}

\subsection{The algorithm: Mixed-RSC}
We now extend the ideal case to the real case. The following algorithm, which we call Mixed Regularized Spectral Clustering (Mixed-RSC for short) method, is a natural extension of the Ideal Mixed-RSC.
\begin{algorithm}
\caption{\textbf{Mixed-RSC}}
\label{alg:MRSC}
\begin{algorithmic}[1]
\Require The adjacency matrix $A\in \mathbb{R}^{n\times n}$, the number of communities $K$, and a regularizer $\tau\geq 0$.
\Ensure The estimated $n\times K$ membership matrix $\hat{\Pi}_{1}$.
\State Obtain the regularized Laplacian  matrix by
\begin{align*}
	L_{\tau}=D_{\tau}^{-1/2}AD_{\tau}^{-1/2},
\end{align*}
where $D_{\tau}=D+\tau I$, $D$ is an $n\times n$ diagonal matrix whose  $i$-th diagonal entry is $D(i,i)=\sum_{j=1}^{n}A(i,j)$ (unless specified, a good default $\tau$ is $\tau=0.1\mathrm{log}(n)$).
 \State Let $\hat{V}$ be the $n\times K$ matrix containing the leading $K$ eigenvectors $\{\hat{\eta}_{1}, \ldots,\hat{\eta}_{K}\}$ of $L_{\tau}$, let $\hat{E}$ be the $K\times K$ diagonal matrix whose diagonal entries are the leading $K$ eigenvalues $\{\hat{\lambda}_{1},\ldots, \hat{\lambda}_{K}\}$ of $L_{\tau}$ (i.e., $\hat{V}\hat{E}\hat{V}'$ is the leading $K$ eigen-decomposition of $L_{\tau}$). Normalize each row of $\hat{V}$  to have unit length, and denote by $\hat{V}_{*,1}$, i.e.,
$\hat{V}_{*,1}(i,j)=\hat{V}(i,j)/(\sum_{j=1}^{K}\hat{V}(i,j)^{2})^{1/2}, i = 1, \dots, n, j=1, \dots, K$.
\State Apply SVM-cone algorithm on the rows of $\hat{V}_{*,1}$ assuming there are $K$ clusters to obtain the near-corners matrix $\hat{V}_{*,1}(\mathcal{\hat{I}}_{1},:)$, where $\mathcal{\hat{I}}_{1}$ is the index set returned by SVM-cone.
\State Compute  $\hat{Y}_{\bullet,1}=\hat{V}\hat{V}'_{*,1}(\mathcal{\hat{I}}_{1},:)(\hat{V}_{*,1}(\mathcal{\hat{I}}_{1},:)\hat{V}'_{*,1}(\mathcal{\hat{I}}_{1},:))^{-1}$. Set $\hat{Y}_{\bullet,1}=\mathrm{max}(0, \hat{Y}_{\bullet,1})$. Compute $\hat{J}_{1}=\sqrt{\mathrm{diag}(\hat{V}_{*,1}(\mathcal{\hat{I}}_{1},:)\hat{E}\hat{V}'_{*,1}(\mathcal{\hat{I}}_{1},:))}$. Estimate $Z_{1}$ by $\hat{Z}_{1}=\hat{Y}_{\bullet,1}\hat{J}_{1}$.
\State Estimate $\Pi(i,:)$ by setting $\hat{\Pi}_{1}(i,:)=\hat{Z}_{1}(i,:)/\|\hat{Z}_{1}(i,:)\|_{1}, 1\leq i\leq n$.
\end{algorithmic}
\end{algorithm}
\begin{rem}
Mixed-RSC is a straightforward extension of Ideal Mixed-RSC except that we set $\hat{Y}_{\bullet,1}=\mathrm{max}(0,\hat{Y}_{\bullet,1})$ to transform negative entries of $\hat{Y}_{\bullet,1}$ into positive in the MR step due to the fact that $\hat{V}\hat{V}'_{*,1}(\mathcal{\hat{I}}_{1},:)(\hat{V}_{*,1}(\mathcal{\hat{I}}_{1},:)\hat{V}'_{*,1}(\mathcal{\hat{I}}_{1},:))^{-1}$ may contain a few negative entries in practice.
\end{rem}
\subsection{Equivalence algorithm}
As stated in Lemma G.1 in  \cite{MaoSVM}, one can apply SVM-cone on $\hat{V}_{*,2}$ ($\hat{V}_{*,2}$ is the row-normalization of $\hat{V}_{2}$, where $\hat{V}_{2}=\hat{V}\hat{V}'$.) instead of $\hat{V}_{*,1}$ with the same results, but it helps a lot on the theoretical analysis. Thus, we give an  ideal equivalence algorithm and an empirical equivalence algorithm  based on $V_{*,2}$, and then we show it returns the same outputs as Mixed-RSC.

By Lemma \ref{ExistB}, we know that $V=\tilde{\Theta}\Pi\tilde{\Theta}^{-1}(\mathcal{I},\mathcal{I})V(\mathcal{I},:)$.
Since $V_{2}(\mathcal{I},:)=V(\mathcal{I},:)V'$,
we have $V_{2}=VV'=\tilde{\Theta}\Pi\tilde{\Theta}^{-1}(\mathcal{I},\mathcal{I})V(\mathcal{I},:)V'=\tilde{\Theta}\Pi\tilde{\Theta}^{-1}(\mathcal{I},\mathcal{I})V_{2}(\mathcal{I},:)$.
The normalization can be written as $V_{2,*}=N_{V_{2}}V_{2}$, where $N_{V_{2}}$ be an $n \times n$ diagonal matrix with $N_{V_{2}}(i,i)=(\|V_{2}(i,:)\|_{F})^{-1}$. 
As $V_{2}\in\mathbb{R}^{n\times n}$, we have $V_{2,*}(\mathcal{I},:)\in\mathbb{R}^{K\times n}$. Similar as Lemma \ref{IdealCone}, we have the following lemma for $V_{*,2}$.

\begin{lem}\label{IdealCone2}
Under $DCMM(n, P, \Theta, \Pi)$, there exists a $Y_{2}\in\mathbb{R}^{n\times K}_{\geq 0}$ and no row of $Y_{2}$ is 0 such that
\begin{align*}
V_{*,2}=Y_{2}V_{*,2}(\mathcal{I},:).
\end{align*}
And $Y_{2}$ can be presented as $N_{M_{2}}\Pi \tilde{\Theta}^{-1}(\mathcal{I},\mathcal{I})N_{V_{2}}^{-1}(\mathcal{I},\mathcal{I})$ where $N_{M_{2}}$ is an $n\times n$ diagonal matrix whose diagonal entries are positive. For any two distinct nodes $i,j$, if $\Pi(i,:)=\Pi(j,:)$, we have $V_{*,2}(i,:)=V_{*,2}(j,:)$.
\end{lem}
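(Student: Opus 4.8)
The plan is to mirror the proof of Lemma \ref{IdealCone} essentially verbatim, substituting $V_{2}$, $N_{V_{2}}$ for $V$, $N_{V}$ throughout, and using the decomposition $V_{2}=\tilde{\Theta}\Pi\tilde{\Theta}^{-1}(\mathcal{I},\mathcal{I})V_{2}(\mathcal{I},:)$ established in the text just above the statement in place of $V=\tilde{\Theta}\Pi\tilde{\Theta}^{-1}(\mathcal{I},\mathcal{I})V(\mathcal{I},:)$. The only structural change is that the $K\times n$ combination matrix $\tilde{\Theta}^{-1}(\mathcal{I},\mathcal{I})V_{2}(\mathcal{I},:)$, which equals $BV'$, plays the role that $B$ played for $V_{*,1}$; everything else is formally identical.

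First I would record the row-wise form $V_{2}(i,:)=\tilde{\theta}(i)\Pi(i,:)BV'$, obtained by combining $V(i,:)=\tilde{\theta}(i)\Pi(i,:)B$ (the identity noted after Lemma \ref{ExistB}) with $V_{2}=VV'$. Reading off $\|V_{2}(i,:)\|_{F}=\tilde{\theta}(i)\,\|\Pi(i,:)BV'\|_{F}$, I would define the positive diagonal matrix $N_{M_{2}}$ by $N_{M_{2}}(i,i)=1/\|\Pi(i,:)BV'\|_{F}$ and verify the diagonal factorization $N_{M_{2}}=\tilde{\Theta}N_{V_{2}}$, equivalently $N_{M_{2}}\tilde{\Theta}^{-1}=N_{V_{2}}$ (the two matrices are diagonal and commute).

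Second, substituting $V_{*,2}(\mathcal{I},:)=N_{V_{2}}(\mathcal{I},\mathcal{I})V_{2}(\mathcal{I},:)$ into $Y_{2}V_{*,2}(\mathcal{I},:)$ with $Y_{2}=N_{M_{2}}\Pi\tilde{\Theta}^{-1}(\mathcal{I},\mathcal{I})N_{V_{2}}^{-1}(\mathcal{I},\mathcal{I})$, the factors $N_{V_{2}}^{-1}(\mathcal{I},\mathcal{I})N_{V_{2}}(\mathcal{I},\mathcal{I})$ cancel; applying $\Pi\tilde{\Theta}^{-1}(\mathcal{I},\mathcal{I})V_{2}(\mathcal{I},:)=\tilde{\Theta}^{-1}V_{2}$ (which is just the decomposition rearranged) reduces the product to $N_{M_{2}}\tilde{\Theta}^{-1}V_{2}=N_{V_{2}}V_{2}=V_{*,2}$, giving the claimed factorization. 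Nonnegativity $Y_{2}\in\mathbb{R}^{n\times K}_{\geq 0}$ and the absence of zero rows follow because $Y_{2}$ is the product of the entrywise-nonnegative $\Pi$ with three positive diagonal matrices and each row of $\Pi$ is a PMF summing to one. The final claim is immediate from the row-wise form: if $\Pi(i,:)=\Pi(j,:)$ then $V_{2}(i,:)$ and $V_{2}(j,:)$ are positive scalar multiples ($\tilde{\theta}(i)$ and $\tilde{\theta}(j)$, respectively) of the common vector $\Pi(i,:)BV'$, hence coincide after row normalization.

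The one step requiring genuine care is the strict positivity (well-definedness) of $N_{M_{2}}$, namely that $\Pi(i,:)BV'\neq 0$ for every node $i$; this is the point at which $N_{M_{2}}$ and $Y_{2}$ could in principle degenerate, so I expect it to be the main obstacle. I would settle it by noting that $B$ is full rank by Lemma \ref{ExistB} and $V$ has orthonormal columns, so $BV'$ is a $K\times n$ matrix of full row rank $K$ with trivial left null space; since $\Pi(i,:)\neq 0$ for each $i$, the product $\Pi(i,:)BV'$ cannot vanish. The same rank fact also guarantees that $N_{V_{2}}$ is well defined (no row of $V_{2}$ is zero), which is implicitly needed for the normalization $V_{*,2}=N_{V_{2}}V_{2}$ to make sense.
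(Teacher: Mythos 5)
Your proposal is correct and follows essentially the same route as the paper: the paper's proof sets $M_{2}=\Pi\tilde{\Theta}^{-1}(\mathcal{I},\mathcal{I})V_{2}(\mathcal{I},:)$ (which is exactly your $\Pi BV'$), writes $V_{2}=\tilde{\Theta}M_{2}$, and then repeats the row-normalization argument of Lemma \ref{IdealCone} to obtain $Y_{2}=N_{M_{2}}\Pi\tilde{\Theta}^{-1}(\mathcal{I},\mathcal{I})N_{V_{2}}^{-1}(\mathcal{I},\mathcal{I})$, nonnegativity, and the equal-rows claim. Your only addition is an explicit justification that $\|M_{2}(i,:)\|_{F}\neq 0$ (via the full row rank of $BV'$), a point the paper leaves implicit; this strengthens rather than changes the argument.
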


Again, the form of $V_{*,2}=Y_{2}V_{*,2}(\mathcal{I},:)$ is the Ideal Cone. Since $\mathrm{rank}(V_{*,2})=K$ and $V_{*,2}\in\mathbb{R}^{K\times n}$, we have $\mathrm{rank}(V_{*,2}(\mathcal{I},:))=K$, suggesting that $V_{*,2}(\mathcal{I},:)$ is not invertible but the inverse of $V_{*,2}(\mathcal{I},:)V'_{*,2}(\mathcal{I},:)$ exists. From Lemma \ref{IdealCone2}, we have
\begin{align*}
Y_{2}=V_{*,2}V'_{*,2}(\mathcal{I},:)(V_{*,2}(\mathcal{I},:)V'_{*,2}(\mathcal{I},:))^{-1}.
\end{align*}
Since $V_{*,2}=N_{V_{2}}V_{2}$  and $ Y_{2}=N_{M_{2}}\Pi\tilde{\Theta}^{-1}(\mathcal{I},\mathcal{I})N_{V_{2}}^{-1}(\mathcal{I},\mathcal{I})$, we have $
N_{V_{2}}^{-1}N_{M_{2}}\Pi\tilde{\Theta}^{-1}(\mathcal{I},\mathcal{I})N_{V_{2}}^{-1}(\mathcal{I},\mathcal{I})=V_{2}V'_{*,2}(\mathcal{I},:)(V_{*,2}(\mathcal{I},:)V'_{*,2}(\mathcal{I},:))^{-1}$, then
\begin{align}\label{ZYJ2}
N_{V_{2}}^{-1}N_{M_{2}}\Pi=V_{2}V'_{*,2}(\mathcal{I},:)(V_{*,2}(\mathcal{I},:)V'_{*,2}(\mathcal{I},:))^{-1}N_{V_{2}}(\mathcal{I},\mathcal{I})\tilde{\Theta}(\mathcal{I},\mathcal{I}).
\end{align}
Set $J_{2}=N_{V_{2}}(\mathcal{I},\mathcal{I})\tilde{\Theta}(\mathcal{I},\mathcal{I})$. By Eq (\ref{ThetaTildeIIExpression}), we have $J_{2}=\sqrt{\mathrm{diag}(N_{V_{2}}(\mathcal{I},\mathcal{I})V(\mathcal{I},:))EV'(\mathcal{I},:)N_{V_{2}}(\mathcal{I},\mathcal{I})}$. Set $Z_{2}=N_{V_{2}}^{-1}N_{M_{2}}\Pi$ and $ Y_{\bullet,2}=V_{2}V'_{*,2}(\mathcal{I},:)(V_{*,2}(\mathcal{I},:)V'_{*,2}(\mathcal{I},:))^{-1}$. By Eq (\ref{ZYJ2}), we have
\begin{align}\label{Z2}
Z_{2}=Y_{\bullet,2}J_{2}\equiv V_{2}V'_{*,2}(\mathcal{I},:)(V_{*,2}(\mathcal{I},:)V'_{*,2}(\mathcal{I},:))^{-1}\sqrt{\mathrm{diag}(N_{V_{2}}(\mathcal{I},\mathcal{I})V(\mathcal{I},:))EV'(\mathcal{I},:)N_{V_{2}}(\mathcal{I},\mathcal{I})}.
\end{align}
Since $N_{V_{2}}^{-1}N_{M_{2}}$ is an $n\times n$ diagonal matrix, we have $$\Pi(i,:)=\frac{Z_{2}(i,:)}{\|Z_{2}(i,:)\|_{1}}, ~~~1\leq i\leq n.$$ 
By applying the SVM-cone algorithm on $V_{*,2}$, we can exactly obtain the corner matrix $V_{*,2}(\mathcal{I},:)$. The  Ideal Mixed-RSC(equivalence) can be presented as following.
\begin{itemize}
	\item Input: $\Omega, K$. Output: $\Pi$.
  \item  Obtain $\mathscr{L}_{\tau}$, $V_{2}$, and $V_{*,2}$.
  \item  Obtain the corner matrix $V_{*,2}(\mathcal{I},:)$ by applying SVM-cone algorithm on $V_{*,2}$ and $K$.
  \item Compute $Y_{\bullet,2}$, $J_{2}$ and $Z_{2}=Y_{\bullet,2}J_{2}$.
          \item Recover $\Pi(i,:)$ by setting
              $\Pi(i,:)=\frac{Z_{2}(i,:)}{\|Z_{2}(i,:)\|_{1}}$ for $1\leq i\leq n$.
\end{itemize}

Then the  empirical Mixed-RSC(equivalence) algorithm can be presented as:
\begin{algorithm}
\caption{\textbf{Mixed-RSC(equivalence)}}
\label{alg:MRSCequivalence}
\begin{algorithmic}[1]
\Require $A\in \mathbb{R}^{n\times n}$, $K$, and $\tau\geq 0$.
\Ensure The estimated $n\times K$ membership matrix $\hat{\Pi}_{2}$.
\State  Obtain $L_{\tau}, \hat{V},\hat{E}$ as in Algorithm \ref{alg:MRSC}. Let $\hat{V}_{2}=\hat{V}\hat{V}'$, then obtain $\hat{V}_{*,2}$ and the diagonal matrix $N_{\hat{V}_{2}}$ such that $\hat{V}_{*,2}=N_{\hat{V}_{2}}\hat{V}_{2}$.
\State Assuming there are $K$ clusters, apply SVM-cone algorithm on the rows of $\hat{V}_{*,2}$ to obtain the near-corners matrix $\hat{V}_{*,2}(\mathcal{\hat{I}}_{2},:)$, where $\mathcal{\hat{I}}_{2}$ is the index set returned by SVM-cone.
\State Compute  $\hat{Y}_{\bullet,2}=\hat{V}_{2}\hat{V}'_{*,2}(\mathcal{\hat{I}}_{2},:)(\hat{V}_{*,2}(\mathcal{\hat{I}}_{2},:)\hat{V}'_{*,2}(\mathcal{\hat{I}}_{2},:))^{-1}$. Set $\hat{Y}_{\bullet,2}=\mathrm{max}(0, \hat{Y}_{\bullet,2})$. Compute $\hat{J}_{2}=\sqrt{\mathrm{diag}(N_{\hat{V}_{2}}(\mathcal{\hat{I}}_{2},\mathcal{\hat{I}}_{2})\hat{V}(\mathcal{\hat{I}}_{2},:))\hat{E}\hat{V}'(\mathcal{\hat{I}}_{2},:)N_{\hat{V}_{2}}(\mathcal{\hat{I}}_{2},\mathcal{\hat{I}}_{2})}$.  Estimate $Z_{2}$ by  $\hat{Z}_{2}=\hat{Y}_{\bullet,2}\hat{J}_{2}$.
\State  Estimate $\Pi(i,:)$ by setting $\hat{\Pi}_{2}(i,:)=\hat{Z}_{2}(i,:)/\|\hat{Z}_{2}(i,:)\|_{1}, 1\leq i\leq n$.
\end{algorithmic}
\end{algorithm}

\subsection{The Equivalence}
We now emphasize the equivalence of  Algorithm \ref{alg:MRSC} and Algorithm \ref{alg:MRSCequivalence} from the ideal case to the empirical case by Lemma \ref{Equivalence}.
\begin{lem}\label{Equivalence}
For the ideal case, under $DCMM(n, P, \Theta, \Pi)$, we have $
N_{V_{2}}\equiv N_{V}, V_{*,2}(\mathcal{I},:)V'_{*,2}(\mathcal{I},:)\equiv V_{*,1}(\mathcal{I},:)V'_{*,1}(\mathcal{I},:), N_{M_{2}}\equiv N_{M_{1}}, Y_{2}\equiv Y_{1},Y_{\bullet,2}\equiv Y_{\bullet,1}, J_{2}\equiv J_{1}, Z_{2}\equiv Z_{1}$. For the empirical case, we have
$\mathcal{\hat{I}}_{2}\equiv \mathcal{\hat{I}}_{1}, \hat{V}_{*,2}(\hat{\mathcal{I}}_{2},:)\hat{V}'_{*,2}(\hat{\mathcal{I}}_{2},:)\equiv \hat{V}_{*,1}(\hat{\mathcal{I}}_{1},:)\hat{V}'_{*,1}(\hat{\mathcal{I}}_{1},:), \hat{Y}_{1}\equiv\hat{Y}_{2}, \hat{Y}_{\bullet,2}\equiv\hat{Y}_{\bullet,1}, \hat{J}_{2}\equiv\hat{J}_{1}, \hat{Z}_{2}\equiv\hat{Z}_{1}, \hat{\Pi}_{2}\equiv\hat{\Pi}_{1}$.
\end{lem}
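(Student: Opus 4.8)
The plan is to reduce everything to a single structural fact: the eigenvector matrix has orthonormal columns, i.e. $V'V=I_{K}$ in the ideal case and $\hat{V}'\hat{V}=I_{K}$ in the empirical case, so that $V_{2}=VV'$ (resp. $\hat{V}_{2}=\hat{V}\hat{V}'$) acts as an orthogonal projector. Every claimed equivalence then follows by direct substitution, the only genuinely non-algebraic point being the equality of the index sets returned by SVM-cone.

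First I would establish $N_{V_{2}}\equiv N_{V}$. The $i$-th row of $V_{2}$ is $V(i,:)V'$, and since $V'V=I_{K}$ we have $\|V(i,:)V'\|_{F}^{2}=V(i,:)V'VV(i,:)'=\|V(i,:)\|_{F}^{2}$; hence the row norms, and therefore the normalizing diagonals, coincide. The identical computation with hats gives $N_{\hat{V}_{2}}\equiv N_{\hat{V}}$. Next, writing $V_{*,2}(\mathcal{I},:)=N_{V}(\mathcal{I},\mathcal{I})V(\mathcal{I},:)V'$ and cancelling $V'V=I_{K}$, I would compute $V_{*,2}(\mathcal{I},:)V'_{*,2}(\mathcal{I},:)=N_{V}(\mathcal{I},\mathcal{I})V(\mathcal{I},:)V(\mathcal{I},:)'N_{V}(\mathcal{I},\mathcal{I})=V_{*,1}(\mathcal{I},:)V'_{*,1}(\mathcal{I},:)$, and likewise $V_{2}V'_{*,2}(\mathcal{I},:)=VV(\mathcal{I},:)'N_{V}(\mathcal{I},\mathcal{I})=VV'_{*,1}(\mathcal{I},:)$ and $V_{*,2}V'_{*,2}(\mathcal{I},:)=V_{*,1}V'_{*,1}(\mathcal{I},:)$.

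Plugging these three identities into the closed-form expressions for $Y_{2}$ and $Y_{\bullet,2}$ yields $Y_{2}\equiv Y_{1}$ and $Y_{\bullet,2}\equiv Y_{\bullet,1}$. Comparing $Y_{2}\equiv Y_{1}$ against the two factorizations from Lemmas \ref{IdealCone} and \ref{IdealCone2}, whose common right factor $\tilde{\Theta}^{-1}(\mathcal{I},\mathcal{I})N_{V_{2}}^{-1}(\mathcal{I},\mathcal{I})=\tilde{\Theta}^{-1}(\mathcal{I},\mathcal{I})N_{V}^{-1}(\mathcal{I},\mathcal{I})$ is invertible (using $N_{V_{2}}\equiv N_{V}$), gives $N_{M_{2}}\Pi=N_{M_{1}}\Pi$; since $\Pi$ has no zero row and both $N_{M_{1}},N_{M_{2}}$ are positive diagonal, this forces $N_{M_{2}}\equiv N_{M_{1}}$ entrywise. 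Because $J_{2}=N_{V_{2}}(\mathcal{I},\mathcal{I})\tilde{\Theta}(\mathcal{I},\mathcal{I})=N_{V}(\mathcal{I},\mathcal{I})\tilde{\Theta}(\mathcal{I},\mathcal{I})=J_{1}$, we obtain $Z_{2}=Y_{\bullet,2}J_{2}\equiv Y_{\bullet,1}J_{1}=Z_{1}$, completing the ideal case.

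For the empirical case the projector identities hold verbatim with hats, so the remaining task is $\hat{\mathcal{I}}_{2}\equiv\hat{\mathcal{I}}_{1}$. The key observation is that the unit-norm rows of $\hat{V}_{*,2}$ and $\hat{V}_{*,1}$ share the same Gram matrix: since $\hat{V}'\hat{V}=I_{K}$, the $(i,j)$ inner product of the normalized rows of $\hat{V}_{2}$ equals $\hat{V}(i,:)\hat{V}(j,:)'/(\|\hat{V}(i,:)\|_{F}\|\hat{V}(j,:)\|_{F})$, which is exactly the $(i,j)$ inner product for $\hat{V}_{*,1}$. As SVM-cone depends on its input only through these pairwise inner products — this is Lemma G.1 of \cite{MaoSVM} — the two runs return the identical corner set, so $\hat{\mathcal{I}}_{2}\equiv\hat{\mathcal{I}}_{1}=:\hat{\mathcal{I}}$. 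Once this index set is fixed, restricting the projector identities to the rows indexed by $\hat{\mathcal{I}}$ reproduces the ideal-case chain: the corner Gram matrices agree, whence $\hat{Y}_{1}\equiv\hat{Y}_{2}$, $\hat{Y}_{\bullet,2}\equiv\hat{Y}_{\bullet,1}$, $\hat{J}_{2}\equiv\hat{J}_{1}$ (using $N_{\hat{V}_{2}}\equiv N_{\hat{V}}$ and $\hat{V}_{*,1}(\hat{\mathcal{I}},:)=N_{\hat{V}}(\hat{\mathcal{I}},\hat{\mathcal{I}})\hat{V}(\hat{\mathcal{I}},:)$), and finally $\hat{Z}_{2}\equiv\hat{Z}_{1}$, $\hat{\Pi}_{2}\equiv\hat{\Pi}_{1}$. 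The main obstacle is precisely this SVM-cone invariance: all the rest is mechanical substitution driven by $V'V=I_{K}$, whereas $\hat{\mathcal{I}}_{2}\equiv\hat{\mathcal{I}}_{1}$ requires appealing to the internal structure of the SVM-cone algorithm, which is why I would isolate it as the one step resting on an external citation rather than self-contained algebra.
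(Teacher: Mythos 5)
Your proposal is correct and follows essentially the same route as the paper: both rest on the orthonormality $V'V=I_K$ (equivalently Lemma A.1 of \cite{yu2015a}) to get $N_{V_{2}}\equiv N_{V}$ and the identity $V_{*,2}=V_{*,1}V'$, then push these through the closed-form expressions for $Y$, $Y_{\bullet}$, $J$, $Z$, and both invoke Lemma G.1 of \cite{MaoSVM} for $\hat{\mathcal{I}}_{1}\equiv\hat{\mathcal{I}}_{2}$. The only cosmetic difference is your derivation of $N_{M_{2}}\equiv N_{M_{1}}$ by cancelling the invertible right factor in the two factorizations of $Y_{1}\equiv Y_{2}$, where the paper instead computes the row norms of $M_{2}=M_{1}V'$ directly; both are valid and equally short.
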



From now on, for notation convenience, set $N\equiv N_{V}, N_{M}\equiv N_{M_{1}}, Y\equiv Y_{1},Y_{\bullet}\equiv Y_{\bullet,1}, J\equiv J_{1}, Z\equiv Z_{1}$, and $\hat{N}\equiv N_{\hat{V}_{2}}, \mathcal{\hat{I}}\equiv\mathcal{\hat{I}}_{1},
\hat{Y}\equiv\hat{Y}_{1},
\hat{Y}_{\bullet}\equiv\hat{Y}_{\bullet,1}, \hat{J}\equiv\hat{J}_{1}, \hat{Z}\equiv\hat{Z}_{1}, \hat{\Pi}\equiv\hat{\Pi}_{1}$.

\section{Theoretical analysis}\label{sec4}

In this section, we establish the performance guarantee for Mixed-RSC. First, we make the following assumption
\begin{itemize}
	\item [(A1)] For two positive numbers $\alpha$ and $\beta$, $\frac{\mathrm{log}(n^{\alpha}K^{-\beta})}{\theta_{\mathrm{max}}\|\theta\|_{1}}\rightarrow 0,\qquad \mathrm{as~}n\rightarrow \infty$.
\end{itemize}
Assumption (A1) means the network can not be too sparse when $n$ is large. Note that since $O(\mathrm{log}(n^{\alpha}K^{-\beta}))=O(\mathrm{log}(n))$, assumption (A1) also reads $\frac{\mathrm{log}(n)}{\theta_{\mathrm{max}}\|\theta\|_{1}}\rightarrow 0$ as $n\rightarrow\infty$. We consider the two positive numbers $\alpha$ and $\beta$ here mainly for the convenience of theoretical analysis.

For simplification, set $\delta_{\mathrm{min}}=\underset{1\leq i\leq n}{\mathrm{min}}\mathscr{D}(i,i), $ $\delta_{\mathrm{max}}=\underset{1\leq i\leq n}{\mathrm{max}}\mathscr{D}(i,i),$ $ \theta_{\mathrm{min}}=\mathrm{min}_{1\leq i\leq n}\theta(i), $ $\theta_{\mathrm{max}}=\mathrm{max}_{1\leq i\leq n}\theta(i),$ $ \tilde{\theta}_{\mathrm{min}}=\mathrm{min}_{1\leq i\leq n}\tilde{\theta}(i),$ $ \tilde{\theta}_{\mathrm{max}}=\mathrm{max}_{1\leq i\leq n}\tilde{\theta}(i)$.
\begin{lem}\label{boundL}
	Under $DCMM(n, P, \Theta, \Pi)$, if assumption (A1) holds, with probability at least $1-o(\frac{K^{4\beta}}{n^{4\alpha-1}})$, we have
	\begin{align*}
	\|L_{\tau}-\mathscr{L}_{\tau}\|=\begin{cases}
	O(\frac{\sqrt{\theta_{\mathrm{max}}\|\theta\|_{1}\mathrm{log}(n^{\alpha}K^{-\beta})}}{\tau+\delta_{\mathrm{min}}}), & \mbox{when } C\sqrt{\theta_{\mathrm{max}}\|\theta\|_{1}\mathrm{log}(n^{\alpha}K^{-\beta})}\leq \tau+\delta_{\mathrm{min}}\leq C\theta_{\mathrm{max}}\|\theta\|_{1}, \\
	O(\frac{\theta_{\mathrm{max}}\|\theta\|_{1}\mathrm{log}(n^{\alpha}K^{-\beta})}{(\tau+\delta_{\mathrm{min}})^{2}}), & \mbox{when~} \tau+\delta_{\mathrm{min}}<C\sqrt{\theta_{\mathrm{max}}\|\theta\|_{1}\mathrm{log}(n^{\alpha}K^{-\beta})}.
	\end{cases},
	\end{align*}
\end{lem}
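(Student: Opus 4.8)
The plan is to separate the two sources of randomness in $L_\tau-\mathscr{L}_\tau$: the sampling noise $A-\Omega$ and the fluctuation of the empirical regularizer $D_\tau$ around its population value $\mathscr{D}_\tau$. Since $D_\tau$ and $\mathscr{D}_\tau$ are diagonal and commute with $\mathscr{D}_\tau^{-1/2}$, I would introduce the diagonal matrix $S=\mathscr{D}_\tau^{1/2}D_\tau^{-1/2}$ and write
\begin{align*}
L_\tau-\mathscr{L}_\tau = S\,\mathscr{D}_\tau^{-1/2}(A-\Omega)\mathscr{D}_\tau^{-1/2}\,S \;+\; \big(S\mathscr{L}_\tau S-\mathscr{L}_\tau\big),
\end{align*}
so that the first summand carries the noise with the normalization frozen at its population value, and the second carries only the degree fluctuation through $S-I$. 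Two elementary facts would be used throughout: first $\|\mathscr{L}_\tau\|\le 1$, because $\mathscr{D}^{-1}\Omega$ has unit row sums so the symmetric matrix $\mathscr{D}^{-1/2}\Omega\mathscr{D}^{-1/2}$ has eigenvalues in $[-1,1]$ and regularization only shrinks them; and second $\delta_{\max}\le\theta_{\max}\|\theta\|_1$, since $\mathscr{D}(i,i)=\theta(i)\sum_j\theta(j)\,\pi_iP\pi_j'\le\theta(i)\|\theta\|_1$ as $\pi_iP\pi_j'\le 1$. For brevity I write $\rho:=\frac{\sqrt{\theta_{\max}\|\theta\|_1\,\mathrm{log}(n^\alpha K^{-\beta})}}{\tau+\delta_{\min}}$, so that the two asserted bounds are exactly $O(\rho)$ and $O(\rho^2)$.

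For the noise factor $W:=\mathscr{D}_\tau^{-1/2}(A-\Omega)\mathscr{D}_\tau^{-1/2}$ I would invoke a matrix Bernstein inequality, expressing $W$ as a sum over $i\le j$ of independent, mean-zero, symmetric matrices whose entries are scaled by $1/\sqrt{(\tau+\mathscr{D}(i,i))(\tau+\mathscr{D}(j,j))}$. The uniform bound on a summand is $b=\max_{ij}\frac{|A(i,j)-\Omega(i,j)|}{\sqrt{(\tau+\mathscr{D}(i,i))(\tau+\mathscr{D}(j,j))}}\le\frac{1}{\tau+\delta_{\min}}$, and the variance proxy is $v=\|\sum\E[\cdot^2]\|\le\max_i\frac{\sum_j\Omega(i,j)}{(\tau+\delta_{\min})^2}=\frac{\delta_{\max}}{(\tau+\delta_{\min})^2}\le\frac{\theta_{\max}\|\theta\|_1}{(\tau+\delta_{\min})^2}$. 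Choosing the deviation level so that the Bernstein tail (with its dimension prefactor) equals $o(K^{4\beta}/n^{4\alpha-1})$ forces the exponent to be of order $\mathrm{log}(n^\alpha K^{-\beta})$; under (A1) one has $v\gg b^2\,\mathrm{log}(n^\alpha K^{-\beta})$, so the variance term dominates and $\|W\|\lesssim\sqrt{v\,\mathrm{log}(n^\alpha K^{-\beta})}=\rho$ on this event.

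The degree factor $S-I$ would be handled by a scalar Bernstein bound for each $D(i,i)=\sum_jA(i,j)$ together with a union bound over the $n$ nodes, giving $\max_i|D(i,i)-\mathscr{D}(i,i)|\lesssim\sqrt{\delta_{\max}\,\mathrm{log}(n^\alpha K^{-\beta})}$ on the same high-probability event; dividing by $\tau+\mathscr{D}(i,i)\ge\tau+\delta_{\min}$ and passing from $r_i=D_\tau(i,i)/\mathscr{D}_\tau(i,i)$ to $S(i,i)=r_i^{-1/2}$ yields $\|S-I\|\lesssim\rho$ and $\|S\|\le 1+\|S-I\|\lesssim 1+\rho$. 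Substituting these into the displayed identity, together with $\|\mathscr{L}_\tau\|\le 1$, the second summand $S\mathscr{L}_\tau S-\mathscr{L}_\tau=(S-I)\mathscr{L}_\tau S+\mathscr{L}_\tau(S-I)$ is bounded by $\lesssim\rho(1+\rho)=\rho+\rho^2$, and the normalized-noise summand $SWS$ is bounded by $\|W\|\|S\|^2\lesssim\rho(1+\rho)^2$. The two cases of the statement then record which term dominates: in the regime $\tau+\delta_{\min}\ge C\sqrt{\theta_{\max}\|\theta\|_1\,\mathrm{log}(n^\alpha K^{-\beta})}$ one has $\rho\lesssim 1$ (the degrees concentrate relative to the regularizer), the linear term wins, and one gets the first bound $O(\rho)$; in the sparser regime $\tau+\delta_{\min}<C\sqrt{\theta_{\max}\|\theta\|_1\,\mathrm{log}(n^\alpha K^{-\beta})}$ one has $\rho\gtrsim 1$ and the quadratic term $\rho^2=\frac{\theta_{\max}\|\theta\|_1\,\mathrm{log}(n^\alpha K^{-\beta})}{(\tau+\delta_{\min})^2}$ wins. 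The upper constraint $\tau+\delta_{\min}\le C\theta_{\max}\|\theta\|_1$ in the first case is the technical condition ensuring $\delta_{\max}$, and hence the variance proxy $v$, is genuinely of the order $\theta_{\max}\|\theta\|_1/(\tau+\delta_{\min})^2$ needed for the stated rate.

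The main obstacle I anticipate is the interaction term $SWS$ (equivalently the cross term between the sampling noise $W$ and the diagonal perturbation $S-I$): a purely sub-multiplicative split gives $\rho(1+\rho)^2$, which exceeds the claimed $\rho^2$ once $\rho\gg1$, so the argument must exploit that $S$ is diagonal, that $\|W\|$ and $\|S-I\|$ are controlled on a \emph{single} event, and that near the threshold $\rho\asymp1$ the excess is only a lower-order effect in an already-coarse regime; keeping this cross term at quadratic order is the delicate point. The remaining care is bookkeeping the probabilities: the matrix-Bernstein failure probability and the $n$-fold union bound for the degrees must be merged, and the exponents $4\alpha,4\beta$ tracked, so that the final event has probability $1-o(K^{4\beta}/n^{4\alpha-1})$. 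An alternative, and perhaps cleaner, route would be to adapt an existing concentration theorem for the regularized Laplacian of the type developed for regularized spectral clustering under degree-corrected models, whose two-regime form already matches the statement, and simply substitute $\delta_{\max}\asymp\theta_{\max}\|\theta\|_1$ and the tail level $\mathrm{log}(n^\alpha K^{-\beta})$ into its hypotheses.
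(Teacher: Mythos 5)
Your first step (matrix Bernstein applied to the noise with the normalization frozen at $\mathscr{D}_\tau$) is sound and is essentially what the paper does, yielding $\|\mathscr{D}_\tau^{-1/2}(A-\Omega)\mathscr{D}_\tau^{-1/2}\|=O(\rho)$ on an event of probability $1-o(K^{4\beta}/n^{4\alpha-1})$. The genuine gap is in your treatment of the degree fluctuation, and it is precisely the obstacle you flag at the end but do not resolve. Your factor $S=\mathscr{D}_\tau^{1/2}D_\tau^{-1/2}$ has diagonal entries $r_i^{-1/2}$ with $r_i=(D(i,i)+\tau)/(\mathscr{D}(i,i)+\tau)$, so controlling $\|S\|$ and $\|S-I\|$ requires a \emph{lower} bound on the empirical degrees, and none is available in the second regime. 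The degree concentration gives $\max_i|r_i-1|\lesssim\rho$, and passing from this to $|r_i^{-1/2}-1|\lesssim\rho$ uses $|r_i^{-1/2}-1|=|1-\sqrt{r_i}|/\sqrt{r_i}$, which is legitimate only when $r_i$ is bounded away from $0$, i.e.\ when $\rho$ is small --- exactly the first regime, where you do not need it. In the regime $\tau+\delta_{\mathrm{min}}<C\sqrt{\theta_{\mathrm{max}}\|\theta\|_{1}\mathrm{log}(n^{\alpha}K^{-\beta})}$ the bound $r_i\geq 1-C\rho$ is vacuous, $r_i$ can be arbitrarily close to $0$, and $S$ is even infinite on the event $\{D(i,i)=0\}$ when $\tau=0$, an event of non-negligible probability for nodes with $\mathscr{D}(i,i)=o(\mathrm{log}(n))$. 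So the claim $\|S-I\|\lesssim\rho$, $\|S\|\lesssim 1+\rho$ is unjustified (indeed false in general) exactly where the lemma's second case lives. And even granting it, your own accounting gives $\|SWS\|\lesssim\rho(1+\rho)^{2}\sim\rho^{3}$ there, which overshoots the claimed $O(\rho^{2})$; within your decomposition there is no way to keep the cross term at quadratic order, because the noise genuinely gets multiplied by two uncontrolled diagonal factors.

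The paper avoids both problems by never multiplying the noise by a degree ratio and by using the ratio in the opposite (square-root) direction. It splits
\begin{align*}
\|L_{\tau}-\mathscr{L}_{\tau}\|\leq\|\mathscr{D}_{\tau}^{-1/2}(A-\Omega)\mathscr{D}_{\tau}^{-1/2}\|+\|D_{\tau}^{-1/2}AD_{\tau}^{-1/2}-\mathscr{D}_{\tau}^{-1/2}A\mathscr{D}_{\tau}^{-1/2}\|,
\end{align*}
and rewrites the second term, with $T=D_{\tau}^{1/2}\mathscr{D}_{\tau}^{-1/2}$ (so $T(i,i)=\sqrt{r_i}$), as $L_{\tau}-TL_{\tau}T=(I-T)L_{\tau}T+L_{\tau}(I-T)$, whence it is at most $2\|I-T\|+\|I-T\|^{2}$ by the \emph{deterministic} bound $\|L_{\tau}\|\leq 1$. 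Here $\|I-T\|=\max_i|\sqrt{r_i}-1|\leq\max_i|r_i-1|\lesssim\rho$ requires no lower bound on $r_i$ and no inversion, so the estimate holds in both regimes, and the total $3\rho+\rho^{2}$ reproduces exactly the two cases of the statement (the constraint $\tau+\delta_{\mathrm{min}}\leq C\theta_{\mathrm{max}}\|\theta\|_{1}$ enters the scalar degree-concentration exponent, not the variance proxy). If you want to keep your identity $L_{\tau}-\mathscr{L}_{\tau}=SWS+(S\mathscr{L}_{\tau}S-\mathscr{L}_{\tau})$, you must re-anchor the degree-fluctuation term on the empirical Laplacian as above; as written, your proposal proves the first case of the lemma but not the second.
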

In order to directly study the influence of parameters on the proposed method, the theoretical error bound given in this paper is directly related with the model parameters $(n,P,\Theta,\Pi)$ and $K$. For convenience, denote $err_{n}=\|L_{\tau}-\mathscr{L}_{\tau}\|$. Note that when $\alpha=1, \beta=0$, we have a general probability $1-o(n^{-3})$, and some authors use this probability for their theoretical analysis  \citep{SCORE, mixedSCORE}.

For Mixed-RSC, the main theoretical result (i.e., Theorem \ref{Main}) relies on the row-wise deviation bound for the eigenvector of the regularized Laplacian matrix.  In fact, \cite{mixedSCORE, MaoSVM, mao2020estimating} also hinge on a row-wise deviation bound but they are on the eigenvectors of the adjacency matrix. Next lemma provides the row-wise deviation bound for the eigenvectors of the regularized Laplacian matrix under $DCMM(n, P, \Theta, \Pi)$.
\begin{lem}\label{rowwiseerror}
	(Row-wise eigenvector error) Under $DCMM(n, P,\Theta,\Pi)$, suppose assumption (A1) holds. Assume $|\lambda_{K}|\geq C\frac{\theta_{\mathrm{max}}\sqrt{n\mathrm{log}(n)}}{\tau+\delta_{\mathrm{min}}}$, with probability at least $1-o(\frac{K^{4\beta}}{n^{4\alpha-1}})$, we have
	\begin{align*}	\|\hat{V}\hat{V}'-VV'\|_{2\rightarrow\infty}=O(\frac{(\tau+\delta_{\mathrm{max}})\theta_{\mathrm{max}}\sqrt{K\mathrm{log}(n)}}{(\tau+\delta_{\mathrm{min}})\theta^{2}_{\mathrm{min}}|\lambda_{K}(P)|\lambda_{K}(\Pi'\Pi)}).
	\end{align*}
\end{lem}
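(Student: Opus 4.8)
Analysis of Lemma 6.6 (Row-wise eigenvector error)The plan is to treat $\hat{V}\hat{V}'-VV'$ as an eigenspace perturbation and control it directly in the two-to-infinity norm. Write $W=L_{\tau}-\mathscr{L}_{\tau}$ for the stochastic perturbation. By Lemma~\ref{ExistB}, $\mathscr{L}_{\tau}=VEV'$ has exactly $K$ nonzero eigenvalues, which coincide with those of $F=P\Pi'\tilde{\Theta}^{2}\Pi$; hence the gap separating the signal subspace $\mathrm{span}(V)$ from its orthogonal complement is $|\lambda_{K}|$, the $(K+1)$-th eigenvalue of $\mathscr{L}_{\tau}$ being $0$. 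The hypothesis $|\lambda_{K}|\ge C\theta_{\mathrm{max}}\sqrt{n\log n}/(\tau+\delta_{\mathrm{min}})$ is precisely a ``signal dominates noise'' condition: since $\|\theta\|_{1}\le n\theta_{\mathrm{max}}$, Lemma~\ref{boundL} gives $err_{n}=\|W\|\lesssim \theta_{\mathrm{max}}\sqrt{n\log n}/(\tau+\delta_{\mathrm{min}})$, so the assumption guarantees $\|W\|\le|\lambda_{K}|/2$ and classical Weyl/Davis--Kahan reasoning applies. This lets me invoke a row-wise eigenspace perturbation bound for symmetric low-rank matrices, which expresses $\|\hat{V}\hat{V}'-VV'\|_{2\rightarrow\infty}$ through the spectral perturbation $\|W\|$, the incoherence $\|V\|_{2\rightarrow\infty}$, a row-wise fluctuation term, and the gap $|\lambda_{K}|$ in the denominator.

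I would then bound the three model-dependent ingredients. For the gap, because the eigenvalues of $\mathscr{L}_{\tau}$ equal those of $F=P(\Pi'\tilde{\Theta}^{2}\Pi)$, a product/Ostrowski-type eigenvalue estimate gives $|\lambda_{K}|=|\lambda_{K}(F)|\ge|\lambda_{K}(P)|\,\lambda_{K}(\Pi'\tilde{\Theta}^{2}\Pi)\ge|\lambda_{K}(P)|\,\tilde{\theta}_{\mathrm{min}}^{2}\,\lambda_{K}(\Pi'\Pi)$; converting the regularized degrees via $\tilde{\theta}_{\mathrm{min}}^{2}\ge\theta_{\mathrm{min}}^{2}/(\tau+\delta_{\mathrm{max}})$ then yields $1/|\lambda_{K}|\le(\tau+\delta_{\mathrm{max}})/(|\lambda_{K}(P)|\theta_{\mathrm{min}}^{2}\lambda_{K}(\Pi'\Pi))$, which supplies the whole denominator of the target together with the $(\tau+\delta_{\mathrm{max}})$ factor upstairs. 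For the incoherence, Lemma~\ref{ExistB} gives $V(i,:)=\tilde{\theta}(i)\Pi(i,:)B$, so $\|V\|_{2\rightarrow\infty}\le\tilde{\theta}_{\mathrm{max}}\|B\|_{2\rightarrow\infty}\lesssim\theta_{\mathrm{max}}/\sqrt{\tau+\delta_{\mathrm{min}}}$ after using $\tilde{\theta}_{\mathrm{max}}\le\theta_{\mathrm{max}}/\sqrt{\tau+\delta_{\mathrm{min}}}$ and a uniform control of $\|B\|_{2\rightarrow\infty}$. The remaining factor $\theta_{\mathrm{max}}\sqrt{K\log n}/(\tau+\delta_{\mathrm{min}})$ is exactly what the row-wise fluctuation term must produce.

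The main obstacle is precisely this row-wise fluctuation estimate: controlling, uniformly over rows $i$, a quantity such as $\|e_{i}'WV\|_{2}$ (the $i$-th row of $WV$). The difficulty is twofold. First, $W$ is not the clean matrix $D_{\tau}^{-1/2}(A-\Omega)D_{\tau}^{-1/2}$ because $D_{\tau}$ is random; I would split $W$ into a leading term built from $\mathscr{D}_{\tau}^{-1/2}(A-\Omega)\mathscr{D}_{\tau}^{-1/2}$ plus corrections due to the degree fluctuation $D_{\tau}-\mathscr{D}_{\tau}$, and show the corrections are negligible on the high-probability event of Lemma~\ref{boundL}. Second, for the leading term I would apply a Bernstein inequality to the sums of independent mean-zero vectors $\sum_{j}W(i,j)V(j,:)$: the per-entry variance is of order $\Omega(i,j)/(\mathscr{D}_{\tau}(i,i)\mathscr{D}_{\tau}(j,j))\le\theta(i)\theta(j)/(\tau+\delta_{\mathrm{min}})^{2}$ while $\|V(j,:)\|_{2}\lesssim\tilde{\theta}(j)$, so the row-wise variance proxy and the per-term magnitude bound together yield the $\sqrt{K\log n}/(\tau+\delta_{\mathrm{min}})$ scaling with the factor $\theta_{\mathrm{max}}$, at confidence $1-o(K^{4\beta}/n^{4\alpha-1})$ after a union bound over $i$. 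Dividing the assembled row-wise fluctuation by $|\lambda_{K}|$ then produces the claimed bound; the delicate accounting lies entirely in this concentration step, where the interplay between the regularizer $\tau$, the heterogeneity $\theta$, and the random normalization $D_{\tau}$ must be tracked simultaneously.
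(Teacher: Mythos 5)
Your skeleton (Davis--Kahan under the eigen-gap condition, the delocalization bound for $\|V\|_{2\rightarrow\infty}$ coming from $V=\tilde{\Theta}\Pi B$, and the conversion $1/|\lambda_{K}|\leq(\tau+\delta_{\mathrm{max}})/(\theta_{\mathrm{min}}^{2}|\lambda_{K}(P)|\lambda_{K}(\Pi'\Pi))$) matches the ingredients the paper uses, but the step you yourself single out as the crux --- the row-wise fluctuation estimate --- has a genuine gap. Bounding $\max_{i}\|e_{i}'WV\|_{2}$ by Bernstein plus a union bound does not control $\|e_{i}'(\hat{V}\hat{V}'-VV')\|_{F}$, because $\hat{V}$ is a function of the \emph{entire} noise matrix $W=L_{\tau}-\mathscr{L}_{\tau}$, including its $i$-th row. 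Any first-order expansion leaves a cross term of the form $e_{i}'W(\hat{V}-V(V'\hat{V}))$, and without further ideas the only available bound is the crude one $\|e_{i}'W\|_{2}\,\|(I-VV')\hat{V}\|\lesssim\|W\|_{2\rightarrow\infty}\|W\|/|\lambda_{K}|$, which after dividing by the gap contributes a term of order $\|W\|_{2\rightarrow\infty}\|W\|/\lambda_{K}^{2}$. With $\|W\|_{2\rightarrow\infty}\approx\sqrt{\theta_{\mathrm{max}}\|\theta\|_{1}}/(\tau+\delta_{\mathrm{min}})$, $\|W\|\approx\sqrt{\theta_{\mathrm{max}}\|\theta\|_{1}\mathrm{log}(n)}/(\tau+\delta_{\mathrm{min}})$ and $|\lambda_{K}|\geq C\theta_{\mathrm{max}}\sqrt{n\mathrm{log}(n)}/(\tau+\delta_{\mathrm{min}})$, this term can exceed the claimed bound by a factor as large as order $\sqrt{n/(K\mathrm{log}(n))}$. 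So the deterministic ``row-wise eigenspace perturbation bound'' you propose to invoke --- one expressed only through $\|W\|$, the incoherence of $V$, $\|WV\|_{2\rightarrow\infty}$, and the gap --- provably falls short in this regime; this is precisely the known failure mode that motivates leave-one-out techniques.

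The missing idea is a device that decouples row $i$ of $W$ from $\hat{V}$, i.e.\ a leave-one-out construction. The paper sidesteps this by citing Theorem 4.2.1 of \cite{chen2020spectral}, whose hypotheses are purely entrywise (independence, an entrywise variance bound, an entrywise magnitude bound, incoherence of $V$, and the gap condition $|\lambda_{K}|\geq C\theta_{\mathrm{max}}\sqrt{n\mathrm{log}(n)}/(\tau+\delta_{\mathrm{min}})$) and whose proof performs the leave-one-out analysis internally; no user-supplied bound on $\|WV\|_{2\rightarrow\infty}$ is involved. The paper then obtains $\|\hat{V}\mathrm{sgn}(H)-V\|_{2\rightarrow\infty}$ (with $H=\hat{V}'V$) from that theorem, obtains $\|\hat{V}\mathrm{sgn}(H)-V\|_{F}$ from the Davis--Kahan-type Lemma 5.1 of \cite{lei2015consistency} together with Lemma 3.1, and combines the two bounds to pass from eigenvector bases (up to rotation) to the projector difference $\hat{V}\hat{V}'-VV'$ --- a conversion your write-up also leaves implicit. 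If you want a self-contained proof along your lines, you must add the leave-one-out argument (as in \cite{mao2020estimating} or \cite{chen2020spectral}); without it, the dependency term destroys the claimed rate.
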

For convenience, we set $\varpi_{1}=\|\hat{V}\hat{V}'-VV'\|_{2\rightarrow\infty}$.
We emphasize that Lemma \ref{rowwiseerror} considers both positive and negative eigenvalues of $\mathscr{L}_{\tau}$ and $L_{\tau}$. Now, by Lemma \ref{boundL} and the conditions in Lemma \ref{rowwiseerror}, we can obtain the choice of $\tau$ as following: \begin{itemize}
	\item By Lemma \ref{boundL}, when $C\sqrt{\theta_{\mathrm{max}}\|\theta\|_{1}\mathrm{log}(n^{\alpha}K^{-\beta})}\leq \tau+\delta_{\mathrm{min}}\leq C\theta_{\mathrm{max}}\|\theta\|_{1}$, we have $err_{n}=C\frac{\sqrt{\theta_{\mathrm{max}}\|\theta\|_{1}\mathrm{log}(n^{\alpha}K^{-\beta})}}{\tau+\delta_{\mathrm{min}}}$. And  by the assumption in Lemma \ref{rowwiseerror} $|\lambda_{K}|\geq C\frac{\theta_{\mathrm{max}}\sqrt{n\mathrm{log}(n)}}{\tau+\delta_{\mathrm{min}}}$ and the facts $\theta^{2}_{\mathrm{max}}n\geq \theta_{\mathrm{max}}\|\theta\|_{1}$, we can find that $|\lambda_{K}|\geq C err_{n}$.
	As shown in Lemma 5 in the supplementary materials, we know $|\lambda_{K}|\leq \lambda_{1}\leq 1$, which could lead to $err_{n}\leq 1/C$. Then by the expression of $err_{n}$, we have $\tau+\delta_{\mathrm{min}}\geq C\sqrt{\theta_{\mathrm{max}}\|\theta\|_{1}\mathrm{log}(n^{\alpha}K^{-\beta})}$ which is consistent with the condition of $\tau+\delta_{\mathrm{min}}$.
	While, by Lemma \ref{boundL}, when $\tau+\delta_{\mathrm{min}}< C\sqrt{\theta_{\mathrm{max}}\|\theta\|_{1}\mathrm{log}(n^{\alpha}K^{-\beta})}$, we have $err_{n}=C\frac{\theta_{\mathrm{max}}\|\theta\|_{1}\mathrm{log}(n^{\alpha}K^{-\beta})}{(\tau+\delta_{\mathrm{min}})^{2}}\leq \lambda^{2}_{K}/C\leq 1/C$. Then we see that $\tau+\delta_{\mathrm{min}}\geq \sqrt{C\theta_{\mathrm{max}}\|\theta\|_{1}\mathrm{log}(n^{\alpha}K^{-\beta})}$, which is a contradiction. Hence, to make the condition of the lower bound of $|\lambda_{K}|$ hold, we need $C\sqrt{\theta_{\mathrm{max}}\|\theta\|_{1}\mathrm{log}(n^{\alpha}K^{-\beta})}\leq\tau+\delta_{\mathrm{min}}\leq C\theta_{\mathrm{max}}\|\theta\|_{1}$, then $err_{n}$ should be  written as $err_{n}=O(\frac{\sqrt{\theta_{\mathrm{max}}\|\theta\|_{1}\mathrm{log}(n^{\alpha}K^{-\beta})}}{\tau+\delta_{\mathrm{min}}})$.
\end{itemize}

Lemma \ref{boundC} provides the bound of the difference between $\hat{V}_{2,*}$ and $V_{2,*}$, which is the corner stone to characterize the behavior of the proposed algorithm.
\begin{lem}\label{boundC}
	Under $DCMM(n,P,\Theta,\Pi)$, when conditions in Lemma \ref{rowwiseerror} hold, there exists a permutation matrix $\mathcal{P}\in\mathbb{R}^{K\times K}$ such that with probability at least $1-o(\frac{K^{4\beta}}{n^{4\alpha-1}})$, we have
	\begin{align*}
	\|\hat{V}_{2,*}(\mathcal{\hat{I}},:)-\mathcal{P}V_{2,*}(\mathcal{I},:)\|_{F}=O(\frac{\tilde{\theta}^{7}_{\mathrm{max}}K^{2.5}\varpi_{1}\kappa^{3}(\Pi'\Pi)\sqrt{\lambda_{1}(\Pi'\Pi)}}{\eta \tilde{\theta}^{7}_{\mathrm{min}}}),
	\end{align*}
	where $\eta=\mathrm{min}_{1\leq k\leq K}((V_{*}(\mathcal{I},:)V'_{*}(\mathcal{I},:))^{-1}\mathbf{1})(k)$.
\end{lem}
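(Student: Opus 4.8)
The plan is to split the argument into two stages: first, to translate the row-wise eigenspace deviation $\varpi_{1}=\|\hat{V}\hat{V}'-VV'\|_{2\rightarrow\infty}$ of Lemma~\ref{rowwiseerror} into a row-wise deviation of the \emph{row-normalized} matrices $\|\hat{V}_{*,2}-V_{*,2}\|_{2\rightarrow\infty}$; and second, to feed this into the perturbation guarantee for the SVM-cone algorithm of \cite{MaoSVM}, whose output is exactly the near-corner matrix $\hat{V}_{*,2}(\hat{\mathcal{I}},:)$. Working with $V_{2}=VV'$ rather than $V$ itself is essential here, because $VV'$ is the orthogonal projector onto the leading $K$-dimensional eigenspace and is therefore invariant under the rotational ambiguity of the eigenvectors; this is precisely why the equivalence algorithm of Section~\ref{sec3} was set up on $V_{*,2}$, and by Lemma~\ref{Equivalence} the corner quantities for the two algorithms coincide, so it suffices to control the $V_{*,2}$ side.

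For the first stage I would apply the elementary normalization bound $\big\|\hat{x}/\|\hat{x}\|-x/\|x\|\big\|\leq 2\|\hat{x}-x\|/\|x\|$ row by row, giving $\|\hat{V}_{*,2}(i,:)-V_{*,2}(i,:)\|\leq 2\varpi_{1}/\min_{i}\|V_{2}(i,:)\|_{F}$. It then remains to lower-bound the smallest row norm of $V_{2}$. Since the columns of $V$ are orthonormal one has $\|V_{2}(i,:)\|_{F}=\|V(i,:)\|_{F}$, and Lemma~\ref{ExistB} gives $V(i,:)=\tilde{\theta}(i)\pi_{i}B$, whence $\|V(i,:)\|_{F}\geq \tilde{\theta}_{\mathrm{min}}\sigma_{K}(B)\|\pi_{i}\|_{2}\geq \tilde{\theta}_{\mathrm{min}}\sigma_{K}(B)/\sqrt{K}$ using $\|\pi_{i}\|_{1}=1$. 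The smallest singular value $\sigma_{K}(B)$ is then expressed through the model parameters via $B=\tilde{\Theta}^{-1}(\mathcal{I},\mathcal{I})V(\mathcal{I},:)$ and the definition $F=P\Pi'\tilde{\Theta}^{2}\Pi$, which introduces the dependence on $\lambda_{K}(P)$, $\lambda_{K}(\Pi'\Pi)$ and the ratio $\tilde{\theta}_{\mathrm{max}}/\tilde{\theta}_{\mathrm{min}}$; this is where the first several powers of the degree-heterogeneity ratio appear.

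For the second stage I would verify that the hypotheses of the SVM-cone stability theorem hold: the ideal-cone factorization $V_{*,2}=Y_{2}V_{*,2}(\mathcal{I},:)$ is supplied by Lemma~\ref{IdealCone2}, and the required positivity condition $(V_{*,2}(\mathcal{I},:)V'_{*,2}(\mathcal{I},:))^{-1}\mathbf{1}>0$ follows from Lemma~\ref{LSVM} together with the identity $V_{*,2}(\mathcal{I},:)V'_{*,2}(\mathcal{I},:)\equiv V_{*,1}(\mathcal{I},:)V'_{*,1}(\mathcal{I},:)$ of Lemma~\ref{Equivalence}. Invoking the theorem then produces a permutation $\mathcal{P}$ and a bound of the form $\|\hat{V}_{*,2}(\hat{\mathcal{I}},:)-\mathcal{P}V_{*,2}(\mathcal{I},:)\|_{F}\leq C\cdot(\mathrm{amplification})\cdot\|\hat{V}_{*,2}-V_{*,2}\|_{2\rightarrow\infty}$, where the abstract amplification constant depends on the cone's separation $\eta=\min_{k}((V_{*}(\mathcal{I},:)V'_{*}(\mathcal{I},:))^{-1}\mathbf{1})(k)$ and on the spread of the row norms of the weight matrix $Y_{2}$. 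Using $Y_{2}=N_{M_{2}}\Pi\tilde{\Theta}^{-1}(\mathcal{I},\mathcal{I})N_{V_{2}}^{-1}(\mathcal{I},\mathcal{I})$ to convert the abstract cone constants into $\kappa(\Pi'\Pi)$, $\lambda_{1}(\Pi'\Pi)$ and further ratios $\tilde{\theta}_{\mathrm{max}}/\tilde{\theta}_{\mathrm{min}}$, and combining with the first-stage bound, yields the stated rate with $\kappa^{3}(\Pi'\Pi)$, $K^{2.5}$, $\tilde{\theta}_{\mathrm{max}}^{7}/\tilde{\theta}_{\mathrm{min}}^{7}$, $\sqrt{\lambda_{1}(\Pi'\Pi)}$ and $1/\eta$.

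The main obstacle is the SVM-cone perturbation analysis itself, which I would import as a black box from \cite{MaoSVM}: showing that the recovered support vectors (the near-corners) are stable under an $\ell_{2}$-row perturbation of a cone is a genuinely geometric and optimization-theoretic argument, not a routine calculation. Given that theorem, the remaining difficulty is purely bookkeeping — faithfully propagating every factor of $\tilde{\theta}_{\mathrm{max}}/\tilde{\theta}_{\mathrm{min}}$, $K$ and $\kappa(\Pi'\Pi)$ through the two stages — and the most delicate part of that accounting is tracking how the degree-heterogeneity ratio accumulates to the seventh power, since it enters simultaneously through the normalization lower bound, the conditioning of $B$, and the spread of the $Y_{2}$-row norms.
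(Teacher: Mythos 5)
Your proposal matches the paper's own proof essentially step for step: the paper likewise bounds $\epsilon=\max_{i}\|\hat{V}_{*,2}(i,:)-V_{*,2}(i,:)\|_{F}\leq 2\varpi_{1}/\|V_{2}(i,:)\|_{F}$ via the same normalization inequality and the row-norm lower bound of its Lemma \ref{P2} (equivalent to your $\sigma_{K}(B)/\sqrt{K}$ computation, since $BB'=(\Pi'\tilde{\Theta}^{2}\Pi)^{-1}$), then invokes Lemma F.1 of \cite{MaoSVM} as a black box after checking its two conditions through Lemmas \ref{LSVM} and \ref{Equivalence}, and finally converts $\lambda_{K}(V_{*}(\mathcal{I},:)V_{*}'(\mathcal{I},:))$ into $\tilde{\theta}_{\mathrm{max}}^{2}\kappa(\Pi'\Pi)/\tilde{\theta}_{\mathrm{min}}^{2}$ via its Lemma \ref{P3} to accumulate the seventh power of the degree ratio. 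One trivial slip: the row-norm lower bound involves only $\lambda_{1}(\Pi'\Pi)$ and the $\tilde{\theta}$-ratio, not $\lambda_{K}(P)$, which enters the final theorem only through $\varpi_{1}$.
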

Now if we know the bounds for  the row-wise deviation between $\hat{Y}_{\bullet}$ and $Y_{\bullet}$, and $\hat{Z}$ and $Z$, we can get the error rate bound for the estimation of the proposed method. The following two lemmas give such bounds.
\begin{lem}\label{boundY}
	Under $DCMM(n,P,\Theta,\Pi)$, when assumptions in Lemma \ref{rowwiseerror} hold, then with probability at least $1-o(\frac{K^{4\beta}}{n^{4\alpha-1}})$, we have
	\begin{align*}
	\mathrm{max}_{1\leq i\leq n}\|e'_{i}(\hat{Y}_{\bullet}-Y_{\bullet}\mathcal{P})\|_{F}=O(\frac{\tilde{\theta}^{10}_{\mathrm{max}}K^{3.5}\varpi_{1}\kappa^{4.5}(\Pi'\Pi)}{\eta \tilde{\theta}^{10}_{\mathrm{min}}}).
	\end{align*}
\end{lem}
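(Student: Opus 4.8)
The plan is to work throughout with the $V_{*,2}$-formulation, since by Lemma \ref{Equivalence} the two algorithms coincide and $\hat{Y}_{\bullet}\equiv\hat{Y}_{\bullet,2}=\hat{V}_{2}\hat{V}'_{*,2}(\hat{\mathcal{I}},:)(\hat{V}_{*,2}(\hat{\mathcal{I}},:)\hat{V}'_{*,2}(\hat{\mathcal{I}},:))^{-1}$ while $Y_{\bullet}=V_{2}V'_{*,2}(\mathcal{I},:)(V_{*,2}(\mathcal{I},:)V'_{*,2}(\mathcal{I},:))^{-1}$. I would abbreviate $U=V_{*,2}(\mathcal{I},:)$ and $\hat{U}=\hat{V}_{*,2}(\hat{\mathcal{I}},:)$, so that $Y_{\bullet}=V_{2}U^{+}$ and $\hat{Y}_{\bullet}=\hat{V}_{2}\hat{U}^{+}$, where $X^{+}:=X'(XX')^{-1}$ is the right inverse of the full-row-rank $K\times n$ matrices $U,\hat{U}$. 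Because $V'V=I$, one has $V_{2}=VV'$ and $\|e'_{i}V_{2}\|_{F}=\|e'_{i}V\|_{F}$, and by Lemma \ref{ExistB} each row satisfies $e'_{i}V=\tilde{\theta}(i)\Pi(i,:)B$, which lets me control $\|V_{2}\|_{2\rightarrow\infty}\le\tilde{\theta}_{\mathrm{max}}\|B\|$. Since the target bound is uniform in $i$, the strategy is to keep the factor $e'_{i}$ always attached either to $\hat{V}_{2}-V_{2}$ (whose $\|\cdot\|_{2\rightarrow\infty}$ equals $\varpi_{1}$ by Lemma \ref{rowwiseerror}) or to $V_{2}$ (whose $\|\cdot\|_{2\rightarrow\infty}$ is bounded as above), while every remaining factor is measured in spectral norm.

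The main step is a telescoping decomposition. Writing $\mathcal{P}$ for the permutation of Lemma \ref{boundC} and using $\mathcal{P}\mathcal{P}'=I$, so that $(\mathcal{P}U)^{+}=U^{+}\mathcal{P}'$ and hence $V_{2}(\mathcal{P}U)^{+}=Y_{\bullet}\mathcal{P}'$ coincides with $Y_{\bullet}\mathcal{P}$ up to relabelling the (orthogonal) permutation, I split
$$e'_{i}(\hat{Y}_{\bullet}-Y_{\bullet}\mathcal{P})=e'_{i}(\hat{V}_{2}-V_{2})\hat{U}^{+}+e'_{i}V_{2}\big(\hat{U}^{+}-(\mathcal{P}U)^{+}\big).$$
The first term is at most $\varpi_{1}\,\|\hat{U}^{+}\|=\varpi_{1}/\sigma_{\mathrm{min}}(\hat{U})$, and Weyl's inequality together with Lemma \ref{boundC} gives $\sigma_{\mathrm{min}}(\hat{U})\ge\sigma_{\mathrm{min}}(U)-\|\hat{U}-\mathcal{P}U\|$, which stays bounded below once the corner deviation of Lemma \ref{boundC} is smaller than $\sigma_{\mathrm{min}}(U)/2$. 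For the second term I invoke the standard perturbation bound for pseudo-inverses, $\|\hat{U}^{+}-(\mathcal{P}U)^{+}\|\lesssim\sqrt{2}\,\max(\|\hat{U}^{+}\|,\|(\mathcal{P}U)^{+}\|)^{2}\,\|\hat{U}-\mathcal{P}U\|$, feed in $\|\hat{U}-\mathcal{P}U\|\le\|\hat{U}-\mathcal{P}U\|_{F}$ from Lemma \ref{boundC}, and multiply by $\|e'_{i}V_{2}\|_{F}\le\tilde{\theta}_{\mathrm{max}}\|B\|$.

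What remains is purely structural bookkeeping: I must express $\sigma_{\mathrm{min}}(U)$, $\|B\|$, and the pseudo-inverse norms through the model quantities. Using $V=\tilde{\Theta}\Pi B$ and $B=\tilde{\Theta}^{-1}(\mathcal{I},\mathcal{I})V(\mathcal{I},:)$ from Lemma \ref{ExistB}, the extreme singular values of $V(\mathcal{I},:)$, of the diagonal normalizers $N_{V_{2}}$, and of $\Pi'\Pi$ combine to produce the factors $\tilde{\theta}_{\mathrm{max}}/\tilde{\theta}_{\mathrm{min}}$, the powers of $\kappa(\Pi'\Pi)$, and the factor $\sqrt{K}$; the quantity $\eta=\mathrm{min}_{1\leq k\leq K}((V_{*}(\mathcal{I},:)V'_{*}(\mathcal{I},:))^{-1}\mathbf{1})(k)>0$ of Lemma \ref{LSVM} enters as the lower bound controlling $\sigma_{\mathrm{min}}(U)$ and hence $\|U^{+}\|,\|\hat{U}^{+}\|$. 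Collecting the two terms, the dominant contribution is $\varpi_{1}$ multiplied by the accumulated degree-heterogeneity and conditioning factors, yielding the claimed $O\big(\tilde{\theta}^{10}_{\mathrm{max}}K^{3.5}\varpi_{1}\kappa^{4.5}(\Pi'\Pi)/(\eta\,\tilde{\theta}^{10}_{\mathrm{min}})\big)$, on the event of Lemma \ref{rowwiseerror} of probability $1-o(K^{4\beta}/n^{4\alpha-1})$. I expect the delicate part to be the pseudo-inverse perturbation combined with the insistence on a row-wise bound: one must avoid ever taking a spectral or Frobenius norm of the whole $n\times K$ product and instead peel off the $e'_{i}$ factor first, and then carefully track how the many powers of $\tilde{\theta}_{\mathrm{max}}/\tilde{\theta}_{\mathrm{min}}$ and $\kappa(\Pi'\Pi)$ accumulate so that the exponents match the statement exactly.
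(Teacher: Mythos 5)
Your proposal is essentially the paper's own proof in different notation: after translating through Lemma \ref{Equivalence} (using $\hat{U}^{+}=\hat{V}\hat{V}_{*,1}^{-1}(\hat{\mathcal{I}},:)$), your two-term split $e'_{i}(\hat{V}_{2}-V_{2})\hat{U}^{+}+e'_{i}V_{2}(\hat{U}^{+}-(\mathcal{P}U)^{+})$ coincides term-by-term with the paper's decomposition $e'_{i}(\hat{V}-V(V'\hat{V}))\hat{V}_{C}^{-1}+e'_{i}V(V'\hat{V})(\hat{V}_{C}^{-1}-(\mathcal{P}'V_{C}V'\hat{V})^{-1})$; the only stylistic differences are that the paper reduces everything to $K\times K$ inverses and uses the identity $A^{-1}-B^{-1}=B^{-1}(B-A)A^{-1}$ where you invoke a Wedin-type pseudo-inverse perturbation bound, and the paper keeps the row factor as $\|e'_{i}Y_{\bullet}\|_{F}$ where you use $\|e'_{i}V_{2}\|_{F}$ times an extra $\|U^{+}\|$ (either bookkeeping yields the stated exponents; yours in fact gives $K^{2.5}$ rather than $K^{3.5}$). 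Two slips need fixing, though neither breaks the architecture. First, your claim that $\eta$ ``enters as the lower bound controlling $\sigma_{\mathrm{min}}(U)$'' is false: $\eta=\mathrm{min}_{k}((V_{*}(\mathcal{I},:)V'_{*}(\mathcal{I},:))^{-1}\mathbf{1})(k)$ gives no lower bound on $\sigma_{\mathrm{min}}(U)$ (if anything, large $\eta$ forces $\sigma_{\mathrm{min}}(U)$ small); in the paper $\sigma_{\mathrm{min}}(U)$ is bounded below by the separate estimate $\lambda_{K}(V_{*,1}(\mathcal{I},:)V'_{*,1}(\mathcal{I},:))\geq\tilde{\theta}^{2}_{\mathrm{min}}\kappa^{-1}(\Pi'\Pi)/\tilde{\theta}^{2}_{\mathrm{max}}$ (Lemma \ref{P3} of the supplement), and the $1/\eta$ in the final bound arrives solely through the corner deviation of Lemma \ref{boundC}, which you already multiply in, so your assembly still closes once you route $\eta$ correctly. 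Second, the algorithm actually outputs $\hat{Y}_{\bullet}=\mathrm{max}(0,\hat{V}_{2}\hat{U}^{+})$, not $\hat{V}_{2}\hat{U}^{+}$; you need the one-line observation that since $Y_{\bullet}\mathcal{P}$ is entrywise nonnegative, replacing negative entries of $\hat{V}_{2}\hat{U}^{+}$ by zero can only decrease the entrywise distance to $Y_{\bullet}\mathcal{P}$, so the truncation is absorbed for free.
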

\begin{lem}\label{BoundZ}
	Under $DCMM(n,P,\Theta,\Pi)$, when conditions in Lemma \ref{rowwiseerror} hold, with probability at least $1-o(\frac{K^{4\beta}}{n^{4\alpha-1}})$, we have
	\begin{align*}
	\mathrm{max}_{1\leq i\leq n}\|e'_{i}(\hat{Z}-Z\mathcal{P})\|_{F}=O(\frac{\tilde{\theta}^{10}_{\mathrm{max}}K^{3.5}\kappa^{4}(\Pi'\Pi)\varpi_{1}}{\tilde{\theta}^{11}_{\mathrm{min}}\eta\sqrt{\lambda_{K}(\Pi'\Pi)}}).
	\end{align*}
\end{lem}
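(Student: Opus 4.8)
The plan is to exploit the product structure $Z=Y_\bullet J$ from Eq.~(\ref{Z1}) and its empirical analogue $\hat Z=\hat Y_\bullet\hat J$, so that the row-wise control of $\hat Z$ reduces to the already-established control of $\hat Y_\bullet$ (Lemma~\ref{boundY}) together with a new estimate for the diagonal matrices $\hat J$ and $J$. Because $\mathcal{P}$ is a permutation matrix and $J$ is diagonal, $\tilde J:=\mathcal{P}'J\mathcal{P}$ is again a reordered diagonal matrix and $Z\mathcal{P}=Y_\bullet\mathcal{P}\tilde J$. Inserting $\hat Y_\bullet\tilde J$ gives
\begin{align*}
\hat Z-Z\mathcal{P}=\hat Y_\bullet(\hat J-\tilde J)+(\hat Y_\bullet-Y_\bullet\mathcal{P})\tilde J,
\end{align*}
hence row by row
\begin{align*}
\|e_i'(\hat Z-Z\mathcal{P})\|_F\le \|e_i'\hat Y_\bullet\|_F\,\|\hat J-\tilde J\|+\|e_i'(\hat Y_\bullet-Y_\bullet\mathcal{P})\|_F\,\|\tilde J\|.
\end{align*}
The last factor of the second term is exactly Lemma~\ref{boundY}, so it remains to bound $\|\tilde J\|$, $\|e_i'\hat Y_\bullet\|_F$ and $\|\hat J-\tilde J\|$ in terms of the model parameters and $\varpi_1$. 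Throughout I would work on the intersection of the high-probability events of Lemmas~\ref{boundL},~\ref{rowwiseerror} and~\ref{boundC}, which carries the stated probability $1-o(K^{4\beta}/n^{4\alpha-1})$.

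The scalar factors $\|\tilde J\|$ and $\|e_i'\hat Y_\bullet\|_F$ I would pin down through an explicit formula for the diagonal of $J$. For a pure node $\mathcal{I}_k$, Lemma~\ref{ExistB} gives $V(\mathcal{I}_k,:)=\tilde\theta(\mathcal{I}_k)B(k,:)$, and combined with $J=N_V(\mathcal{I},\mathcal{I})\tilde\Theta(\mathcal{I},\mathcal{I})$ from Eq.~(\ref{ThetaTildeIIExpression}) this yields $J(k,k)=1/\|B(k,:)\|_F$. Since $V=\tilde\Theta\Pi B$ has orthonormal columns, $B'(\Pi'\tilde\Theta^2\Pi)B=I$, whence $\|B(k,:)\|_F^2=((\Pi'\tilde\Theta^2\Pi)^{-1})(k,k)$; the extreme eigenvalues of $\Pi'\tilde\Theta^2\Pi$ then sandwich every $J(k,k)$ between quantities of order $\tilde\theta_{\mathrm{min}}\sqrt{\lambda_K(\Pi'\Pi)}$ and $\tilde\theta_{\mathrm{max}}\sqrt{\lambda_1(\Pi'\Pi)}$, which controls both $\|\tilde J\|=\|J\|$ and $\|J^{-1}\|$. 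For the row norm of $\hat Y_\bullet$ I would split $\|e_i'\hat Y_\bullet\|_F\le\|e_i'Y_\bullet\mathcal{P}\|_F+\|e_i'(\hat Y_\bullet-Y_\bullet\mathcal{P})\|_F$; the ideal term is handled using $Y_\bullet=ZJ^{-1}$, the proportionality $Z(i,:)=(N_V^{-1}N_M)(i,i)\Pi(i,:)$ with $\|\Pi(i,:)\|_1=1$, and the $\|J^{-1}\|$ bound just obtained, while the perturbation term is again Lemma~\ref{boundY} (the truncation $\hat Y_\bullet=\mathrm{max}(0,\hat Y_\bullet)$ is nonexpansive and, since the ideal $Y_\bullet$ is entrywise nonnegative, only decreases the error).

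The heart of the argument is the bound on $\|\hat J-\tilde J\|$. Since both are diagonal,
\begin{align*}
\|\hat J-\tilde J\|=\max_{1\le k\le K}|\hat J(k,k)-\tilde J(k,k)|=\max_{1\le k\le K}\frac{|\hat J(k,k)^2-\tilde J(k,k)^2|}{\hat J(k,k)+\tilde J(k,k)},
\end{align*}
so the square root is defused by the denominator, which is bounded below by the $J(k,k)$ estimates above (a matching lower bound for $\hat J(k,k)$ follows once $\hat J$ is shown close to $\tilde J$). Working in the equivalence formulation of Lemma~\ref{Equivalence}, the numerator is the difference between diagonal entries of the rank-$K$ reconstructions $\hat V\hat E\hat V'$ and $VEV'=\mathscr{L}_\tau$ evaluated at the near-corner indices and normalized by the corresponding row norms of $\hat V$ and $V$. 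I would split this into (i) an eigenvalue/reconstruction perturbation $\hat E-E$, controlled by Weyl's inequality through $\|L_\tau-\mathscr{L}_\tau\|=err_n$, and (ii) a corner-localization error $\hat V_{2,*}(\hat{\mathcal{I}},:)-\mathcal{P}V_{2,*}(\mathcal{I},:)$ supplied by Lemma~\ref{boundC} (which also controls the row-norm perturbation in the denominator); the unit-norm rows of $V_{2,*}$ and the fact that $\|E\|=|\lambda_1|\le1$ keep the resulting quadratic forms under control.

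The main obstacle will be twofold. First, assembling $\|\hat J-\tilde J\|$ forces one to reconcile the two distinct error scales of the problem, namely $err_n$ from the eigenvalue perturbation and $\varpi_1$ from the corner localization, and to argue that in the regime fixed by the assumption $|\lambda_K|\ge C\theta_{\mathrm{max}}\sqrt{n\mathrm{log}(n)}/(\tau+\delta_{\mathrm{min}})$ of Lemma~\ref{rowwiseerror} the former is dominated by the latter, so that the final bound can be expressed purely through $\varpi_1$. Second, once every piece is inserted into the row-wise inequality, collapsing the accumulated products of $\tilde\theta_{\mathrm{max}}/\tilde\theta_{\mathrm{min}}$ powers, powers of $K$, the condition number $\kappa(\Pi'\Pi)$, and the eigenvalues $\lambda_1(\Pi'\Pi),\lambda_K(\Pi'\Pi)$ into the single exponent pattern $\tilde\theta_{\mathrm{max}}^{10}K^{3.5}\kappa^{4}(\Pi'\Pi)\varpi_1/(\tilde\theta_{\mathrm{min}}^{11}\eta\sqrt{\lambda_K(\Pi'\Pi)})$ is a delicate bookkeeping step; in particular one must check that the $\|J\|$- and $\|\hat J-\tilde J\|$-contributions together reproduce exactly the extra factor $1/(\tilde\theta_{\mathrm{min}}\sqrt{\lambda_1(\Pi'\Pi)})$ that separates this bound from that of Lemma~\ref{boundY}.
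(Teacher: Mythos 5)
Your proposal follows essentially the same route as the paper's proof: the same product decomposition of $\hat{Z}-Z\mathcal{P}$ into a $\hat{Y}_{\bullet}$-difference term and a $\hat{J}$-difference term (up to a symmetric rearrangement of which factor carries the ``hat''), reduction of the first term to Lemma \ref{boundY}, and for the second term the identity $|J_{k}-\hat{J}_{k}|=|J_{k}^{2}-\hat{J}_{k}^{2}|/(J_{k}+\hat{J}_{k})$ with the numerator bounded in the $V_{*,2}$ equivalence formulation via the corner-localization error of Lemma \ref{boundC} together with the rank-$K$ reconstruction error $O(err_{n})$, and the denominator bounded below by the ideal $J_{k}$ (so your worry about a separate lower bound for $\hat{J}(k,k)$ is unnecessary, since $\hat{J}(k,k)\geq 0$ suffices). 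Even the obstacle you flag—absorbing the $O(err_{n})$ contribution into the $\varpi_{1}$-expression—is present in the paper, which performs this absorption at exactly the same point without further justification.
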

For convenience, set  $\pi_{\mathrm{min}}=\mathrm{min}_{1\leq k\leq K}\mathbf{1}'\Pi e_{k}$ which measures the minimum summation of nodes belong to a certain community. Increasing $\pi_{\mathrm{min}}$ makes the network tend to be more balanced, vice verse. Next theory guarantees that the estimation of Mixed-RSC is consistent.
\begin{thm}\label{Main}
	Under $DCMM(n,P,\Theta,\Pi)$, when conditions in Lemma \ref{rowwiseerror} hold, with probability at least $1-o(\frac{K^{4\beta}}{n^{4\alpha-1}})$, we have
	\begin{align*}
	\mathrm{max}_{1\leq i\leq n}\|e'_{i}(\hat{\Pi}-\Pi\mathcal{P})\|_{F}=O(\frac{\theta^{17}_{\mathrm{max}}(\tau+\delta_{\mathrm{max}})^{9}K^{7}\kappa^{4.5}(\Pi'\Pi)\lambda_{1}(\Pi'\Pi)\theta_{\mathrm{max}}\|\theta\|_{1}\sqrt{\mathrm{log}(n)}}{\theta^{20}_{\mathrm{min}}(\tau+\delta_{\mathrm{min}})^{9}|\lambda_{K}(P)|\pi_{\mathrm{min}}\lambda^{1.5}_{K}(\Pi'\Pi)}).
	\end{align*}
\end{thm}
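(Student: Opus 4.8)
The plan is to transfer the row-wise control of $\hat{Z}$ from Lemma \ref{BoundZ} through the final $\ell_{1}$-normalization $\hat{\Pi}(i,:)=\hat{Z}(i,:)/\|\hat{Z}(i,:)\|_{1}$ that defines the estimator. The key simplification I would record first is that the ideal $Z$ collapses to a clean form. From Lemma \ref{ExistB}, $V=\tilde{\Theta}\Pi B$ and $V_{*,1}(\mathcal{I},:)$ is the row-normalization of $B$, so a short computation gives $Y_{\bullet}=\tilde{\Theta}\Pi$ times the diagonal matrix of the row norms of $B$; moreover, using $\mathrm{diag}(P)=I$ together with the identity $P=BEB'$ (which follows from $\tilde{\Theta}(\mathcal{I},\mathcal{I})P\tilde{\Theta}(\mathcal{I},\mathcal{I})=V(\mathcal{I},:)EV'(\mathcal{I},:)$ and $V(\mathcal{I},:)=\tilde{\Theta}(\mathcal{I},\mathcal{I})B$), one finds $J$ equals the inverse of that same diagonal matrix. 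Hence $Z=Y_{\bullet}J=\tilde{\Theta}\Pi$, so that $\|Z(i,:)\|_{1}=\tilde{\theta}(i)$ because each $\Pi(i,:)$ is a PMF, and the row-$\ell_{1}$-normalization of $Z$ returns $\Pi$ exactly.

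Next I would prove a deterministic normalization-perturbation bound. Writing $a=\hat{Z}(i,:)$ and $b=Z(i,:)\mathcal{P}$, and using $\big|\,\|a\|_{1}-\|b\|_{1}\,\big|\le\|a-b\|_{1}\le\sqrt{K}\|a-b\|_{F}$ together with $\|b\|_{F}\le\|b\|_{1}$, the identity $a/\|a\|_{1}-b/\|b\|_{1}=(a-b)/\|a\|_{1}+b(\|b\|_{1}-\|a\|_{1})/(\|a\|_{1}\|b\|_{1})$ yields
\[
\|e'_{i}(\hat{\Pi}-\Pi\mathcal{P})\|_{F}\le\frac{C\sqrt{K}\,\|e'_{i}(\hat{Z}-Z\mathcal{P})\|_{F}}{\|Z(i,:)\|_{1}}=\frac{C\sqrt{K}\,\|e'_{i}(\hat{Z}-Z\mathcal{P})\|_{F}}{\tilde{\theta}(i)},
\]
valid once $\|e'_{i}(\hat{Z}-Z\mathcal{P})\|_{F}$ is small relative to $\tilde{\theta}_{\mathrm{min}}/\sqrt{K}$ (guaranteed under the conditions of Lemma \ref{rowwiseerror}) so that $\|\hat{Z}(i,:)\|_{1}\ge\tilde{\theta}_{\mathrm{min}}/2$. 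Taking the maximum over $i$ and bounding $\tilde{\theta}(i)\ge\tilde{\theta}_{\mathrm{min}}$ reduces the whole statement to Lemma \ref{BoundZ}.

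It then remains to make every quantity explicit in the model parameters. I would (i) substitute the bound of Lemma \ref{BoundZ} into the numerator; (ii) replace $\varpi_{1}$ by its value from Lemma \ref{rowwiseerror}; (iii) furnish a lower bound for $\eta=\mathrm{min}_{1\leq k\leq K}((V_{*}(\mathcal{I},:)V'_{*}(\mathcal{I},:))^{-1}\mathbf{1})(k)$ in terms of the model, translating the abstract positivity of Lemma \ref{LSVM} into a quantitative statement via the conditioning of the normalized corner matrix $V_{*,1}(\mathcal{I},:)$ (equivalently of $B$, hence of $P$ and of $\Pi'\tilde{\Theta}^{2}\Pi$); and (iv) eliminate the tilded degrees through $\tilde{\theta}_{\mathrm{max}}\le\theta_{\mathrm{max}}/\sqrt{\tau+\delta_{\mathrm{min}}}$ and $\tilde{\theta}_{\mathrm{min}}\ge\theta_{\mathrm{min}}/\sqrt{\tau+\delta_{\mathrm{max}}}$, which turns each power $\tilde{\theta}_{\mathrm{max}}^{a}/\tilde{\theta}_{\mathrm{min}}^{b}$ into matching powers of $\theta_{\mathrm{max}}/\theta_{\mathrm{min}}$ and of $(\tau+\delta_{\mathrm{max}})/(\tau+\delta_{\mathrm{min}})$. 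Collecting the $K$, $\kappa(\Pi'\Pi)$, $\lambda_{1}(\Pi'\Pi)$, $\lambda_{K}(\Pi'\Pi)$, $|\lambda_{K}(P)|$, $\pi_{\mathrm{min}}$, $\|\theta\|_{1}$ and $\log n$ contributions produces the displayed bound, and the probability $1-o(K^{4\beta}/n^{4\alpha-1})$ is inherited unchanged from Lemma \ref{BoundZ} since steps (i)--(iv) are deterministic.

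The main obstacle is step (iii): a sharp, parameter-explicit lower bound on $\eta$. Because the normalization reduction is exact and $\|Z(i,:)\|_{1}=\tilde{\theta}(i)$ is clean, $\eta$ is the only place where $\pi_{\mathrm{min}}$ and $\lambda_{1}(\Pi'\Pi)$, together with the extra factor $\theta_{\mathrm{max}}\|\theta\|_{1}$, can enter; controlling it amounts to a quantitative version of the cone non-degeneracy underlying Lemma \ref{LSVM}, i.e., a uniform lower bound on the smallest singular value of $V_{*,1}(\mathcal{I},:)$. A secondary, purely book-keeping difficulty is keeping the long chain of substitutions consistent so that the large exponents ($17$, $20$, $9$, $7$) emerge exactly, since each of Lemmas \ref{boundC}--\ref{BoundZ} already contributes several powers of $\tilde{\theta}_{\mathrm{max}}/\tilde{\theta}_{\mathrm{min}}$ and of $\kappa(\Pi'\Pi)$, and the degree conversion in (iv) must be applied uniformly across all of them.
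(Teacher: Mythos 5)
Your skeleton is the same as the paper's: reduce $\hat{\Pi}-\Pi\mathcal{P}$ to $\hat{Z}-Z\mathcal{P}$ via a normalization-perturbation inequality, invoke Lemma \ref{BoundZ}, make $\eta$ explicit in the model parameters, convert $\tilde{\theta}_{\mathrm{max}},\tilde{\theta}_{\mathrm{min}}$ into $\theta_{\mathrm{max}},\theta_{\mathrm{min}}$ and $(\tau+\delta)$ factors, and substitute $\varpi_{1}$ from Lemma \ref{rowwiseerror}. Your identity $Z=\tilde{\Theta}\Pi$ is correct: indeed $N_{V}^{-1}(i,i)N_{M_{1}}(i,i)=\|V(i,:)\|_{F}/\|M_{1}(i,:)\|_{F}=\tilde{\theta}(i)$ because $V=\tilde{\Theta}M_{1}$, and your derivation via $Y_{\bullet}=\tilde{\Theta}\Pi D_{B}$, $J=D_{B}^{-1}$ (with $D_{B}$ the diagonal of row norms of $B$, using $P=BEB'$ and $\mathrm{diag}(P)=I$) checks out. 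This is actually \emph{sharper} than the paper's own treatment, which bounds $N^{-1}$ and $N_{M_{1}}$ separately to get $m_{Z}\geq \tilde{\theta}^{3}_{\mathrm{min}}/(\tilde{\theta}^{2}_{\mathrm{max}}K^{2}\sqrt{\kappa(\Pi'\Pi)})$, whereas your computation gives $\|Z(i,:)\|_{1}=\tilde{\theta}(i)$ exactly; your $\ell_{1}$ normalization is also the one Algorithm \ref{alg:MRSC} actually uses, while the paper's proof normalizes by Frobenius norms. Carried to the end, your route produces a bound with smaller exponents (roughly $\theta^{15}_{\mathrm{max}}/\theta^{18}_{\mathrm{min}}$, $K^{5.5}$, $\kappa^{4}$), which dominates into the displayed bound because the discrepancy factors $\theta^{2}_{\mathrm{max}}/\theta^{2}_{\mathrm{min}}$, $K^{1.5}$, $\kappa^{0.5}(\Pi'\Pi)$ and $\theta_{\mathrm{max}}\|\theta\|_{1}/(\tau+\delta_{\mathrm{min}})$ are all bounded below by constants; so proving your bound proves the theorem.

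The genuine gap is your step (iii), which you yourself flag as the main obstacle: without a parameter-explicit lower bound on $\eta$ the theorem cannot be reached, and the mechanism you sketch --- a lower bound on the smallest singular value of $V_{*,1}(\mathcal{I},:)$ --- cannot do the job by itself. Conditioning controls $\|(V_{*,1}(\mathcal{I},:)V'_{*,1}(\mathcal{I},:))^{-1}\mathbf{1}\|_{F}$ from \emph{above}; it gives no positive lower bound on the smallest \emph{entry} of that vector, which is what $\eta$ is. What is needed, and what the paper's proof supplies, is the sign structure: $(V_{*,1}(\mathcal{I},:)V'_{*,1}(\mathcal{I},:))^{-1}=N^{-1}(\mathcal{I},\mathcal{I})\tilde{\Theta}^{-1}(\mathcal{I},\mathcal{I})\,\Pi'\tilde{\Theta}^{2}\Pi\,\tilde{\Theta}^{-1}(\mathcal{I},\mathcal{I})N^{-1}(\mathcal{I},\mathcal{I})$, in which every factor is entrywise nonnegative, so the matrix dominates $\frac{\tilde{\theta}^{2}_{\mathrm{min}}}{\tilde{\theta}^{2}_{\mathrm{max}}N^{2}_{\mathrm{max}}}\Pi'\Pi$ entrywise; since $\Pi'\Pi\mathbf{1}=\Pi'\mathbf{1}$ has entries at least $\pi_{\mathrm{min}}$ and $N_{\mathrm{max}}\leq \tilde{\theta}_{\mathrm{max}}\sqrt{K\lambda_{1}(\Pi'\Pi)}/\tilde{\theta}_{\mathrm{min}}$ by Lemma \ref{P2}, one gets $\eta\geq \tilde{\theta}^{4}_{\mathrm{min}}\pi_{\mathrm{min}}/(\tilde{\theta}^{4}_{\mathrm{max}}K\lambda_{1}(\Pi'\Pi))$. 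This entrywise-positivity argument, not a singular-value bound, is the missing idea, and it is exactly where $\pi_{\mathrm{min}}$ and $\lambda_{1}(\Pi'\Pi)$ enter. Two smaller corrections: the factor $\theta_{\mathrm{max}}\|\theta\|_{1}$ in the statement does not come from $\eta$, as you assert, but from trading one power of $(\tau+\delta_{\mathrm{max}})$ for $\theta_{\mathrm{max}}\|\theta\|_{1}$ using $\delta_{\mathrm{max}}\leq\theta_{\mathrm{max}}\|\theta\|_{1}$ and $\tau+\delta_{\mathrm{min}}\leq C\theta_{\mathrm{max}}\|\theta\|_{1}$; and your smallness condition ensuring $\|\hat{Z}(i,:)\|_{1}\geq\tilde{\theta}_{\mathrm{min}}/2$ is not literally guaranteed by Lemma \ref{rowwiseerror}, though this is harmless --- when it fails the right-hand side of the theorem is of constant order while the left-hand side is trivially at most $2$.
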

When we closely look into the assumption (A1) and the parametric probability $1-o(\frac{K^{4\beta}}{n^{4\alpha-1}})$, we can find that when decreasing $\alpha$ and/or increasing $\beta$, the network could be less sparse, however,  the parametric probability decreases. Therefore, we can conclude that there is a trade-off between the sparsity of a network and the probability for successfully detecting its mixed memberships. Especially, if a network is too sparse (which can be seen as $\alpha$ is too small or $\beta$ is too large in assumption (A1), then the probability of successfully detect such network decreases.

Since $\delta_{\mathrm{min}}\leq \delta_{\mathrm{max}}$, increasing $\tau$ decreases error bound in Theorem \ref{Main}, suggesting that a larger $\tau$ gives better estimations. Recall that $\tau+\delta_{\mathrm{min}}\leq C\theta_{\mathrm{max}}\|\theta\|_{1}$ (the analysis after Lemma 3.2), therefore the theoretical optimal choice of $\tau$ is:
\begin{align}\label{tauoptG}
\tau_{\mathrm{opt}}=O(\theta_{\mathrm{max}}\|\theta\|_{1}).
\end{align}
Meanwhile, the theoretical optimal choice of $\tau$ is larger than 0, suggesting the benefits of regularization (i.e., $\tau>0$) compared with no regularization (i.e., $\tau=0$)  in regularized spectral clustering. As is known, most real-world networks are sparse, and if we consider  the sparest network with $\theta_{\mathrm{max}}\|\theta\|_{1}=O(\mathrm{log}^{1+2\gamma}(n^{\alpha}K^{-\beta}))$ for $\gamma\rightarrow 0^{+}$ satisfying assumption (A1), the  optimal choice for the regularization parameter $\tau$  is
\begin{align}\label{tauoptimal}
\tau_{\mathrm{opt}}=O(\mathrm{log}(n^{\alpha}K^{-\beta}))\equiv O(\mathrm{log}(n)).
\end{align}

After plugging the optimal value for $\tau$ in Eq (\ref{tauoptG}) into Theorem \ref{Main}, we have the following corollary.
\begin{cor}\label{MainCor}
	Same as the conditions in Theorem \ref{Main}, with probability at least $1-o(\frac{K^{4\beta}}{n^{4\alpha-1}})$, we have
	\begin{align*}
	\mathrm{max}_{1\leq i\leq n}\|e'_{i}(\hat{\Pi}-\Pi\mathcal{P})\|_{F}=O(\frac{\theta^{17}_{\mathrm{max}}K^{7}\kappa^{4.5}(\Pi'\Pi)\lambda_{1}(\Pi'\Pi)\theta_{\mathrm{max}}\|\theta\|_{1}\sqrt{\mathrm{log}(n)}}{\theta^{20}_{\mathrm{min}}|\lambda_{K}(P)|\pi_{\mathrm{min}}\lambda^{1.5}_{K}(\Pi'\Pi)}).
	\end{align*}
	Especially, for the sparest case when $\theta_{\mathrm{max}}\|\theta\|_{1}=O(\mathrm{log}^{1+2\gamma}(n))$ for $\gamma\rightarrow 0^{+}$, we have
	\begin{align*}
	\mathrm{max}_{1\leq i\leq n}\|e'_{i}(\hat{\Pi}-\Pi\mathcal{P})\|_{F}=O(\frac{\theta^{17}_{\mathrm{max}}K^{7}\kappa^{4.5}(\Pi'\Pi)\lambda_{1}(\Pi'\Pi)\mathrm{log}^{1.5+2\gamma}(n)}{\theta^{20}_{\mathrm{min}}|\lambda_{K}(P)|\pi_{\mathrm{min}}\lambda^{1.5}_{K}(\Pi'\Pi)}).
	\end{align*}
\end{cor}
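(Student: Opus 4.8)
The plan is to propagate the row-wise bound of Lemma \ref{BoundZ} through the final $\ell_1$-normalization step of the algorithm. Recall that Mixed-RSC outputs $\hat\Pi(i,:)=\hat Z(i,:)/\|\hat Z(i,:)\|_1$ while the population quantity satisfies $\Pi(i,:)=Z(i,:)/\|Z(i,:)\|_1$, and since a permutation preserves the $\ell_1$-norm, $(\Pi\mathcal{P})(i,:)=(Z\mathcal{P})(i,:)/\|(Z\mathcal{P})(i,:)\|_1$. Writing $a=\hat Z(i,:)$ and $b=(Z\mathcal{P})(i,:)$, the identity $a\|b\|_1-b\|a\|_1=(a-b)\|b\|_1+b(\|b\|_1-\|a\|_1)$ together with $|\,\|a\|_1-\|b\|_1\,|\le\|a-b\|_1\le\sqrt K\|a-b\|_F$ and $\|b\|_F\le\|b\|_1$ yields a bound of the form
\begin{align*}
\|\hat\Pi(i,:)-(\Pi\mathcal{P})(i,:)\|_F\le \frac{(1+\sqrt K)\,\|\hat Z(i,:)-(Z\mathcal{P})(i,:)\|_F}{\|\hat Z(i,:)\|_1}.
\end{align*}
Hence the task reduces to dividing the bound of Lemma \ref{BoundZ} by a uniform lower bound on $\|\hat Z(i,:)\|_1$, which, once the error is shown to be of smaller order, is comparable to $\min_{1\le i\le n}\|Z(i,:)\|_1$.

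The new ingredient is therefore a uniform lower bound on $\min_i\|Z(i,:)\|_1$. Since $Z=N_V^{-1}N_M\Pi$ with $N_V^{-1}N_M$ a positive diagonal matrix and each row of $\Pi$ summing to one, we have $\|Z(i,:)\|_1=(N_V^{-1}N_M)(i,i)$. I would control the two diagonal factors separately: by Lemma \ref{ExistB}, $N_V^{-1}(i,i)=\|V(i,:)\|_F=\tilde\theta(i)\|\Pi(i,:)B\|_F$, which I bound below using the full rank of $B$ (so $\|\Pi(i,:)B\|_F\ge\sigma_{\min}(B)\|\Pi(i,:)\|_F\ge\sigma_{\min}(B)/\sqrt K$) and $\tilde\theta(i)\ge\tilde\theta_{\min}$; the diagonal of $N_M$ is handled through its defining relation in Lemma \ref{IdealCone} and the unit-norm constraint on the rows of $V_{*,1}$. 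The singular values of $B$ and the cone quantity $\eta$ are then expressed through $\lambda_K(P)$, $\lambda_K(\Pi'\Pi)$, $\pi_{\mathrm{min}}$ and the $\tilde\theta$ extremes.

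All intermediate quantities are next translated into the model parameters $(n,P,\Theta,\Pi)$ and $K$. The key conversions are $\tilde\theta_{\mathrm{max}}\le\theta_{\mathrm{max}}/\sqrt{\tau+\delta_{\mathrm{min}}}$ and $\tilde\theta_{\mathrm{min}}\ge\theta_{\mathrm{min}}/\sqrt{\tau+\delta_{\mathrm{max}}}$, which is what produces the powers of $(\tau+\delta_{\mathrm{max}})/(\tau+\delta_{\mathrm{min}})$ in the statement; similarly $\kappa(\Pi'\Pi)$, $\lambda_1(\Pi'\Pi)$ and $\pi_{\mathrm{min}}$ absorb the cone-geometry quantities. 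Finally I substitute $\varpi_1$ from Lemma \ref{rowwiseerror} into the bound of Lemma \ref{BoundZ}, collect all the powers of $\theta_{\mathrm{max}}/\theta_{\mathrm{min}}$, $K$ and the eigenvalue factors, and absorb numerical constants into the $O(\cdot)$, arriving at the stated rate. The probability $1-o(K^{4\beta}/n^{4\alpha-1})$ is inherited verbatim, since every estimate used (Lemmas \ref{boundL}, \ref{rowwiseerror}, \ref{boundC}, \ref{boundY}, \ref{BoundZ}) holds on the same event.

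The hard part will not be the normalization step, which is routine, but the lower bound on $\min_i\|Z(i,:)\|_1$ together with the accompanying parameter bookkeeping: one must track how the cone-structure quantities ($\sigma_{\min}(B)$, $\eta$, and the diagonals of $N_V$ and $N_M$) depend on $\lambda_K(P)$, $\lambda_K(\Pi'\Pi)$, $\pi_{\mathrm{min}}$ and the degree heterogeneity, and ensure each crude inequality inflates the bound only in the direction already reflected by the high powers in the theorem. Keeping these exponents consistent — in particular the ninth powers of $(\tau+\delta_{\mathrm{max}})$ and $(\tau+\delta_{\mathrm{min}})$ that accumulate from repeated use of the $\tilde\theta$-to-$\theta$ conversions in both the numerator (via Lemma \ref{BoundZ}) and the denominator (via $\min_i\|Z(i,:)\|_1$) — is where the care is required.
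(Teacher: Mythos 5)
Your proposal re-derives the error bound of Theorem \ref{Main} from Lemma \ref{BoundZ} (the $l_{1}$-normalization step, the lower bound on $\mathrm{min}_{i}\|Z(i,:)\|_{1}$, the $\tilde\theta$-to-$\theta$ conversions, the substitution of $\varpi_{1}$), but it never addresses the one step that actually constitutes Corollary \ref{MainCor}: the elimination of the factor $\bigl(\tfrac{\tau+\delta_{\mathrm{max}}}{\tau+\delta_{\mathrm{min}}}\bigr)^{9}$ appearing in Theorem \ref{Main}. The corollary is not a restatement of the theorem --- its bound contains no $(\tau+\delta_{\mathrm{max}})$ or $(\tau+\delta_{\mathrm{min}})$ factors at all. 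The paper obtains it by plugging the theoretically optimal regularizer $\tau_{\mathrm{opt}}=O(\theta_{\mathrm{max}}\|\theta\|_{1})$ of Eq.~(\ref{tauoptG}) into Theorem \ref{Main}: since $\delta_{\mathrm{max}}\leq\theta_{\mathrm{max}}\|\theta\|_{1}$, with this choice both $\tau+\delta_{\mathrm{min}}$ and $\tau+\delta_{\mathrm{max}}$ are of order $\theta_{\mathrm{max}}\|\theta\|_{1}$, hence $\bigl(\tfrac{\tau+\delta_{\mathrm{max}}}{\tau+\delta_{\mathrm{min}}}\bigr)^{9}=O(1)$ and the factor drops out. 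Your plan, carried out as written, terminates at Theorem \ref{Main}'s bound with the ninth-power ratio still present; indeed your closing paragraph claims these powers appear ``in the statement,'' which they do not --- this misreading is exactly what hides the gap. For an arbitrary $\tau\geq 0$ (e.g.\ $\tau=0$ with strongly heterogeneous degrees) the ratio need not be $O(1)$, so the corollary's bound cannot be reached without invoking this specific choice of $\tau$.

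Two smaller points. First, the corollary's second display (the sparsest case) requires substituting $\theta_{\mathrm{max}}\|\theta\|_{1}=O(\mathrm{log}^{1+2\gamma}(n))$, which turns $\theta_{\mathrm{max}}\|\theta\|_{1}\sqrt{\mathrm{log}(n)}$ into $\mathrm{log}^{1.5+2\gamma}(n)$; your proposal does not treat this case at all. Second, everything you outline (the normalization identity, the lower bound on $\mathrm{min}_{i}\|e'_{i}Z\|_{F}$ via $N_{V}$, $N_{M}$ and Lemma \ref{P2}-type estimates, the conversions to model parameters) is already the content of the paper's proof of Theorem \ref{Main}; the intended proof of the corollary itself is a two-line substitution, so re-proving the theorem is unnecessary work that, absent the $\tau_{\mathrm{opt}}$ step, still does not deliver the claimed statement.
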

If we further make more assumptions on $K, \theta_{\mathrm{max}}, \theta_{\mathrm{min}}, \pi_{\mathrm{min}}$ and $\lambda_{1}(\Pi'\Pi)$ as Corollary 3.1 in \cite{mao2020estimating}, we can have a reduced error bound which is showed in the following corollary.
\begin{cor}\label{AddConditions}
	Under $DCMM(n, P,\Theta,\Pi)$, conditions in Theorem \ref{Main} hold, and suppose $K=O(1), \pi_{\mathrm{min}}=O(\frac{n}{K}),  \lambda_{1}(\Pi'\Pi)=O(\frac{n}{K})$, and $\theta_{\mathrm{max}}\leq C\theta_{\mathrm{min}}$, with probability at least $1-o(\frac{K^{4\beta}}{n^{4\alpha-1}})$, we have
	\begin{align*}
	\mathrm{max}_{1\leq i\leq n}\|e'_{i}(\hat{\Pi}-\Pi\mathcal{P})\|_{F}=O(\frac{\sqrt{\mathrm{log}(n)}}{\theta_{\mathrm{min}}|\lambda_{K}(P)|\sqrt{n}})\equiv O(\frac{1}{|\lambda_{K}(P)|}\sqrt{\frac{\mathrm{log}(n)}{\theta_{\mathrm{max}}\|\theta\|_{1}}}).
	\end{align*}
	Especially, for the sparest case when $\theta_{\mathrm{max}}\|\theta\|_{1}=O(\mathrm{log}^{1+2\gamma}(n))$ for $\gamma\rightarrow 0^{+}$, we have
	\begin{align*}
	\mathrm{max}_{1\leq i\leq n}\|e'_{i}(\hat{\Pi}-\Pi\mathcal{P})\|_{F}=O(\frac{1}{|\lambda_{K}(P)|\mathrm{log}^{\gamma}(n)}).
	\end{align*}
\end{cor}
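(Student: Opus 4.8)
The plan is to obtain this corollary as a direct specialization of Corollary \ref{MainCor} under the additional balance and regularity assumptions; no new probabilistic step is required, so the claimed probability $1-o(K^{4\beta}/n^{4\alpha-1})$ is inherited verbatim from Theorem \ref{Main} (the event is unchanged). First I would take the first display of Corollary \ref{MainCor}, $\max_{i}\|e_{i}'(\hat{\Pi}-\Pi\mathcal{P})\|_{F}=O(\theta_{\mathrm{max}}^{17}K^{7}\kappa^{4.5}(\Pi'\Pi)\lambda_{1}(\Pi'\Pi)\theta_{\mathrm{max}}\|\theta\|_{1}\sqrt{\log n}/(\theta_{\mathrm{min}}^{20}|\lambda_{K}(P)|\pi_{\mathrm{min}}\lambda_{K}^{1.5}(\Pi'\Pi)))$, and substitute the four assumptions in turn, carefully separating the spectral factors from the degree factors.

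I would handle the spectral factors first. Since $K=O(1)$ we have $K^{7}=O(1)$. Combined with $\lambda_{1}(\Pi'\Pi)=O(n/K)$ and $\pi_{\mathrm{min}}=O(n/K)$, the balance forces the smallest eigenvalue to the same order, i.e. $\lambda_{K}(\Pi'\Pi)=\Theta(n/K)$, so that $\kappa(\Pi'\Pi)=\lambda_{1}(\Pi'\Pi)/\lambda_{K}(\Pi'\Pi)=O(1)$ and hence $\kappa^{4.5}(\Pi'\Pi)=O(1)$; this is exactly the regularity package used in Corollary 3.1 of \cite{mao2020estimating}. Collecting the surviving powers of $n$, the numerator contributes $\lambda_{1}(\Pi'\Pi)\asymp n/K$ and the denominator contributes $\pi_{\mathrm{min}}\lambda_{K}^{1.5}(\Pi'\Pi)\asymp(n/K)(n/K)^{1.5}$, so the net $n$-dependence from these factors is $(K/n)^{1.5}=O(n^{-1.5})$.

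The degree factors are the delicate bookkeeping step, and the part I would be most careful with. After merging the two $\theta_{\mathrm{max}}$ powers in the numerator, the surviving degree factor is $\theta_{\mathrm{max}}^{18}\|\theta\|_{1}/\theta_{\mathrm{min}}^{20}$. The assumption $\theta_{\mathrm{max}}\leq C\theta_{\mathrm{min}}$ is precisely what makes the large exponents collapse: $\theta_{\mathrm{max}}^{18}\leq C^{18}\theta_{\mathrm{min}}^{18}$ gives $\theta_{\mathrm{max}}^{18}/\theta_{\mathrm{min}}^{20}=O(\theta_{\mathrm{min}}^{-2})$, leaving $O(\|\theta\|_{1}\theta_{\mathrm{min}}^{-2})$, and then $\|\theta\|_{1}\leq n\theta_{\mathrm{max}}\leq Cn\theta_{\mathrm{min}}$ yields $O(n\theta_{\mathrm{min}}^{-1})$. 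Multiplying this by the spectral $O(n^{-1.5})$ and the $\sqrt{\log n}/|\lambda_{K}(P)|$ carried along gives $O(\sqrt{\log n}/(\theta_{\mathrm{min}}|\lambda_{K}(P)|\sqrt{n}))$, which is the first claimed bound.

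Finally I would verify the two reformulations. Because $\theta_{\mathrm{max}}\leq C\theta_{\mathrm{min}}$ forces $\theta_{\mathrm{min}}\asymp\theta_{\mathrm{max}}$ and $\|\theta\|_{1}\asymp n\theta_{\mathrm{min}}$, we get $\theta_{\mathrm{max}}\|\theta\|_{1}\asymp n\theta_{\mathrm{min}}^{2}$, so $\sqrt{\theta_{\mathrm{max}}\|\theta\|_{1}}\asymp\sqrt{n}\,\theta_{\mathrm{min}}$ and therefore $\sqrt{\log n}/(\theta_{\mathrm{min}}\sqrt{n})\asymp\sqrt{\log n/(\theta_{\mathrm{max}}\|\theta\|_{1})}$, which is the stated equivalent expression. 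Substituting the sparsest regime $\theta_{\mathrm{max}}\|\theta\|_{1}=O(\log^{1+2\gamma}(n))$ into this form gives $\sqrt{\log n/\log^{1+2\gamma}(n)}/|\lambda_{K}(P)|=1/(|\lambda_{K}(P)|\log^{\gamma}(n))$, the last display. The only genuine subtlety is justifying $\lambda_{K}(\Pi'\Pi)=\Theta(n/K)$ (hence $\kappa(\Pi'\Pi)=O(1)$) from the balance assumptions rather than taking it for granted; beyond that, the entire argument is substitution together with the repeated use of $\theta_{\mathrm{max}}\leq C\theta_{\mathrm{min}}$ to neutralize the high $\theta$-exponents.
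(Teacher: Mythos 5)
Your overall route is the same as the paper's: substitute the assumptions into Corollary \ref{MainCor}, use $\theta_{\mathrm{max}}\leq C\theta_{\mathrm{min}}$ to collapse $\theta_{\mathrm{max}}^{18}\|\theta\|_{1}/\theta_{\mathrm{min}}^{20}$ into $O(n/\theta_{\mathrm{min}})$, and then verify the two reformulations; that bookkeeping, the equivalence $\theta_{\mathrm{min}}\sqrt{n}\asymp\sqrt{\theta_{\mathrm{max}}\|\theta\|_{1}}$, and the sparsest-case substitution are all correct and match the paper's proof. The genuine problem is the step you yourself flag as the ``only subtlety'': you claim that $\pi_{\mathrm{min}}=O(n/K)$ together with $\lambda_{1}(\Pi'\Pi)=O(n/K)$ \emph{forces} $\lambda_{K}(\Pi'\Pi)=\Theta(n/K)$, hence $\kappa(\Pi'\Pi)=O(1)$. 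This inference is false. Take $K=2$, one pure node in each community, and the remaining $n-2$ nodes all with $\pi_{i}=(1/2,1/2)$: then both column sums of $\Pi$ equal $n/2$ (so $\pi_{\mathrm{min}}=O(n/K)$), $\lambda_{1}(\Pi'\Pi)=n/2=O(n/K)$, and the identifiability condition (I2) still holds, yet
\begin{align*}
\Pi'\Pi=\begin{pmatrix}1+\tfrac{n-2}{4} & \tfrac{n-2}{4}\\ \tfrac{n-2}{4} & 1+\tfrac{n-2}{4}\end{pmatrix},
\qquad
\lambda_{2}(\Pi'\Pi)=1,
\qquad
\kappa(\Pi'\Pi)=\tfrac{n}{2}.
\end{align*}
Balanced column sums do not prevent $\Pi'\Pi$ from being nearly singular when most nodes are heavily mixed. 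Since the bound of Corollary \ref{MainCor} carries the factor $\kappa^{4.5}(\Pi'\Pi)/\lambda_{K}^{1.5}(\Pi'\Pi)$, your argument as written does not deliver the stated rate.

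The paper never attempts your derivation: its proof of this corollary simply substitutes $K=O(1)$, $\pi_{\mathrm{min}}=O(n)$, $\lambda_{1}(\Pi'\Pi)=O(n)$ \emph{and} $\kappa(\Pi'\Pi)=O(1)$ into Corollary \ref{MainCor}, i.e., it treats the condition-number bound (equivalently, the lower bound $\lambda_{K}(\Pi'\Pi)=\Omega(n/K)$) as part of the regularity package borrowed from Corollary 3.1 of \cite{mao2020estimating}, not as a consequence of the balance of column sums. The repair is therefore to take $\lambda_{K}(\Pi'\Pi)=\Omega(n/K)$ (or $\kappa(\Pi'\Pi)=O(1)$) as a hypothesis rather than a derived fact; with that in hand, the rest of your computation coincides with the paper's proof.
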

Actually, in Corollary \ref{AddConditions} the assumption for the lower bound of $|\lambda_{K}(P)|$ (in Lemma \ref{rowwiseerror}) can be presented as $|\lambda_{K}(P)|\geq O(\sqrt{\frac{\mathrm{log}(n)}{\theta_{\mathrm{max}}\|\theta\|_{1}}})$. Please refer to the Remark 8 in the supplementary material for more details.
\begin{rem}\label{comparewithMixedSCORE}
	(Comparison to Theorem 2.2 in \cite{mixedSCORE}) It is easy to see that their conditions in Theorem 2.2 are our Condition (A1) and $\lambda_{K}(\Pi'\Pi)=O(\frac{n}{K})$ actually. When $K=O(1)$ and $\theta_{\mathrm{max}}\leq C\theta_{\mathrm{min}}$ (i.e., the settings in our Corollary \ref{AddConditions}), we see that the error bound in Theorem 2.2 in \cite{mixedSCORE} is $O(\frac{1}{|\lambda_{K}(P)|}\sqrt{\frac{\mathrm{log}(n)}{\theta_{\mathrm{max}}\|\theta\|_{1}})})$ (where their $\beta_{n}$ is just our $|\lambda_{K}(P)|$ actually). Therefore, the error rate for the proposed method is consistent with Mixed-SCORE \citep{mixedSCORE} for networks generated from $P$ whose $K$-th leading eigenvalue should also satisfy $|\lambda_{K}(P)|\geq O(\sqrt{\frac{\mathrm{log}(n)}{\theta_{\mathrm{max}}\|\theta\|_{1}}})$ under the settings of Corollary \ref{AddConditions}.
\end{rem}
\begin{rem}\label{comparSPACL}
	(Comparison to Theorem 3.2 in \cite{mao2020estimating}) When $\Theta=\sqrt{\rho_{0}}I$, $DCMM(n, P, \Theta, \Pi)$ degenerates to the MMSB model considered in \cite{mao2020estimating} (here, $\rho_{0}$ is the sparsity parameter). First, we'd note that as stated in  Theorem VI.1 \citep{mao2020estimating} and Table 1 \citep{lei2019unified}, \cite{mao2020estimating}'s Assumption 3.1 on $\rho_{0} n$ should be $\rho_{0} n\geq O(\mathrm{log}^{2\xi}(n))$ for $\xi>1$ instead of $\rho_{0} n\geq O(\mathrm{log}(n))$. For comparison, our requirement on $\rho_{0} n$ in our assumption (A1) is $\rho_{0} n\geq O(\mathrm{log}^{\xi}(n))$ ($\theta_{\mathrm{max}}\|\theta\|_{1}=\rho_{0}n$ under settings considered in this remark). Theorem 3.2 \citep{mao2020estimating} gives that their error bound is $O(\frac{1}{|\lambda_{K}(P)|\sqrt{\rho_{0} n}})$ under the settings of Corollary \ref{AddConditions} while our error bound for Mixed-RSC is $O(\frac{1}{|\lambda_{K}(P)|}\sqrt{\frac{\mathrm{log}(n)}{\rho_{0}n}})$. Though the error bound  $O(\frac{1}{|\lambda_{K}(P)|\sqrt{\rho_{0} n}})$  for \cite{mao2020estimating} is smaller than our bound for Mixed-RSC, \cite{mao2020estimating} needs stronger requirement on the network sparsity parameter $\rho_{0}$. Meanwhile, by Assumption 3.1 \citep{mao2020estimating}, we know that their $|\lambda_{K}(\Omega)|$ should be larger than $\sqrt{\rho_{0}n}\mathrm{log}^{\xi}(n)$. By their Lemma II.4, $|\lambda_{K}(\Omega)|$ has a lower bound $\rho_{0}|\lambda_{K}(P)|\lambda_{K}(\Pi'\Pi)$ (which is $O(\rho_{0}|\lambda_{K}(P)|n)$ under the settings of Corollary \ref{AddConditions}). Therefore, to make the requirement $|\lambda_{K}(\Omega)|\geq \sqrt{\rho_{0}n}\mathrm{log}^{\xi}(n)$ always hold, one only need  $\rho_{0}|\lambda_{K}(P)|n\geq\sqrt{\rho_{0}n}\mathrm{log}^{\xi}(n)$, which gives that Theorem 3.2 \citep{mao2020estimating} requires $|\lambda_{K}(P)|\geq O(\frac{\mathrm{log}^{\xi}(n)}{\sqrt{\rho_{0}n}})$. For comparison, ours error bound for Mixed-RSC requires $|\lambda_{K}(P)|\geq O(\sqrt{\frac{\mathrm{log}(n)}{\rho_{0}n}})$, and surely our requirement on the lower bound of $|\lambda_{K}(P)|$ is weaker than that of \cite{mao2020estimating}.
\end{rem}

Now, we consider a standard network by setting $\Theta=\sqrt{\rho_{0}}I$ and $P=\omega I_{K}+(1-\omega)I_{K}I'_{K}$ for $0<\omega<1$ (we have $\lambda_{K}(P)=\omega$) under the settings of Corollary \ref{AddConditions}. Note that when $\Theta=\sqrt{\rho_{0}}I$, we have $\Omega=\Theta\Pi P\Pi'\Theta=\Pi\rho_{0} P\Pi'=\Pi'B\Pi'$, where $B=\rho_{0} P$ and $B$ is the probability matrix now. Then the error rate in Corollary \ref{AddConditions} is $O(\frac{1}{|\lambda_{K}(P)|}\sqrt{\frac{\mathrm{log}(n)}{\rho_{0} n})}$. For convenience, set $B_{\mathrm{max}}=\mathrm{max}_{1\leq k,l\leq K}B(k,l)\equiv \rho_{0}, B_{\mathrm{min}}=\mathrm{min}_{1\leq k,l\leq K}B(k,l)\equiv \rho_{0}(1-\omega)$. Under such $P$ and settings in Corollary \ref{AddConditions}, since the error rate is $O(\frac{1}{\omega}\sqrt{\frac{\mathrm{log}(n)}{\rho_{0} n}})$, to obtain consistency estimation, $\omega$ should grow faster than $\sqrt{\frac{\mathrm{log}(n)}{\rho_{0} n}}$. Therefore, the probability gap $B_{\mathrm{max}}-B_{\mathrm{min}}=\rho_{0} \omega$ should grow faster than $\sqrt{\frac{\rho_{0}\mathrm{log}(n)}{n}}$, and the relative edge probability gap $\frac{B_{\mathrm{max}}-B_{\mathrm{min}}}{\sqrt{B_{\mathrm{max}}}}=\omega\sqrt{\rho_{0}}$ should grow faster than $\frac{\mathrm{log}(n)}{n}$. Especially, for the sparest case when $\rho_{0} n=O(\mathrm{log}^{1+2\gamma}(n))$ with $\gamma\rightarrow 0^{+}$, the probability gap should grow faster than $\frac{\mathrm{log}(n)}{n}$. Undoubtedly, this two separations are consistent with that of \cite{mixedSCORE},  since Theorem 2.2 \citep{mixedSCORE} shares the same error rate $O(\frac{1}{|\lambda_{K}(P)|}\sqrt{\frac{\mathrm{log}(n)}{\rho_{0}n}})$ for the standard network.

Next, we consider the  Erd\"os-R\'enyi (ER) random graph $G(n,p)$ \citep{erdos2011on}. To construct the ER random graph $G(n,p)$, set $\Omega=\sqrt{\rho_{0}}I, K=1$ and $\Pi$ is an $n\times 1$ vector with all entries being ones. Since $K=1$ and $P$ is assumed to have unit diagonal entries by the default condition (I1), we have $P=1$ in $G(n,p)$ and hence $\lambda_{K}(P)=1$. Then we have $\Omega=\Pi \rho P\Pi'=\Pi\rho\Pi'=\Pi p\Pi'$, i.e, $p=\rho_{0}$. Since the error rate is $O(\frac{1}{|\lambda_{K}(P)|}\sqrt{\frac{\mathrm{log}(n)}{\rho_{0} n}})=O(\sqrt{\frac{\mathrm{log}(n)}{pn}})$. For consistency estimation, we see that $p$ should grow faster than $\frac{\mathrm{log}(n)}{n}$, which is just the sharp threshold in Theorem 4.6 \citep{blum2020foundations} and the first bullet in Section 2.5 \citep{abbe2017community}. Meanwhile, since our assumption (A1) requires $\rho_{0}n\geq O(\mathrm{log}^{\xi}(n))$ for $\xi>1$, it gives that $p$ should grow faster than $\frac{\mathrm{log}(n)}{n}$ since $p=\rho_{0}$ under $G(n,p)$, which is consistent with the sharp threshold. 

\section{Simulations}\label{sec5}
In this section, a small-scale numerical study is applied to investigate the performance of our Mixed-RSC by comparing it with Mixed-SCORE \citep{mixedSCORE}, OCCAM \citep{OCCAM} GeoNMF \citep{GeoNMF} and SVM-cone-DCMMB \citep{MaoSVM}. We measure the performance of these methods by the mixed-Hamming error rate:
\begin{align*}
\mathrm{min}_{O\in\{ K\times K\mathrm{permutation~matrix}\}}\frac{1}{n}\|\hat{\Pi}O-\Pi\|_{1},
\end{align*}
where $\Pi$ and $\hat{\Pi}$ are the true and estimated mixed membership matrices respectively. Here, we also consider the permutation of labels since the measurement of error should not depend on how we label each of the K communities. For simplicity, we write the mixed-Hamming error rate as $\sum_{i=1}^{n}\|\hat{\pi}_{i}-\pi_{i}\|_{1}/n$.

For all cases, we set $n=500$ and $K=3$. Let each block own $n_{0}$ number of pure nodes for $0\leq n_{0}\leq 160$. Let the top $3n_{0}$ nodes $\{1,2, \ldots, 3n_{0}\}$ be pure and let nodes $\{3n_{0}+1, 3n_{0}+2,\ldots, 500\}$ be mixed. Assume all the mixed nodes have four different memberships $(x, x, 1-2x), (x, 1-2x, x), (1-2x, x, x)$ and $(1/3,1/3,1/3)$ with $x\in [0, 1/2)$, each with $(500-3n_{0})/4$ number of nodes. For $\rho\in(0, 1)$, the mixing matrix $P$ has unit diagonals  and off-diagonals $\rho$. For $z\geq 1$, we generate the degree parameters such that $1/\theta(i)\overset{iid}{\sim}U(1,z)$, where $U(1,z)$ denotes the uniform distribution on $[1, z]$. For all settings, we report the averaged mixed-Hamming error rate over 50 repetitions.

\texttt{Case 1: } Fix $(x,\rho, z)=(0.4, 0.4, 4)$ and let $n_{0}$ range in $\{40, 60, 80, 100, 120, 140, 160\}$.

\texttt{Case 2: } Fix $(x,n_{0},z)=(0.4,$ $100,4)$ and let $\rho$ range in $\{0, 0.05, 0.1, \ldots, 0.5\}$.

\texttt{Case 3: } Fix $(n_{0},\rho, z)=(100, 0.4, 4)$, and let $x$ range in $\{0, 0.05, \ldots, 0.5\}$.

\texttt{Case 4: } Fix $(n_{0},\rho, x)=(100, 0.4, 0.4)$, and let $z$ range in $\{1,1.5,2,\ldots,5\}$.

\begin{figure}
\centering
\subfigure[Experiment1]
{\includegraphics[width=0.45\textwidth]{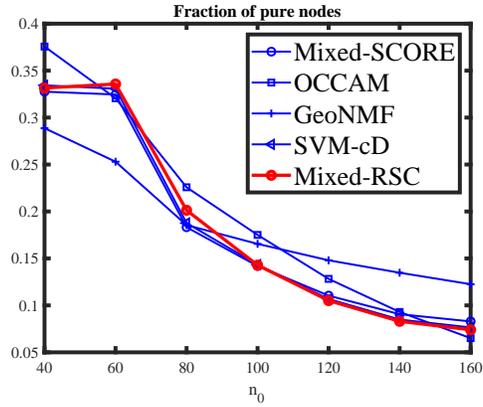}}
\subfigure[Experiment 2]
{\includegraphics[width=0.45\textwidth]{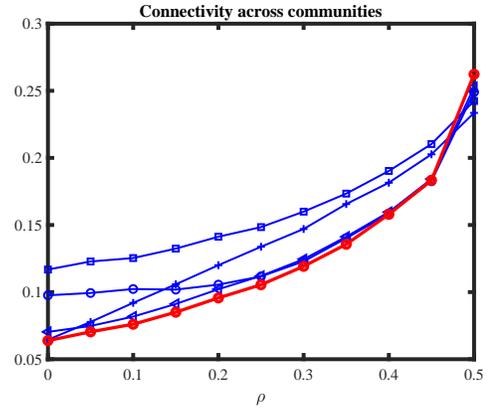}}
\subfigure[Experiment 3]
{\includegraphics[width=0.45\textwidth]{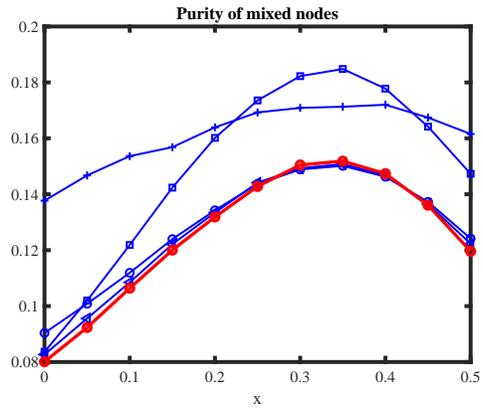}}
\subfigure[Experiment 4]
{\includegraphics[width=0.45\textwidth]{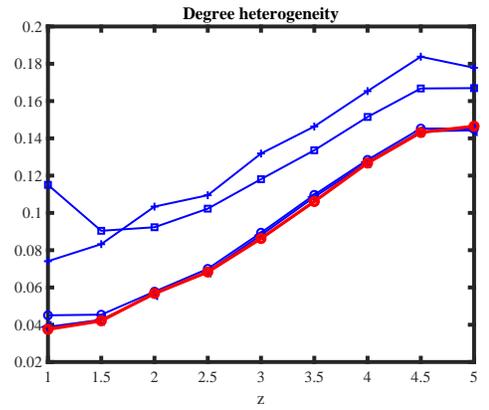}}
\caption{Estimation errors of Experiments 1-4 (y-axis: $\sum_{i=1}^{n}n^{-1}\|\hat{\pi}_{i}-\pi_{i}\|_{1}$).}
\label{EX}
\end{figure}

As is known, a larger $n_{0}$ indicates a case with higher fraction of pure nodes, thus we study how the number of pure nodes influence the performance of methods.  The numerical results are shown in the subfigure (a) in Figure \ref{EX} (note that SVM-cD is used to denote SVM-cone-DCMMSB.). From this figure we can find that all methods perform poor when $n_0\leq 60$, but when $n_0>60$ the error rates for all methods decrease rapidly. In detail, for a large $n_0$ Mixed-RSC performs similar as Mixed-SCORE and SVM-cone-DCMMSB while OCCAM and GeoNMF perform poorer than the other methods in this case.

 The results for case 2 are displayed in the subfigure (b) in Figure \ref{EX}. From this figure we can find that all methods perform poorer as $\rho$ increases. This phenomenon occurs due to the fact that a lager $\rho$ generate more edges across different communities (hence a dense network), and  more edges across different communities lead to a case that these communities tend to be in a giant community and hence a case that is more challenging to detect for any algorithms. Meanwhile, the results suggest that Mixed-RSC has similar performances as Mixed-SCORE and SVM-cone-DCMMSB, and they perform better than OCCAM and GeoNMF.

In case 3, $x$ is changed which has effect on the purity of nodes. By the setting, we can find that when $x$ increases to 1/3, these mixed nodes become less pure and they become more pure as $x$ increases further. The subfigure (c) of Figure \ref{EX} records the numerical results of this case. From the results we can see that when $x$ increase up to 1/3, the error rates for all methods increase, while they decrease when $x$ increases from 1/3 to 1/2. Thus we can make a conclusion that purity of nodes is higher, all methods perform better. Overall Mixed-RSC performs slightly better than Mixed-SCORE and SVM-cone-DCMMSB, and the three methods significantly outperform OCCAM and GeoNMF.

In case 4 we study the effect of degree heterogeneity. A larger $z$ gives smaller $\theta(i)$ for any node $i$, hence a more heterogeneous case and fewer edges generated.  The last panel of Figure \ref{EX} presents the results. We see that the error rates for almost all methods increase when the value of $z$ increases. Thus  all methods perform poor when a network has high degree heterogeneity. When we make a comparison of these five methods, we can drew a similar conclusion as in other three cases, i.e., Mixed-RSC,Mixed-SCORE and SVM-cone-DCMMSB have competitive performances and all the three methods enjoy better performances than OCCAM and GeoNMF.

\section{Real data analysis}\label{SNAPego}
The SNAP ego-networks dataset contains substantial ego-networks from three platforms Facebook, GooglePlus, and Twitter.  The dataset can be find in  http://snap.stanford.edu/data/.  Some others are also worked on this dataset, such as \cite{Mcauley2012snap, OCCAM}. We obtain the SNAP ego-networks parsed by Yuan Zhang (the first author of the OCCAM method \citep{OCCAM}).
The parsed SNAP ego-networks are slightly different from those used in \cite{OCCAM}, for readers reference, we report the following summary statistics for each network: (1) number of nodes $n$; (2) number of communities $K$; (3) average node degree $\bar{d}$ where $\bar{d}=\sum_{i=1}^{n}D(i,i)/n$; (4) density $\sum_{i,j}A(i,j)/(n(n-1))$, i.e., the overall edge probability; (5) the proportion of overlapping nodes $r_{o}$, i.e., $r_{o}=\frac{\mathrm{number~of~nodes~with~mixed~membership}}{n}$. We report the means and standard deviations of these measurements in Table \ref{dataSNAP}.
\begin{table}[h!]
\centering
\caption{Mean (SD) of summary statistics for ego-networks.}
\label{dataSNAP}
\begin{tabular}{cccccccccc}
\toprule
&\#Networks&$n$&$K$&$\bar{d}$&Density&$r_{o}$\\
\midrule
Facebook&7&236.57&3&30.61&0.15&0.009\\
&-&(228.53)&(1.15)&(29.41)&(0.058)&(0.008)\\
\hline
GooglePlus&58&433.22&2.22&66.81&0.18&0.005\\
&-&(327.70)&(0.46)&(65.2)&(0.11)&(0.005)\\
\hline
Twitter&255&60.64&2.63&17.87&0.33&0.02\\
&-&(30.77)&(0.83)&(9.97)&(0.17)&(0.008)\\
\bottomrule
\end{tabular}
\end{table}

We present the average mixed Hamming error rates over each of the social platforms and the corresponding standard deviation in Table \ref{ErrorSNAP}.
For the Facebook platform which only has 7 networks, the proposed Mixed-RSC method has smallest averaged error rate, 0.2473, which is slightly smaller than 0.2483 for  Mixed-cone-DCMMSB and 0.2496 for Mixed-SCORE. OCCAM has the largest averaged error rate. When we turn to GooglePlus networks we can find that the Mixed-RSC performs much better than other four methods., and OCCAM, GeoNMF and SVM-cone-DCMMSB have similar results. The averaged error rate for Mixed-RSC is 0.3182, while error rates for other methods are all larger than 0.35, and it for Mixed-SCORE even reaches to 0.3766. The Twitter has a large number of networks, 255.  The averaged error rates for the proposed method is 0.2601 which is the smallest value among all compared methods. OCCAM and GeoNMF have similar results, 0.2864 and 0.2858 respectively. Mixed-SCORE has the largest averaged error rate for Twitter's networks.
In all, Mixed-RSC always outperforms Mixed-SCORE, OCCAM, GeoNMF and SVM-cone-DCMMSB for all the networks in these three platforms. From Table \ref{dataSNAP}, we see that $\bar{d}$ is much smaller than the network size $n$, suggesting that most SNAP-ego networks are sparse. Our Mixed-RSC enjoys better performances on empirical networks because it is designed based on regularized Laplacian matrix which can successfully detect sparse networks with a good choice of $\tau$, as the discussion after Theorem \ref{Main}.

\begin{table}
\centering
\caption{Mean (SD) of mixed-Hamming error rates for ego-networks.}
\label{ErrorSNAP}
\begin{tabular}{cccccccccc}
\toprule
&Facebook&GooglePlus&Twitter\\
\midrule
Mixed-SCORE&0.2496 (0.1322)&0.3766 (0.1053)&0.3088 (0.1296)\\
OCCAM&0.2610 (0.1367)&0.3564 (0.1210)&0.2864 (0.1406)\\
GeoNMF&0.2537 (0.1266)&0.3520 (0.1078)&0.2858 (0.1292)\\
SVM-cone-DCMMSB&0.2483 (0.1496)&0.3563 0.1047)&0.2985 (0.1327)\\
\hline
Mixed-RSC&\textbf{0.2473} (0.1340)&\textbf{0.3182} (0.1259)&\textbf{0.2601} (0.1378)\\
\bottomrule
\end{tabular}
\end{table}
\section{Conclusion}\label{sec7}
In this paper, we propose a regularized spectral clustering method Mixed-RSC to mixed membership community detection under the DCMM model and study the impact of regularized Laplacian matrix on spectral clustering with the proposed method. We show the consistency of the estimation of Mixed-RSC under mild conditions. By analyzing the theoretical results, we find the optimal choice of the regularization parameter $\tau$ for our Mixed-RSC. 
We also compared our theoretical results with two previous works \citep{mixedSCORE, mao2020estimating}, and find that our error bound is consistent with \cite{mixedSCORE} and competitive with \cite{mao2020estimating}.
Furthermore, our theoretical results match the classical separation condition of a network with two equal size clusters and the sharp threshold of the Erd\"os-R\'enyi  random graph $G(n,p)$. Numerically, Mixed-RSC enjoys competitive performances with the benchmark methods in simulated networks and has excellent performances in empirical data.
\section*{Acknowledgements}
The authors would like to thank Dr. Zhang Yuan  (the first author of the OCCAM method \citep{OCCAM}) for sharing the SNAP ego-networks with us.

\bibliographystyle{agsm} 
\bibliography{reference}    

\section*{Supplementary material}
In this section, we provide the technical proofs of lemmas and theorems in the main manuscript. 
\appendix
\section{Ideal Cone}
\subsection{Proof of Lemma 2.1}
\begin{proof}
	Before we present the proof of Lemma 2.1, first we give one simple lemma.
	\begin{lem}\label{PiX}
		For  any membership matrix $\Pi\in\mathbb{R}^{n\times K}$ whose $i$-th row $[\Pi(i,1), \Pi(i,2), \ldots, \Pi(i,K)]$ is the PMF of node $i$ for $1\leq i\leq n$, such that each  community has at least one pure node, then for any $X,\tilde{X}\in\mathbb{R}^{K\times K}$, if $\Pi X=\Pi\tilde{X}$, we have $X=\tilde{X}$.
	\end{lem}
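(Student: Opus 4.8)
The plan is to exploit the pure-node hypothesis to extract an identity submatrix from $\Pi$, which immediately forces full column rank. Concretely, the assumption that each community has at least one pure node means that for every $k\in\{1,\dots,K\}$ there is an index $i_k$ with $\Pi(i_k,:)=e_k'$, the $k$-th standard basis row vector. Collecting these indices into a set $\mathcal{I}=\{i_1,\dots,i_K\}$, the corresponding row submatrix satisfies $\Pi(\mathcal{I},:)=I_K$ after ordering the rows by community label.

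With this observation the argument is short. First I would set $D=X-\tilde X\in\mathbb{R}^{K\times K}$, so that the hypothesis $\Pi X=\Pi\tilde X$ reads equivalently as $\Pi D=0$. Restricting this matrix identity to the rows indexed by $\mathcal{I}$ gives $\Pi(\mathcal{I},:)\,D=0$, i.e.\ $I_K D=0$, which is precisely $D=0$. More explicitly, the $k$-th row of $\Pi D$ associated with the pure node $i_k$ is $e_k'D$, the $k$-th row of $D$; since every such row vanishes, all rows of $D$ are zero. Hence $D=0$, that is $X=\tilde X$.

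Equivalently, one may phrase the conclusion through rank: because $\Pi$ contains $I_K$ as a submatrix, $\mathrm{rank}(\Pi)=K$ and the linear map $Z\mapsto \Pi Z$ is injective on $\mathbb{R}^{K\times K}$, so $\Pi X=\Pi\tilde X$ cannot hold unless $X=\tilde X$. I would likely present the direct row-extraction version since it is self-contained and avoids invoking a rank lemma.

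There is no genuine obstacle here; the only point requiring care is the bookkeeping that translates ``each community has at least one pure node'' into the existence of the identity submatrix $\Pi(\mathcal{I},:)=I_K$, and in particular making sure the chosen pure nodes are one per community so that the $K$ extracted rows are exactly $e_1',\dots,e_K'$ in some order. Once that is fixed, the cancellation is immediate. This lemma is the elementary cancellation fact that will later justify identifying the matrix $B$ in Lemma \ref{ExistB}, where an equality of the form $\Pi X=\Pi\tilde X$ is used to pin down uniqueness.
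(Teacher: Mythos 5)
Your proof is correct and uses essentially the same idea as the paper: the pure-node assumption lets you pull out rows of $\Pi$ equal to standard basis vectors, so the corresponding rows of $\Pi X$ and $\Pi\tilde{X}$ are exactly the rows of $X$ and $\tilde{X}$, forcing $X=\tilde{X}$. Your phrasing via $D=X-\tilde{X}$ and the identity submatrix $\Pi(\mathcal{I},:)=I_K$ is only a cosmetic repackaging of the paper's row-by-row comparison.
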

	\begin{proof}
		Assume that node $i$ is a pure node such that $\Pi(i,k)=1$, then the $i$-th row of $\Pi X$ is $[X(k,1), X(k,2), \ldots, X(k,K)]$ (i.e., the $i$-th row of $\Pi X$ is the $k$-th row of $X$ if $\Pi(i,k)=1$); similarly, the $i$-th row of $\Pi \tilde{X}$ is the $k$-th row of $\tilde{X}$. Since $\Pi X=\Pi\tilde{X}$, we have $[X(k,1), X(k,2), \ldots, X(k,K)]=[\tilde{X}(k,1), \tilde{X}(k,2), \ldots, \tilde{X}(k,K)]$ for $1\leq k\leq K$, hence $X=\tilde{X}$.
	\end{proof}
	Since $\mathscr{L}_{\tau}V=VE$ and $\mathscr{L}_{\tau}=\tilde{\Theta}\Pi P\Pi'\tilde{\Theta}$, if we assume that there exists $B$ such that $V=\tilde{\Theta}\Pi B$, then we have
	\begin{align*}
	\mathscr{L}_{\tau}\tilde{\Theta}\Pi B&=\tilde{\Theta}\Pi BE\\
	&\Downarrow\\
	\tilde{\Theta}\Pi FB&=\tilde{\Theta}\Pi BE\\
	&\Downarrow\\
	\Pi FB&=\Pi BE
	\end{align*}
	which gives that $\Pi(FB-BE)=0$, since we assume that each row community has at least one pure node,  by Lemma \ref{PiX}, we have $FB=BE$. Therefore $B$ exists and its $i$-th column is the right eigenvector of $F$, and $\lambda_{k}$ is the $k$-th eigenvalue of $F$ for $1\leq k\leq K$. Further more, if there exists another $\tilde{B}$ such that $V=\tilde{\Theta}\Pi B=\tilde{\Theta}\Pi\tilde{B}$, then we have $\Pi(B-\tilde{B})=0$, since each community has at least one pure node, by Lemma \ref{PiX}, we have $\tilde{B}=B$, hence $B$ is unique. Note that, for $1\leq k\leq K$, though the $k$-th column of $B$ is the right eigenvector of $F$, it may not be unit-norm.
	
	Since $\mathrm{rank}(P)=K$, we have $\mathscr{L}_{\tau}=\tilde{\Theta}\Pi P\Pi'\tilde{\Theta}=VEV'$. Without loss of generality, reorder the nodes such that $\Pi(\mathcal{I},:)=I$, then we have $V(\mathcal{I},:)EV'=\tilde{\Theta}(\mathcal{I},\mathcal{I})P \Pi'\tilde{\Theta}$. Now $VE=\mathscr{L}_{\tau}V=\tilde{\Theta}\Pi P\Pi'\tilde{\Theta}V=\tilde{\Theta}\Pi (P\Pi'\tilde{\Theta})V=\tilde{\Theta}\Pi (\tilde{\Theta}^{-1}(\mathcal{I},\mathcal{I})V(\mathcal{I},:)EV')V= \tilde{\Theta}\Pi \tilde{\Theta}^{-1}(\mathcal{I},\mathcal{I})V(\mathcal{I},:)E$, right multiplying $E^{-1}$ gives $V=\tilde{\Theta}\Pi\tilde{\Theta}^{-1}(\mathcal{I},\mathcal{I})V(\mathcal{I},:)$, i.e., $B$ can also be written as $B=\tilde{\Theta}^{-1}(\mathcal{I},\mathcal{I})V(\mathcal{I},:)$. And $B$ is full rank surely.
\end{proof}

\subsection{Proof of Lemma 2.3}
\begin{proof}
	For convenience, set $M_{1}=\Pi\tilde{\Theta}^{-1}(\mathcal{I},\mathcal{I})V(\mathcal{I},:)$, since
	\begin{align*}
	V=\tilde{\Theta}\Pi\tilde{\Theta}^{-1}(\mathcal{I},\mathcal{I})V(\mathcal{I},:),
	\end{align*}
	we have $V=\tilde{\Theta}M_{1}$, which gives that $V(i,:)=\tilde{\theta}_{i}M_{1}(i,:)$. Therefore, $V_{*,1}(i,:)=\frac{V(i,:)}{\|V(i,:)\|_{F}}=\frac{M_{1}(i,:)}{\|M_{1}(i,:)\|_{F}}$, which gives that
	\begin{flalign*}
	V_{*,1}&=\begin{bmatrix}
	\tiny
	M_{1}(1,:)/\|M_{1}(1,:)\|_{F}\\
	M_{1}(2,:)/\|M_{1}(2,:)\|_{F}\\
	\vdots\\
	M_{1}(n,:)/\|M_{1}(n,:)\|_{F}
	\end{bmatrix}=\begin{bmatrix}
	\frac{1}{\|M_{1}(1,:)\|_{F}} &  & & \\
	& \frac{1}{\|M_{1}(2,:)\|_{F}}& &\\
	& & \ddots&\\
	&&&\frac{1}{\|M_{1}(n,:)\|_{F}}
	\end{bmatrix}M_{1}\\
	&=\begin{bmatrix}
	\frac{1}{\|M_{1}(1,:)\|_{F}} &  & & \\
	& \frac{1}{\|M_{1}(2,:)\|_{F}}& &\\
	& & \ddots&\\
	&&&\frac{1}{\|M_{1}(n,:)\|_{F}}
	\end{bmatrix}\Pi \tilde{\Theta}^{-1}(\mathcal{I},\mathcal{I})V(\mathcal{I},:)\\
	&=\begin{bmatrix}
	\Pi(1,:)/\|M_{1}(1,:)\|_{F}\\
	\Pi(2,:)/\|M_{1}(2,:)\|_{F}\\
	\vdots\\
	\Pi(n,:)/\|M_{1}(n,:)\|_{F}
	\end{bmatrix}\tilde{\Theta}^{-1}(\mathcal{I},\mathcal{I})V(\mathcal{I},:)\\
	&=\begin{bmatrix}
	\Pi(1,:)/\|M_{1}(1,:)\|_{F}\\
	\Pi(2,:)/\|M_{1}(2,:)\|_{F}\\
	\vdots\\
	\Pi(n,:)/\|M_{1}(n,:)\|_{F}
	\end{bmatrix}\tilde{\Theta}^{-1}(\mathcal{I},\mathcal{I})N_{V}^{-1}(\mathcal{I},\mathcal{I})N_{V}(\mathcal{I},\mathcal{I})V(\mathcal{I},:)\\
	&=\begin{bmatrix}
	\Pi(1,:)/\|M_{1}(1,:)\|_{F}\\
	\Pi(2,:)/\|M_{1}(2,:)\|_{F}\\
	\vdots\\
	\Pi(n,:)/\|M_{1}(n,:)\|_{F}
	\end{bmatrix}\tilde{\Theta}^{-1}(\mathcal{I},\mathcal{I})N_{V}^{-1}(\mathcal{I},\mathcal{I})V_{*}(\mathcal{I},:).
	\end{flalign*}
	Therefore, we have
	\begin{align*}
	Y_{1}=\begin{bmatrix}
	\Pi(1,:)/\|M_{1}(1,:)\|_{F}\\
	\Pi(2,:)/\|M_{1}(2,:)\|_{F}\\
	\vdots\\
	\Pi(n,:)/\|M_{1}(n,:)\|_{F}
	\end{bmatrix}\tilde{\Theta}^{-1}(\mathcal{I},\mathcal{I})N_{V}^{-1}(\mathcal{I},\mathcal{I})=N_{M_{1}}\Pi\tilde{\Theta}^{-1}(\mathcal{I},\mathcal{I})N_{V}^{-1}(\mathcal{I},\mathcal{I}),
	\end{align*}
	where $N_{M_{1}}=\begin{bmatrix}
	\frac{1}{\|M_{1}(1,:)\|_{F}} &  & & \\
	& \frac{1}{\|M_{1}(2,:)\|_{F}}& &\\
	& & \ddots&\\
	&&&\frac{1}{\|M_{1}(n,:)\|_{F}}
	\end{bmatrix}$.
	Sure, all entries of $Y_{1}$ are nonnegative. And since we assume that each community has at least one pure node, no row of $Y_{1}$ is 0.
	
	Then we prove that $V_{*,1}(i,:)=V_{*,1}(j,:)$ when $\Pi(i,:)=\Pi(j,:)$. For $1\leq i\leq n$, we have
	\begin{flalign*}
	V_{*,1}(i,:)&=e'_{i}V_{*,1}=e'_{i}\begin{bmatrix}
	\frac{1}{\|M_{1}(1,:)\|_{F}} &  & & \\
	& \frac{1}{\|M_{1}(2,:)\|_{F}}& &\\
	& & \ddots&\\
	&&&\frac{1}{\|M_{1}(n,:)\|_{F}}
	\end{bmatrix}M_{1}=\frac{1}{\|M_{1}(i,:)\|_{F}}e'_{i}M_{1}\\
	&=\frac{1}{\|e'_{i}M_{1}\|_{F}}e'_{i}M_{1}=\frac{1}{\|e'_{i}\Pi\tilde{\Theta}^{-1}(\mathcal{I},\mathcal{I})V(\mathcal{I},:)\|_{F}}e'_{i}\Pi\tilde{\Theta}^{-1}(\mathcal{I},\mathcal{I})V(\mathcal{I},:)\\
	&=\frac{1}{\|\Pi(i,:)\tilde{\Theta}^{-1}(\mathcal{I},\mathcal{I})V(\mathcal{I},:)\|_{F}}\Pi(i,:)\tilde{\Theta}^{-1}(\mathcal{I},\mathcal{I})V(\mathcal{I},:),
	\end{flalign*}
	which gives that if $\Pi(j,:)=\Pi(i,:)$, we have $V_{*,1}(i,:)=V_{*,1}(j,:)$.
\end{proof}
\subsection{Proof of Lemma 2.6}
\begin{proof}
	Since $I=V'V=V'(\mathcal{I},:)\tilde{\Theta}^{-1}(\mathcal{I},\mathcal{I})\Pi'\tilde{\Theta}^{2}\Pi\tilde{\Theta}^{-1}(\mathcal{I},\mathcal{I})V(\mathcal{I},:)$ and $\mathrm{rank}(V(\mathcal{I},:))=K$ (i.e., the inverse of $V(\mathcal{I},:)$ exists), we have $(V(\mathcal{I},:)V'(\mathcal{I},:))^{-1}=\tilde{\Theta}^{-1}(\mathcal{I},\mathcal{I})\Pi'\tilde{\Theta}^{2}\Pi\tilde{\Theta}^{-1}(\mathcal{I},\mathcal{I})$.
	
	Since $V_{*,1}(\mathcal{I},:)=N_{V}(\mathcal{I},\mathcal{I})V(\mathcal{I},:)$, we have
	\begin{align*}
	(V_{*,1}(\mathcal{I},:)V'_{*,1}(\mathcal{I},:))^{-1}=N_{V}^{-1}(\mathcal{I},\mathcal{I})\tilde{\Theta}^{-1}(\mathcal{I},\mathcal{I})\Pi'\tilde{\Theta}^{2}\Pi\tilde{\Theta}^{-1}(\mathcal{I},\mathcal{I})N_{V}^{-1}(\mathcal{I},\mathcal{I}).
	\end{align*}
	Since all entries of $N_{V}^{-1}(\mathcal{I},\mathcal{I}), \Pi, \tilde{\Theta}$ and nonnegative and $N,\tilde{\Theta}$ are diagonal matrices, we see that all entries of $(V_{*,1}(\mathcal{I},:)V'_{*,1}(\mathcal{I},:))^{-1}$ are nonnegative and its diagonal entries are strictly positive, hence we have $(V_{*,1}(\mathcal{I},:)V'_{*,1}(\mathcal{I},:))^{-1}\mathbf{1}>0$.
\end{proof}
\subsection{Proof of Lemma 2.8}
\begin{proof}
	Set $M_{2}=\Pi\tilde{\Theta}^{-1}(\mathcal{I},\mathcal{I})V_{2}(\mathcal{I},:)$. Since $V_{2}=\tilde{\Theta}\Pi\tilde{\Theta}^{-1}(\mathcal{I},\mathcal{I})V_{2}(\mathcal{I},:)$,
	we have $V_{2}=\tilde{\Theta}M_{2}$.
	Follow a similar proof of Lemma 2.3, we have
	$Y_{2}=N_{M_{2}}\Pi\tilde{\Theta}^{-1}(\mathcal{I},\mathcal{I})N_{V_{2}}^{-1}(\mathcal{I},\mathcal{I})$,
	where $N_{M_{2}}$ is a $n\times n$ diagonal matrix whose $i$-th diagonal entry is $\frac{1}{\|M_{2}(i,:)\|_{F}}$. Meanwhile, all entries of $Y_{2}$ are nonnegative and no row of $Y_{2}$ is 0. The last statement can be proved easily by following similar proof as the one in Lemma 2.1 and we omit it here.
\end{proof}
\subsection{Proof of Lemma 2.9}
\begin{proof}
	For $1\leq i\leq n$, by basic algebra, we have $V_{2}(i,:)=(VV')(i,:)=V(i,:)V'$, which gives that $N_{V_{2}}(i,i)=\frac{1}{\|V_{2}(i,:)\|_{F}}=\frac{1}{\|V(i,:)V'\|_{F}}=\frac{1}{\|V(i,:)\|_{F}}$ where the last equality holds by Lemma A.1 in \cite{yu2015a}. Hence, we have $N_{V}\equiv N_{V_{2}}$. Then, by basic algebra, we have $V_{*,2}(\mathcal{I},:)V=N_{V_{2}}(\mathcal{I},\mathcal{I})V_{2}V=N_{V}(\mathcal{I},\mathcal{I})VV'V=N_{V}(\mathcal{I},\mathcal{I})V\equiv V_{*,1}(\mathcal{I},:)$. By basic algebra, we have $V_{2}(\mathcal{I},:)=(VV')(\mathcal{I},:)=V(\mathcal{I},:)V'$, which gives  $V_{*,2}(\mathcal{I},:)=N_{V_{2}}(\mathcal{I},\mathcal{I})V_{2}(\mathcal{I},:)=N_{V}(\mathcal{I},\mathcal{I})V(\mathcal{I},:)V'\equiv V_{*,1}(\mathcal{I},:)V'$. Then we have  $V_{*,2}(\mathcal{I},:)V'_{*,2}(\mathcal{I},:)=V_{*,1}(\mathcal{I},:)V'VV'_{*,1}(\mathcal{I},:)\equiv V_{*,1}(\mathcal{I},:)V'_{*,1}(\mathcal{I},:)$. Meanwhile, we also have $V_{*,2}=N_{V_{2}}V_{2}=N_{V}VV'=V_{*,1}V'$. Based on the above equalities, we have
	\begin{align*}
	&Y_{2}=V_{*,2}V'_{*,2}(\mathcal{I},:)(V_{*,2}(\mathcal{I},:)V'_{*,2}(\mathcal{I},:))^{-1}=V_{*,1}V'VV'_{*,1}(\mathcal{I},:)(V_{*,1}(\mathcal{I},:)V'_{*,1}(\mathcal{I},:))^{-1}\equiv Y_{1},\\
	&Y_{\bullet,2}=V_{2}V'_{*,2}(\mathcal{I},:)(V_{*,2}(\mathcal{I},:)V'_{*,2}(\mathcal{I},:))^{-1}=VV'VV'_{*,1}(\mathcal{I},:)(V_{*,1}(\mathcal{I},:)V'_{*,1}(\mathcal{I},:))^{-1}\equiv Y_{\bullet,1},\\
	&J_{2}=\sqrt{\mathrm{diag}(N_{V_{2}}(\mathcal{I},\mathcal{I})V(\mathcal{I},:))EV'(\mathcal{I},:)N_{V_{2}}(\mathcal{I},\mathcal{I})}\\
	&=\sqrt{\mathrm{diag}(N_{V}(\mathcal{I},\mathcal{I})V(\mathcal{I},:))EV'(\mathcal{I},:)N_{V}(\mathcal{I},\mathcal{I})}\equiv J_{1}.
	\end{align*}
	Since $Z_{1}=Y_{\bullet,1}J_{1}, Z_{2}=Y_{\bullet,2}J_{2}$, we have $Z_{1}\equiv Z_{2}$.  Meanwhile, note that $M_{1}=\Pi\tilde{\Theta}^{-1}(\mathcal{I},\mathcal{I})V_{1}(\mathcal{I},:)\in\mathbb{R}^{n\times K}, M_{2}=\Pi\tilde{\Theta}^{-1}(\mathcal{I},\mathcal{I})V_{2}(\mathcal{I},:)\in\mathbb{R}^{n\times n}$ gives $M_{1}\neq M_{2}$, we still have $N_{M_{1}}\equiv N_{M_{2}}$ based on the fact that $N_{M_{2}}(i,i)=\frac{1}{\|M_{2}(i,:)\|_{F}}=\frac{1}{\|e'_{i}M_{2}\|_{F}}=\frac{1}{\|e'_{i}\Pi\tilde{\Theta}^{-1}(\mathcal{I},\mathcal{I})V_{2}(\mathcal{I},:)\|_{F}}=\frac{1}{\|e'_{i}\Pi\tilde{\Theta}^{-1}(\mathcal{I},\mathcal{I})V(\mathcal{I},:)V'\|_{F}}\equiv N_{M_{1}}(i,i)$ for $1\leq i\leq n$.

	Similarly, we have $N_{\hat{V}}\equiv N_{\hat{V}_{2}}$, where $N_{\hat{V}}$ is the diagonal matrix such that $\hat{V}_{*,1}=N_{\hat{V}}\hat{V}$. By Lemma G.1 in \cite{MaoSVM}, the outputs of the SVM-cone algorithm using $\hat{V}_{*,1}$ and $\hat{V}_{*,2}$ as inputs are same, therefore we have $\mathcal{\hat{I}}_{1}\equiv \mathcal{\hat{I}}_{2}$. Then, follow a similar analysis as that of the ideal case, for the empirical case, we have $ \hat{V}_{*,1}(\hat{\mathcal{I}}_{1},:)\hat{V}'_{*,1}(\hat{\mathcal{I}}_{1},:)\equiv \hat{V}_{*,2}(\hat{\mathcal{I}}_{2},:)\hat{V}'_{*,2}(\hat{\mathcal{I}}_{2},:), \hat{Y}_{1}\equiv\hat{Y}_{2}, \hat{Y}_{\bullet,2}\equiv\hat{Y}_{\bullet,1}, \hat{J}_{2}\equiv\hat{J}_{1}, \hat{Z}_{2}\equiv\hat{Z}_{1}, \hat{\Pi}_{1}\equiv\hat{\Pi}_{2}$.
\end{proof}

\section{Theoretical properties for Mixed-RSC}\label{AppendixCommon}
Lemma \ref{P1} provides a further study on the Ideal Cone given in Lemma 2.3, it shows that $V_{*,1}(i,:)$ for Mixed-RSC can be written as a scaled convex combination of the $K$ rows of $V_{*,1}(\mathcal{I},:)$. Lemma \ref{P1} is consistent with Lemma A.1. in \cite{MaoSVM}. Meanwhile, Lemma \ref{P1} is one the reasons that the SVM-cone algorithm (i.e, Algorithm \ref{alg:SVMcone}) can return the corner matrix $V_{*,1}(\mathcal{I},:)$ when the inputs are $V_{*,1}$ and $K$ in the SVM-cone algorithm, for detail, refer to Appendix \ref{OneClassSVMandSVMcone}.
\begin{lem}\label{P1}
	Under $DCMM(n,P,\Theta,\Pi)$, for $1\leq i\leq n$, $V_{*,1}(i,:)$ can be written as $V_{*,1}(i,:)=r_{1}(i)\Phi_{1}(i,:)V_{*,1}(\mathcal{I},:)$, where $r_{1}(i)\geq 1$. Meanwhile, $r_{1}(i)=1$ and $\Phi_{1}(i,:)=e'_{k}$ if $i$ is a pure node such that $\Pi(i,k)=1$. Similarly, $V_{*,2}(i,:)$ can be written as $V_{*,2}(i,:)=r_{2}(i)\Phi_{2}(i,:)V_{*,2}(\mathcal{I},:)$, where $r_{2}(i)\geq 1$. Meanwhile, $r_{2}(i)=1$ and $\Phi_{2}(i,:)=e'_{k}$ if $\Pi(i,k)=1$.
\end{lem}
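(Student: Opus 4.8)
The plan is to read off the weights $r_{1}(i)$ and $\Phi_{1}(i,:)$ directly from the Ideal Cone factorization of Lemma \ref{IdealCone} and then verify the scaling bound $r_{1}(i)\geq 1$ by the triangle inequality. Recall from Lemma \ref{IdealCone} that $V_{*,1}(i,:)=Y_{1}(i,:)V_{*,1}(\mathcal{I},:)$ with $Y_{1}=N_{M_{1}}\Pi\tilde{\Theta}^{-1}(\mathcal{I},\mathcal{I})N_{V}^{-1}(\mathcal{I},\mathcal{I})$ nonnegative and with no zero row. I would therefore define $r_{1}(i)=\|Y_{1}(i,:)\|_{1}$ and $\Phi_{1}(i,:)=Y_{1}(i,:)/r_{1}(i)$; since $Y_{1}(i,:)\geq 0$ and is nonzero, $\Phi_{1}(i,:)$ is a genuine probability (convex-combination) vector and $V_{*,1}(i,:)=r_{1}(i)\Phi_{1}(i,:)V_{*,1}(\mathcal{I},:)$ holds by construction. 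The content of the lemma then reduces to (a) showing $r_{1}(i)\geq 1$, and (b) identifying $r_{1}(i)$ and $\Phi_{1}(i,:)$ at pure nodes.

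The key step is to put $Y_{1}(i,:)$ into a transparent form. Using $B=\tilde{\Theta}^{-1}(\mathcal{I},\mathcal{I})V(\mathcal{I},:)$ from Lemma \ref{ExistB}, so that $M_{1}=\Pi B$ and $N_{M_{1}}(i,i)=1/\|\Pi(i,:)B\|_{F}$, together with the fact that for the $k$-th pure node $\mathcal{I}_{k}$ one has $V(\mathcal{I}_{k},:)=\tilde{\theta}(\mathcal{I}_{k})B(k,:)$ and hence $\|V(\mathcal{I}_{k},:)\|_{F}=\tilde{\theta}(\mathcal{I}_{k})\|B(k,:)\|_{F}$, the diagonal factor simplifies to $\tilde{\Theta}^{-1}(\mathcal{I},\mathcal{I})N_{V}^{-1}(\mathcal{I},\mathcal{I})=\mathrm{diag}(\|B(1,:)\|_{F},\ldots,\|B(K,:)\|_{F})$. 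Substituting yields the clean entrywise expression
\begin{align*}
Y_{1}(i,k)=\frac{\Pi(i,k)\,\|B(k,:)\|_{F}}{\|\Pi(i,:)B\|_{F}},\qquad r_{1}(i)=\frac{\sum_{k=1}^{K}\Pi(i,k)\|B(k,:)\|_{F}}{\|\Pi(i,:)B\|_{F}}.
\end{align*}
Because $\Pi(i,:)B=\sum_{k}\Pi(i,k)B(k,:)$ is a nonnegative combination, the triangle inequality gives $\|\Pi(i,:)B\|_{F}\leq\sum_{k}\Pi(i,k)\|B(k,:)\|_{F}$, i.e. $r_{1}(i)\geq 1$, which is exactly the required scaling bound.

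For a pure node $i$ with $\Pi(i,k)=1$ the numerator and denominator of $r_{1}(i)$ both collapse to $\|B(k,:)\|_{F}$, so $r_{1}(i)=1$, and $Y_{1}(i,l)=\delta_{kl}$ after the same collapse, so $\Phi_{1}(i,:)=Y_{1}(i,:)=e'_{k}$, giving the pure-node claim. The $V_{*,2}$ statement then needs no separate work: by Lemma \ref{Equivalence} we have $V_{*,2}=V_{*,1}V'$ and $V_{*,2}(\mathcal{I},:)=V_{*,1}(\mathcal{I},:)V'$, so right-multiplying $V_{*,1}(i,:)=r_{1}(i)\Phi_{1}(i,:)V_{*,1}(\mathcal{I},:)$ by $V'$ produces $V_{*,2}(i,:)=r_{1}(i)\Phi_{1}(i,:)V_{*,2}(\mathcal{I},:)$. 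Taking $r_{2}(i)=r_{1}(i)$ and $\Phi_{2}(i,:)=\Phi_{1}(i,:)$ transfers both $r_{2}(i)\geq 1$ and the pure-node identities verbatim.

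The only real obstacle is the algebraic simplification of the diagonal factor $\tilde{\Theta}^{-1}(\mathcal{I},\mathcal{I})N_{V}^{-1}(\mathcal{I},\mathcal{I})$ to $\mathrm{diag}(\|B(k,:)\|_{F})$; once this reveals that the normalizing scalar is $1/\|\Pi(i,:)B\|_{F}$ while the row entries carry the factors $\|B(k,:)\|_{F}$, the bound $r_{1}(i)\geq 1$ is just convexity of the Frobenius norm. I expect no further subtlety beyond keeping the indexing of $\mathcal{I}$ (one pure node per community) consistent throughout.
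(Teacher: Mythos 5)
Your proposal is correct and follows essentially the same route as the paper's own proof: both read $r_{1}(i)=Y_{1}(i,:)\mathbf{1}$ and $\Phi_{1}(i,:)=Y_{1}(i,:)/(Y_{1}(i,:)\mathbf{1})$ off the Ideal Cone factorization, simplify $\tilde{\Theta}^{-1}(\mathcal{I},\mathcal{I})N_{V}^{-1}(\mathcal{I},\mathcal{I})$ so that $r_{1}(i)=\sum_{k}\Pi(i,k)\|B(k,:)\|_{F}/\|\Pi(i,:)B\|_{F}$, and conclude by the triangle inequality together with the pure-node collapse. The only (harmless) difference is cosmetic: where the paper repeats the argument for $V_{*,2}$, you transfer it in one line via $V_{*,2}=V_{*,1}V'$ from the equivalence lemma, which is a slightly cleaner finish.
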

Lemma \ref{P2} is powerful to bound the behaviors of $\|V\|_{2\rightarrow\infty}$ and $\|V_{2}\|_{2\rightarrow\infty}$, and the result in Lemma \ref{P2} is called as the delocalization of population eigenvectors in Lemma 3.2 in \cite{mao2020estimating}.
\begin{lem}\label{P2}
	Under $DCMM(n,P,\Theta,\Pi)$, we have
	\begin{align*}
	\frac{\tilde{\theta}_{\mathrm{min}}}{\tilde{\theta}_{\mathrm{max}}\sqrt{K\lambda_{1}(\Pi'\Pi)}}\leq \|V(i,:)\|_{F}\leq\frac{\tilde{\theta}_{\mathrm{max}}}{\tilde{\theta}_{\mathrm{min}}\sqrt{\lambda_{K}(\Pi'\Pi)}},
	1\leq i\leq n.
	\end{align*}
\end{lem}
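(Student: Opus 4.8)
The plan is to reduce everything to a quadratic form in the inverse of $G:=\Pi'\tilde{\Theta}^{2}\Pi$ and then sandwich $G$ by scalar multiples of $\Pi'\Pi$. First I would invoke Lemma \ref{ExistB} together with the identity $V(i,:)=\tilde{\theta}(i)\Pi(i,:)B$ noted right after Remark \ref{IndexFix1}, so that
\begin{align*}
\|V(i,:)\|_{F}^{2}=\tilde{\theta}(i)^{2}\,\Pi(i,:)\,BB'\,\Pi(i,:)'.
\end{align*}
The key observation is that $BB'$ can be eliminated: since $V'V=I$ and $V=\tilde{\Theta}\Pi B$, we have $B'(\Pi'\tilde{\Theta}^{2}\Pi)B=I$, i.e. $B'GB=I$. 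Because $B$ is a full-rank $K\times K$ matrix (Lemma \ref{ExistB}) and $G$ is invertible, multiplying on the left by $(B')^{-1}$ and on the right by $B^{-1}$ gives $G=(BB')^{-1}$, hence $BB'=G^{-1}$. This yields the clean expression
\begin{align*}
\|V(i,:)\|_{F}^{2}=\tilde{\theta}(i)^{2}\,\Pi(i,:)\,(\Pi'\tilde{\Theta}^{2}\Pi)^{-1}\,\Pi(i,:)'.
\end{align*}

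Next I would sandwich $G$ in the positive semidefinite (Loewner) order. Since the diagonal matrix $\tilde{\Theta}^{2}$ has entries in $[\tilde{\theta}_{\mathrm{min}}^{2},\tilde{\theta}_{\mathrm{max}}^{2}]$, we have $\tilde{\theta}_{\mathrm{min}}^{2}\,\Pi'\Pi\preceq G\preceq\tilde{\theta}_{\mathrm{max}}^{2}\,\Pi'\Pi$, and inverting reverses the order to give $\tilde{\theta}_{\mathrm{max}}^{-2}(\Pi'\Pi)^{-1}\preceq G^{-1}\preceq\tilde{\theta}_{\mathrm{min}}^{-2}(\Pi'\Pi)^{-1}$. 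Applying these to the quadratic form and using the eigenvalue bounds $\lambda_{K}^{-1}(\Pi'\Pi)$ and $\lambda_{1}^{-1}(\Pi'\Pi)$ for the extreme eigenvalues of $(\Pi'\Pi)^{-1}$, I obtain
\begin{align*}
\frac{\|\Pi(i,:)\|_{F}^{2}}{\tilde{\theta}_{\mathrm{max}}^{2}\lambda_{1}(\Pi'\Pi)}\leq\Pi(i,:)\,G^{-1}\,\Pi(i,:)'\leq\frac{\|\Pi(i,:)\|_{F}^{2}}{\tilde{\theta}_{\mathrm{min}}^{2}\lambda_{K}(\Pi'\Pi)}.
\end{align*}

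Finally I would bound $\|\Pi(i,:)\|_{F}^{2}$ using that each row of $\Pi$ is a PMF: nonnegativity and $\sum_{k}\Pi(i,k)=1$ give $\|\Pi(i,:)\|_{F}^{2}\leq\big(\sum_{k}\Pi(i,k)\big)^{2}=1$, while Cauchy--Schwarz gives $\|\Pi(i,:)\|_{F}^{2}\geq 1/K$. Combining this with $\tilde{\theta}_{\mathrm{min}}\leq\tilde{\theta}(i)\leq\tilde{\theta}_{\mathrm{max}}$ in the displayed bound and taking square roots produces exactly $\frac{\tilde{\theta}_{\mathrm{min}}}{\tilde{\theta}_{\mathrm{max}}\sqrt{K\lambda_{1}(\Pi'\Pi)}}\leq\|V(i,:)\|_{F}\leq\frac{\tilde{\theta}_{\mathrm{max}}}{\tilde{\theta}_{\mathrm{min}}\sqrt{\lambda_{K}(\Pi'\Pi)}}$. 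The only genuinely nontrivial step is the algebraic reduction $BB'=G^{-1}$ from the orthonormality constraint $B'GB=I$; once that is in hand, the remainder is routine Loewner-order monotonicity, eigenvalue sandwiching, and the elementary $1/K\leq\|\Pi(i,:)\|_{F}^{2}\leq 1$ simplex inequalities.
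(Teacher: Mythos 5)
Your proposal is correct, and it shares the paper's skeleton: both start from $V(i,:)=\tilde{\theta}(i)\Pi(i,:)B$ with $B=\tilde{\Theta}^{-1}(\mathcal{I},\mathcal{I})V(\mathcal{I},:)$, both extract the key identity $BB'=(\Pi'\tilde{\Theta}^{2}\Pi)^{-1}$ from $V'V=I$, both pass from $\Pi'\tilde{\Theta}^{2}\Pi$ to $\Pi'\Pi$ via $\lambda_{K}(\Pi'\tilde{\Theta}^{2}\Pi)\geq\tilde{\theta}_{\mathrm{min}}^{2}\lambda_{K}(\Pi'\Pi)$ and $\lambda_{1}(\Pi'\tilde{\Theta}^{2}\Pi)\leq\tilde{\theta}_{\mathrm{max}}^{2}\lambda_{1}(\Pi'\Pi)$, and both finish with the simplex bounds $1/K\leq\|\Pi(i,:)\|_{F}^{2}\leq 1$. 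Where you genuinely differ is the middle step. The paper bounds $\|\Pi(i,:)B\|_{F}$ by factoring it through the extreme row norms of $B$: it uses $\max_{k}\|e_{k}'B\|_{F}^{2}\leq\lambda_{1}(BB')$ for the upper bound and the product bound $\|\Pi(i,:)B\|_{F}\geq\min_{j}\|\Pi(j,:)\|_{F}\cdot\min_{k}\|e_{k}'B\|_{F}$ for the lower bound. You instead keep the quadratic form $\Pi(i,:)BB'\Pi(i,:)'$ intact and sandwich $BB'=(\Pi'\tilde{\Theta}^{2}\Pi)^{-1}$ in the Loewner order. Your route buys rigor: the paper's lower-bound factorization is not a valid inequality in general, since a convex combination of rows can have norm strictly smaller than every individual row (take $B(1,:)=-B(2,:)$ with equal weights), whereas your Rayleigh-quotient bound $\Pi(i,:)BB'\Pi(i,:)'\geq\lambda_{K}(BB')\,\|\Pi(i,:)\|_{F}^{2}$ is airtight and delivers exactly the same constant. (The paper's upper bound is unproblematic, since with PMF weights the triangle inequality gives $\|\Pi(i,:)B\|_{F}\leq\max_{k}\|e_{k}'B\|_{F}$.) In short: same decomposition and same eigenvalue comparisons, but your Loewner-order treatment of the quadratic form is the cleaner, fully justified version of the argument, and it repairs the one shaky step in the paper's own proof.
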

Note that since $V_{2}(i,:)=e'_{i}VV'=V(i,:)V'$, by Lemma A.1 in \cite{yu2015a}, we have $\|V_{2}(i,:)\|_{F}=\|V(i,:)V'\|_{F}=\|V(i,:)\|_{F}$, therefore results in Lemma \ref{P2} also holds for $V_{2}(i,:)$.
\begin{lem}\label{P3}
	Under $DCMM(n,P,\Theta,\Pi)$, we have
	\begin{align*}
	\lambda_{1}(V_{*,1}(\mathcal{I},:)V'_{*,1}(\mathcal{I},:))\leq \frac{\tilde{\theta}^{2}_{\mathrm{max}}K\kappa(\Pi'\Pi)}{\tilde{\theta}^{2}_{\mathrm{min}}}\mathrm{~and~}\lambda_{K}(V_{*,1}(\mathcal{I},:)V'_{*,1}(\mathcal{I},:))\geq \frac{\tilde{\theta}^{2}_{\mathrm{min}}\kappa^{-1}(\Pi'\Pi)}{\tilde{\theta}^{2}_{\mathrm{max}}}.
	\end{align*}
\end{lem}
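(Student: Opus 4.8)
The plan is to diagonalize the problem by writing $G:=V_{*,1}(\mathcal{I},:)V'_{*,1}(\mathcal{I},:)$ as a diagonal congruence of the fixed $K\times K$ matrix $(\Pi'\tilde{\Theta}^{2}\Pi)^{-1}$, and then to read off the eigenvalues of $G$ from the eigenvalues of that inner matrix together with the size of the diagonal scaling. First I would recall from the proof of Lemma~\ref{LSVM} that $(V(\mathcal{I},:)V'(\mathcal{I},:))^{-1}=\tilde{\Theta}^{-1}(\mathcal{I},\mathcal{I})\Pi'\tilde{\Theta}^{2}\Pi\tilde{\Theta}^{-1}(\mathcal{I},\mathcal{I})$; inverting this identity gives $V(\mathcal{I},:)V'(\mathcal{I},:)=\tilde{\Theta}(\mathcal{I},\mathcal{I})M\tilde{\Theta}(\mathcal{I},\mathcal{I})$ with $M:=(\Pi'\tilde{\Theta}^{2}\Pi)^{-1}$. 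Since $V_{*,1}(\mathcal{I},:)=N_{V}(\mathcal{I},\mathcal{I})V(\mathcal{I},:)$ and the matrices $N_{V}(\mathcal{I},\mathcal{I}),\tilde{\Theta}(\mathcal{I},\mathcal{I})$ are diagonal and commute, this yields the factorization $G=DMD$, where $D:=N_{V}(\mathcal{I},\mathcal{I})\tilde{\Theta}(\mathcal{I},\mathcal{I})$ is diagonal (this is precisely the matrix $J_{1}$ of the main text).

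Next I would control the spectrum of the inner matrix $M$. Because $\tilde{\Theta}^{2}$ is diagonal with entries in $[\tilde{\theta}_{\min}^{2},\tilde{\theta}_{\max}^{2}]$, we have the positive-semidefinite sandwich $\tilde{\theta}_{\min}^{2}\Pi'\Pi\preceq\Pi'\tilde{\Theta}^{2}\Pi\preceq\tilde{\theta}_{\max}^{2}\Pi'\Pi$, so by eigenvalue monotonicity $\tilde{\theta}_{\min}^{2}\lambda_{K}(\Pi'\Pi)\leq\lambda_{K}(\Pi'\tilde{\Theta}^{2}\Pi)$ and $\lambda_{1}(\Pi'\tilde{\Theta}^{2}\Pi)\leq\tilde{\theta}_{\max}^{2}\lambda_{1}(\Pi'\Pi)$. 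Inverting, $\lambda_{1}(M)\leq 1/(\tilde{\theta}_{\min}^{2}\lambda_{K}(\Pi'\Pi))$ and $\lambda_{K}(M)\geq 1/(\tilde{\theta}_{\max}^{2}\lambda_{1}(\Pi'\Pi))$.

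The key step, and the one that keeps the constants sharp, is the estimate for the diagonal scaling $D$. Here I would exploit that the rows of $V_{*,1}(\mathcal{I},:)$ are unit-normalized, so $G(k,k)=\|V_{*,1}(\mathcal{I}_{k},:)\|_{F}^{2}=1$ for every $k$. Since $G=DMD$ with $D$ diagonal, $G(k,k)=D(k,k)^{2}M(k,k)$, which gives the exact relation $D(k,k)^{2}=1/M(k,k)$. As the diagonal entries of a symmetric matrix lie between its extreme eigenvalues, $\lambda_{K}(M)\leq M(k,k)\leq\lambda_{1}(M)$, hence $1/\lambda_{1}(M)\leq D(k,k)^{2}\leq 1/\lambda_{K}(M)$; combined with the previous paragraph this yields $\tilde{\theta}_{\min}^{2}\lambda_{K}(\Pi'\Pi)\leq D(k,k)^{2}\leq\tilde{\theta}_{\max}^{2}\lambda_{1}(\Pi'\Pi)$ uniformly in $k$. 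I would stress that it is essential to bound $D$ this way rather than through the delocalization estimate on $\|V(i,:)\|_{F}$ of Lemma~\ref{P2}: the latter route loses an extra factor $\tilde{\theta}_{\max}^{2}/\tilde{\theta}_{\min}^{2}$ and is too weak to reach the stated lower bound on $\lambda_{K}$.

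Finally I would combine everything through Rayleigh quotients: for any unit vector $x$, $x'Gx=(Dx)'M(Dx)$, whence $\lambda_{1}(G)\leq\lambda_{1}(M)\,\max_{k}D(k,k)^{2}\leq\tilde{\theta}_{\max}^{2}\kappa(\Pi'\Pi)/\tilde{\theta}_{\min}^{2}$, which is in fact stronger than the claimed bound (the stated extra factor $K$ is slack, absorbed since $K\geq 1$), while $\lambda_{K}(G)\geq\lambda_{K}(M)\,\min_{k}D(k,k)^{2}\geq\tilde{\theta}_{\min}^{2}\kappa^{-1}(\Pi'\Pi)/\tilde{\theta}_{\max}^{2}$, matching the claim exactly. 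The main obstacle is the diagonal-scaling estimate of the third paragraph, i.e.\ realizing that the unit-norm normalization pins $D(k,k)$ to $1/\sqrt{M(k,k)}$; once that is in place, the rest is just the Rayleigh-quotient bounds for the congruence $G=DMD$ and the positive-semidefinite ordering of $\Pi'\tilde{\Theta}^{2}\Pi$.
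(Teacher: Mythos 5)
Your proof is correct, and it rests on the same congruence skeleton as the paper's: both start from $(V(\mathcal{I},:)V'(\mathcal{I},:))^{-1}=\tilde{\Theta}^{-1}(\mathcal{I},\mathcal{I})\Pi'\tilde{\Theta}^{2}\Pi\tilde{\Theta}^{-1}(\mathcal{I},\mathcal{I})$ (from the proof of Lemma \ref{LSVM}), write $G:=V_{*,1}(\mathcal{I},:)V'_{*,1}(\mathcal{I},:)=DMD$ with $D=N_{V}(\mathcal{I},\mathcal{I})\tilde{\Theta}(\mathcal{I},\mathcal{I})$ and $M=(\Pi'\tilde{\Theta}^{2}\Pi)^{-1}$, control the spectrum of $M$ through $\tilde{\theta}^{2}_{\mathrm{min}}\Pi'\Pi\preceq\Pi'\tilde{\Theta}^{2}\Pi\preceq\tilde{\theta}^{2}_{\mathrm{max}}\Pi'\Pi$, and conclude with eigenvalue bounds for the congruence. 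The genuine difference is how the diagonal factor $D$ is handled. The paper bounds $D(k,k)=\tilde{\theta}(i)/\|V(i,:)\|_{F}$ (for $i\in\mathcal{I}$) by citing the intermediate estimates inside the proof of Lemma \ref{P2}, namely $\sqrt{\lambda_{K}(\Pi'\tilde{\Theta}^{2}\Pi)}\leq\tilde{\theta}(i)/\|V(i,:)\|_{F}\leq\sqrt{K\lambda_{1}(\Pi'\tilde{\Theta}^{2}\Pi)}$; because it works with $\lambda(\Pi'\tilde{\Theta}^{2}\Pi)$ rather than with the final statement of Lemma \ref{P2}, it does not incur the $\tilde{\theta}^{2}_{\mathrm{max}}/\tilde{\theta}^{2}_{\mathrm{min}}$ loss you warn about (that loss only hits the cruder route through the lemma's statement), but it does pay a slack factor $K$, since the generic bound $\mathrm{min}_{i}\|\Pi(i,:)\|_{F}\geq 1/\sqrt{K}$ invoked there ignores that the rows indexed by $\mathcal{I}$ are pure, where $\|\Pi(i,:)\|_{F}=1$. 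You avoid both issues by pinning $D$ down exactly: the unit row normalization forces $G(k,k)=1$, hence $D^{2}(k,k)=1/M(k,k)$, and squeezing $M(k,k)$ between $\lambda_{K}(M)$ and $\lambda_{1}(M)$ finishes the job. This buys a strictly sharper upper bound, $\lambda_{1}(G)\leq\tilde{\theta}^{2}_{\mathrm{max}}\kappa(\Pi'\Pi)/\tilde{\theta}^{2}_{\mathrm{min}}$ without the factor $K$, together with a lower bound identical to the paper's; since $K\geq 1$, your result implies Lemma \ref{P3} exactly as stated.
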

Lemma \ref{P3} will be frequently used in this paper since it is useful for obtaining the bound of $\lambda_{K}(V_{*,1}(\mathcal{I},:)V'_{*,1}(\mathcal{I},:))$ for further study.
\begin{lem}\label{P4}
	Under $DCMM(n, P, \Theta, \Pi)$, we have
	\begin{align*}
	|\lambda_{K}|\geq \tilde{\theta}^{2}_{\mathrm{min}}|\lambda_{K}(P)|\lambda_{K}(\Pi'\Pi)\mathrm{~and~} \lambda_{1}\leq 1.
	\end{align*}
\end{lem}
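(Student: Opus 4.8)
The plan is to treat the two inequalities separately, since they concern opposite ends of the spectrum of the rank-$K$ symmetric matrix $\mathscr{L}_{\tau}=\tilde{\Theta}\Pi P\Pi'\tilde{\Theta}$. Throughout I will use that $\mathscr{L}_{\tau}$ is symmetric (so its nonzero eigenvalues are real and equal, in magnitude, to its nonzero singular values) and that, by Lemma \ref{ExistB}, its $K$ nonzero eigenvalues are exactly the eigenvalues of $F=P\Pi'\tilde{\Theta}^{2}\Pi$.

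For the upper bound $\lambda_{1}\leq 1$ I would prove the stronger statement $\|\mathscr{L}_{\tau}\|\leq 1$; since $\lambda_{1}$ is by definition the eigenvalue of largest magnitude, $\lambda_{1}\leq|\lambda_{1}|=\|\mathscr{L}_{\tau}\|\leq 1$ then follows. Because $\mathscr{L}_{\tau}$ is symmetric, $\|\mathscr{L}_{\tau}\|=\max_{\|x\|=1}|x'\mathscr{L}_{\tau}x|$. Setting $y=\mathscr{D}_{\tau}^{-1/2}x$ gives $x'\mathscr{L}_{\tau}x=y'\Omega y=\sum_{i,j}\Omega(i,j)y_{i}y_{j}$, and since $\Omega$ is symmetric with nonnegative entries, the bound $|y_{i}y_{j}|\leq\frac{1}{2}(y_{i}^{2}+y_{j}^{2})$ yields $|y'\Omega y|\leq\sum_{i}y_{i}^{2}\sum_{j}\Omega(i,j)=\sum_{i}y_{i}^{2}\mathscr{D}(i,i)=\sum_{i}x_{i}^{2}\frac{\mathscr{D}(i,i)}{\mathscr{D}(i,i)+\tau}\leq\|x\|^{2}=1$. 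Hence $\|\mathscr{L}_{\tau}\|\leq 1$. An equivalent route is to observe that $\mathscr{L}_{\tau}$ is similar to the nonnegative matrix $\mathscr{D}_{\tau}^{-1}\Omega$, whose $i$-th row sum is $\mathscr{D}(i,i)/(\mathscr{D}(i,i)+\tau)\leq 1$, so its spectral radius is at most $\|\mathscr{D}_{\tau}^{-1}\Omega\|_{\infty}\leq 1$.

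For the lower bound on $|\lambda_{K}|$ I would reduce the problem to a $K\times K$ core. Write $X:=\tilde{\Theta}\Pi\in\mathbb{R}^{n\times K}$, which has full column rank $K$, so that $\mathscr{L}_{\tau}=XPX'$. Using a thin SVD $X=U_{X}\Sigma_{X}W_{X}'$ with $U_{X}$ having orthonormal columns and $W_{X}$ orthogonal, one gets $\mathscr{L}_{\tau}=U_{X}(\Sigma_{X}Q\Sigma_{X})U_{X}'$ with $Q:=W_{X}'PW_{X}$. Since $\mathscr{L}_{\tau}$ is symmetric of rank $K$, its smallest-magnitude nonzero eigenvalue equals its $K$-th singular value, which equals $\sigma_{\min}(\Sigma_{X}Q\Sigma_{X})$; i.e. $|\lambda_{K}|=\sigma_{\min}(\Sigma_{X}Q\Sigma_{X})$. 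As $\Sigma_{X}$ and $Q$ are square and invertible (the latter because $\mathrm{rank}(P)=K$ by (I1)), submultiplicativity of the smallest singular value gives $\sigma_{\min}(\Sigma_{X}Q\Sigma_{X})\geq\sigma_{\min}(\Sigma_{X})^{2}\sigma_{\min}(Q)=\sigma_{\min}(X)^{2}\,|\lambda_{K}(P)|$, where $\sigma_{\min}(Q)=\sigma_{\min}(P)=|\lambda_{K}(P)|$ by orthogonal invariance and symmetry of $P$. Finally, $\sigma_{\min}(X)^{2}=\lambda_{K}(X'X)=\lambda_{K}(\Pi'\tilde{\Theta}^{2}\Pi)$, and since $\tilde{\Theta}^{2}\succeq\tilde{\theta}_{\mathrm{min}}^{2}I$ we have $\Pi'\tilde{\Theta}^{2}\Pi\succeq\tilde{\theta}_{\mathrm{min}}^{2}\Pi'\Pi$, so Weyl monotonicity gives $\lambda_{K}(\Pi'\tilde{\Theta}^{2}\Pi)\geq\tilde{\theta}_{\mathrm{min}}^{2}\lambda_{K}(\Pi'\Pi)$. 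Combining,
\begin{align*}
|\lambda_{K}|\;\geq\;\sigma_{\min}(X)^{2}\,|\lambda_{K}(P)|\;\geq\;\tilde{\theta}_{\mathrm{min}}^{2}\,|\lambda_{K}(P)|\,\lambda_{K}(\Pi'\Pi).
\end{align*}
The part requiring the most care is this lower bound: specifically, correctly identifying $|\lambda_{K}|$ of the (generally indefinite) symmetric matrix $\mathscr{L}_{\tau}$ with $\sigma_{\min}$ of the $K\times K$ core $\Sigma_{X}Q\Sigma_{X}$, and justifying the singular-value submultiplicativity, which is legitimate precisely because all three factors are square and invertible. The upper bound $\lambda_{1}\leq 1$, by contrast, is an elementary consequence of the nonnegativity and symmetry of $\Omega$.
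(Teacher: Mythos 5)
Your proof is correct, and it takes a genuinely different route from the paper's. For the lower bound, the paper handles the indefiniteness of $P$ by a squaring trick: it sets $H=P\Pi'\tilde{\Theta}^{2}\Pi P$ (positive definite), writes $|\lambda_{K}|=\sqrt{\lambda_{K}(\tilde{\Theta}\Pi H\Pi'\tilde{\Theta})}$, then repeatedly uses cyclic invariance of nonzero eigenvalues together with the inequality $\lambda_{K}(AB)\geq\lambda_{K}(A)\lambda_{K}(B)$ for positive semidefinite factors, landing on $|\lambda_{K}|\geq|\lambda_{K}(P)|\lambda_{K}(\Pi'\tilde{\Theta}^{2}\Pi)\geq\tilde{\theta}^{2}_{\mathrm{min}}|\lambda_{K}(P)|\lambda_{K}(\Pi'\Pi)$. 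You instead compress the rectangular factor $X=\tilde{\Theta}\Pi$ by a thin SVD, reduce to the invertible $K\times K$ core $\Sigma_{X}Q\Sigma_{X}$, and apply submultiplicativity of the smallest singular value; this avoids both the squaring step and the psd product-eigenvalue inequality, and your identification of $|\lambda_{K}|$ with $\sigma_{\min}$ of the core is legitimate exactly for the invertibility reasons you give (for a symmetric rank-$K$ matrix, singular values are the absolute eigenvalues, and conjugation by the orthonormal-column matrix $U_{X}$ preserves the nonzero ones). For the upper bound, the paper quotes $\|\mathscr{D}^{-1/2}\Omega\mathscr{D}^{-1/2}\|=1$ as a known fact and sandwiches $\mathscr{L}_{\tau}$ through it to get $\lambda_{1}\leq\|\mathscr{D}_{\tau}^{-1}\mathscr{D}\|\leq 1$, whereas your quadratic-form/AM--GM argument (equivalently, similarity to the sub-stochastic matrix $\mathscr{D}_{\tau}^{-1}\Omega$) proves the bound from scratch, effectively re-deriving the fact the paper takes for granted. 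In short: the paper buys brevity by leaning on eigenvalue identities and a standard normalized-adjacency fact; your version buys self-containedness and a cleaner treatment of indefiniteness at the cost of invoking SVD machinery.
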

Lemma \ref{P4} gives the lower bound of $|\lambda_{K}|$ and upper bound of $\lambda_{1}$.
\subsection{Proof of Lemma \ref{P1}}
\begin{proof}
	Since $V_{*,1}=YV_{*,1}(\mathcal{I},:)$, for $1\leq i\leq n$, we have
	\begin{align*}
	V_{*,1}(i,:)=Y(i,:)V_{*,1}(\mathcal{I},:)=Y(i,:)\mathbf{1}\frac{Y(i,:)}{Y(i,:)\mathbf{1}}V_{*,1}(\mathcal{I},:)=r(i)\Phi(i,:)V_{*,1}(\mathcal{I},:),
	\end{align*}
	where we set $r_{1}(i)=Y(i,:)\textbf{1}$, $\Phi_{1}(i,:)=\frac{Y(i,:)}{Y(i,:)\mathbf{1}}$, and $\mathbf{1}$ is a $K\times 1$  vector with all entries being ones.
	
	By the proof of Lemma 2.3, we know that $Y(i,:)=\frac{\Pi(i,:)}{\|M_{1}(i,:)\|_{F}}\tilde{\Theta}^{-1}(\mathcal{I},\mathcal{I})N^{-1}(\mathcal{I},\mathcal{I})$, where $M_{1}=\Pi\tilde{\Theta}^{-1}(\mathcal{I},\mathcal{I})V(\mathcal{I},:)$.  For convenience, set $T=\tilde{\Theta}^{-1}(\mathcal{I},\mathcal{I}), Q=N^{-1}(\mathcal{I},\mathcal{I})$, and $R=V(\mathcal{I},:)$ (note that such setting of $T,Q, R$ is only for notation convenience in the proof of Lemma \ref{P1}).

	One the one hand, if node $i$ is pure such that $\Pi(i,k)=1$ for certain $k$ among $\{1,2,\ldots,K\}$ (i.e., $\Pi(i,:)=e_{k}$ if $\Pi(i,k)=1$), we have $M(i,:)=\Pi(i,:)\tilde{\Theta}^{-1}(\mathcal{I},\mathcal{I})V(\mathcal{I},:)=T(k,k)R(k,:)$, and $\Pi(i,:)TQ=T(k,k)Q(k,:)$, which give that $Y(i,:)=\frac{T(k,k)Q(k,:)}{\|T(k,k)R(k,:)\|_{F}}=\frac{Q(k,:)}{\|R(k,:)\|_{F}}$. Recall that the $k$-th diagonal entry of $N^{-1}(\mathcal{I},\mathcal{I})$ is $\|[V(\mathcal{I},:)](k,:)\|_{F}$, i.e., $Q(k,:)\mathbf{1}=\|R(k,:)\|_{F}$, which gives that $r_{1}(i)=Y(i,:)\mathbf{1}=1$ if $\Pi(i,k)=1$  and $\Phi_{1}(i,:)=e'_{k}$ if $\Pi(i,k)=1$.
	
	On the other hand, if $i$ is not a pure node, since 
		\begin{align*}
	\|M(i,:)\|_{F}=&\|\Pi(i,:)\tilde{\Theta}^{-1}(\mathcal{I},\mathcal{I})V(\mathcal{I},:)\|_{F}=\|\sum_{k=1}^{K}\Pi(i,k)T(k,k)R(k,:)\|_{F}\\
	&\leq \sum_{k=1}^{K}\Pi(i,k)T(k,k)\|R(k,:)\|_{F}
	=\sum_{k=1}^{K}\Pi(i,k)T(k,k)Q(k,k),
	\end{align*}
	 combine it with $\Pi(i,:)TQ\mathbf{1}=\sum_{k=1}^{K}\Pi(i,k)T(k,k)Q(k,k),$
	so $r_{1}(i)=\frac{Y(i,:)\mathbf{1}}{\|M(i,:)\|}_{F}=\frac{\Pi(i,:)TQ\mathbf{1}}{\|M(i,:)\|_{F}}> 1$. Following the above proof, the results for $V_{*,2}$ can be obtained. Here, we omit the details.
\end{proof}

\subsection{Proof of Lemma \ref{P2}}
\begin{proof}
	Since $I=V'V=V'(\mathcal{I},:)\tilde{\Theta}^{-1}(\mathcal{I},\mathcal{I})\Pi'\tilde{\Theta}^{2}\Pi\tilde{\Theta}^{-1}(\mathcal{I},\mathcal{I})V(\mathcal{I},:)$, we have
	\begin{align*}
	((\tilde{\Theta}^{-1}(\mathcal{I},\mathcal{I})V(\mathcal{I},:))((\tilde{\Theta}^{-1}(\mathcal{I},\mathcal{I})V(\mathcal{I},:))')^{-1}=\Pi'\tilde{\Theta}^{2}\Pi,
	\end{align*}
	which gives that
	\begin{align*}
	\mathrm{max}_{k}\|e'_{k}(\tilde{\Theta}^{-1}(\mathcal{I},\mathcal{I})V(\mathcal{I},:))\|_{F}^{2}&=\mathrm{max}_{k}e'_{k}(\tilde{\Theta}^{-1}(\mathcal{I},\mathcal{I})V(\mathcal{I},:))(\tilde{\Theta}^{-1}(\mathcal{I},\mathcal{I})V(\mathcal{I},:))'e_{k}\\
	&\leq \mathrm{max}_{\|x\|=1}x'(\tilde{\Theta}^{-1}(\mathcal{I},\mathcal{I})V(\mathcal{I},:))(\tilde{\Theta}^{-1}(\mathcal{I},\mathcal{I})V(\mathcal{I},:))'x\\
	&=\lambda_{1}((\tilde{\Theta}^{-1}(\mathcal{I},\mathcal{I})V(\mathcal{I},:))(\tilde{\Theta}^{-1}(\mathcal{I},\mathcal{I})V(\mathcal{I},:))')\\
	&=\frac{1}{\lambda_{K}(\Pi'\tilde{\Theta}^{2}\Pi)},
	\end{align*}
	where $x$ is a $K\times 1$ vector whose $l_{2}$ norm is 1. Meanwhile, we also have
	\begin{align*}
	\mathrm{min}_{k}\|e'_{k}(\tilde{\Theta}^{-1}(\mathcal{I},\mathcal{I})V(\mathcal{I},:))\|_{F}^{2}&=\mathrm{min}_{k}e'_{k}(\tilde{\Theta}^{-1}(\mathcal{I},\mathcal{I})V(\mathcal{I},:))(\tilde{\Theta}^{-1}(\mathcal{I},\mathcal{I})V(\mathcal{I},:))'e_{k}\\
	&\geq \mathrm{min}_{\|x\|=1}x'(\tilde{\Theta}^{-1}(\mathcal{I},\mathcal{I})V(\mathcal{I},:))(\tilde{\Theta}^{-1}(\mathcal{I},\mathcal{I})V(\mathcal{I},:))'x\\
	&=\lambda_{K}((\tilde{\Theta}^{-1}(\mathcal{I},\mathcal{I})V(\mathcal{I},:))(\tilde{\Theta}^{-1}(\mathcal{I},\mathcal{I})V(\mathcal{I},:))')\\
	&=\frac{1}{\lambda_{1}(\Pi'\tilde{\Theta}^{2}\Pi)},
	\end{align*}
	By Lemma 2.1, we have $V(i,:)=\tilde{\theta}_{i}\Pi(i,:)\tilde{\Theta}^{-1}(\mathcal{I},\mathcal{I})V(\mathcal{I},:)$ for $1\leq i\leq n$, which gives that
	\begin{align*}
	\|V(i,:)\|_{F}&=\|\tilde{\theta}_{i}\Pi(i,:)\tilde{\Theta}^{-1}(\mathcal{I},\mathcal{I})V(\mathcal{I},:)\|_{F}\\
	&=\tilde{\theta}_{i}\|\Pi(i,:)\tilde{\Theta}^{-1}(\mathcal{I},\mathcal{I})V(\mathcal{I},:)\|_{F}\\
	&\leq \tilde{\theta}_{i} \mathrm{max}_{i}\|\Pi(i,:)\|_{F}\mathrm{max}_{i}\|e'_{i}(\tilde{\Theta}^{-1}(\mathcal{I},\mathcal{I})V(\mathcal{I},:))\|_{F}\\
	&\leq \tilde{\theta}_{i}\mathrm{max}_{i}\|e'_{i}(\tilde{\Theta}^{-1}(\mathcal{I},\mathcal{I})V(\mathcal{I},:))\|_{F}\\
	&\leq\frac{\tilde{\theta}_{i}}{\sqrt{\lambda_{K}(\Pi'\tilde{\Theta}^{2}\Pi)}}\leq \frac{\tilde{\theta}_{\mathrm{max}}}{\sqrt{\lambda_{K}(\Pi'\tilde{\Theta}^{2}\Pi)}},
	\end{align*}
	where we use $\|\Pi(i,:)\|_{F}\leq 1$ since $\sum_{k=1}^{K}\Pi(i,k)=1$, and $e_{i}$ is a $n\times 1$ basis vector whose $i$-th entry is 1. Since $\lambda_{K}(\Pi'\tilde{\Theta}^{2}\Pi)=\lambda_{K}(\tilde{\Theta}^{2}\Pi'\Pi)\geq \lambda_{K}(\tilde{\Theta}^{2})\lambda_{K}(\Pi'\Pi)=\tilde{\theta}^{2}_{\mathrm{min}}\lambda_{K}(\Pi'\Pi)$, we have
	\begin{align*}
	\|V(i,:)\|_{F}\leq \frac{\tilde{\theta}_{\mathrm{max}}}{\tilde{\theta}_{\mathrm{min}}\sqrt{\lambda_{K}(\Pi'\Pi)}}.
	\end{align*}
	Similarly, we have
	\begin{align*}
	\|V(i,:)\|_{F}&=\|\tilde{\theta}_{i}\Pi(i,:)\tilde{\Theta}^{-1}(\mathcal{I},\mathcal{I})V(\mathcal{I},:)\|_{F}\\
	&=\tilde{\theta}_{i}\|\Pi(i,:)\tilde{\Theta}^{-1}(\mathcal{I},\mathcal{I})V(\mathcal{I},:)\|_{F}\\
	&\geq \tilde{\theta}_{i} \mathrm{min}_{i}\|\Pi(i,:)\|_{F}\mathrm{min}_{i}\|e'_{i}(\tilde{\Theta}^{-1}(\mathcal{I},\mathcal{I})V(\mathcal{I},:))\|_{F}\\
	&\geq \tilde{\theta}_{i}\mathrm{min}_{i}\|e'_{i}(\tilde{\Theta}^{-1}(\mathcal{I},\mathcal{I})V(\mathcal{I},:))\|_{F}/\sqrt{K}\\
	&\geq\frac{\tilde{\theta}_{i}}{\sqrt{K\lambda_{1}(\Pi'\tilde{\Theta}^{2}\Pi)}}\geq\frac{\tilde{\theta}_{\mathrm{min}}}{\sqrt{K\lambda_{1}(\Pi'\tilde{\Theta}^{2}\Pi)}},
	\end{align*}
	where we use the fact that $\mathrm{min}_{i}\|\Pi(i,:)\|_{F}\geq \frac{1}{\sqrt{K}}$ since $\sum_{k=1}^{K}\Pi(i,k)=1$ and all entries of $\Pi$ are nonnegative. Since $\lambda_{1}(\Pi'\tilde{\Theta}^{2}\Pi)=\lambda_{1}(\tilde{\Theta}^{2}\Pi'\Pi)\leq \lambda_{1}(\tilde{\Theta}^{2})\lambda_{1}(\Pi'\Pi)=\tilde{\theta}^{2}_{\mathrm{max}}\lambda_{1}(\Pi'\Pi)$, we have
	\begin{align*}
	\|V(i,:)\|_{F}\geq \frac{\tilde{\theta}_{\mathrm{min}}}{\tilde{\theta}_{\mathrm{max}}\sqrt{K\lambda_{1}(\Pi'\Pi)}}.
	\end{align*}
\end{proof}
\subsection{Proof of Lemma \ref{P3}}
\begin{proof}
	In this proof, we will frequently use the fact that for any two matrices $X_{1}$ and $X_{2}$, the nonzero eigenvalues of $X_{1}X_{2}$ are the same as the nonzero eigenvalues of $X_{2}X_{1}$. By the proof of Lemma 2.6, we know that $V(\mathcal{I},:)V'(\mathcal{I},:)=\tilde{\Theta}(\mathcal{I},\mathcal{I})(\Pi'\tilde{\Theta}^{2}\Pi)^{-1}\tilde{\Theta}(\mathcal{I},\mathcal{I})$,which gives
	\begin{align*}
	\lambda_{1}(V_{*,1}(\mathcal{I},:)V'_{*,1}(\mathcal{I},:))&=\lambda_{1}(N(\mathcal{I},\mathcal{I})V(\mathcal{I},:)V'(\mathcal{I},:)N(\mathcal{I},\mathcal{I}))\\
	&=\lambda_{1}(N(\mathcal{I},\mathcal{I})\tilde{\Theta}(\mathcal{I},\mathcal{I})(\Pi'\tilde{\Theta}^{2}\Pi)^{-1}\tilde{\Theta}(\mathcal{I},\mathcal{I})N(\mathcal{I},\mathcal{I}))\\
	&=\lambda_{1}(N^{2}(\mathcal{I},\mathcal{I})\tilde{\Theta}^{2}(\mathcal{I},\mathcal{I})(\Pi'\tilde{\Theta}^{2}\Pi)^{-1})\\
	&\leq\lambda^{2}_{1}(N(\mathcal{I},\mathcal{I})\tilde{\Theta}(\mathcal{I},\mathcal{I}))\lambda_{1}((\Pi'\tilde{\Theta}^{2}\Pi)^{-1})\\
	&=\lambda^{2}_{1}(N(\mathcal{I},\mathcal{I})\tilde{\Theta}(\mathcal{I},\mathcal{I}))/\lambda_{K}(\Pi'\tilde{\Theta}^{2}\Pi)\\
	&\leq(\mathrm{max}_{i\in \mathcal{I}}\tilde{\theta}(i)/\|V(i,:)\|_{F})^{2}/\lambda_{K}(\Pi'\tilde{\Theta}^{2}\Pi)\\
	&\leq K\frac{\lambda_{1}(\Pi'\tilde{\Theta}^{2}\Pi)}{\lambda_{K}(\Pi'\tilde{\Theta}^{2}\Pi)} \qquad \mathrm{by~the~proof~of~Lemma~}\ref{P2}\\
	&\leq K\frac{\tilde{\theta}^{2}_{\mathrm{max}}\lambda_{1}(\Pi'\Pi)}{\tilde{\theta}^{2}_{\mathrm{min}}\lambda_{K}(\Pi'\Pi)} \qquad \mathrm{by~the~proof~of~Lemma~}\ref{P2}\\
	&=\frac{\tilde{\theta}^{2}_{\mathrm{max}}K\kappa(\Pi'\Pi)}{\tilde{\theta}^{2}_{\mathrm{min}}},
	\end{align*}
	where we use the fact that $N(i,i)=\frac{1}{\|V(i,:)\|_{F}}$. Similarly, we have
	\begin{align*}
	\lambda_{K}(V_{*,1}(\mathcal{I},:)V'_{*,1}(\mathcal{I},:))&=\lambda_{K}(N(\mathcal{I},\mathcal{I})V(\mathcal{I},:)V'(\mathcal{I},:)N(\mathcal{I},\mathcal{I}))\\
	&=\lambda_{K}(N(\mathcal{I},\mathcal{I})\tilde{\Theta}(\mathcal{I},\mathcal{I})(\Pi'\tilde{\Theta}^{2}\Pi)^{-1}\tilde{\Theta}(\mathcal{I},\mathcal{I})N(\mathcal{I},\mathcal{I}))\\
	&=\lambda_{K}(N^{2}(\mathcal{I},\mathcal{I})\tilde{\Theta}^{2}(\mathcal{I},\mathcal{I})(\Pi'\tilde{\Theta}^{2}\Pi)^{-1})\\
	&\geq\lambda^{2}_{K}(N(\mathcal{I},\mathcal{I})\tilde{\Theta}(\mathcal{I},\mathcal{I}))\lambda_{K}((\Pi'\tilde{\Theta}^{2}\Pi)^{-1})\\
	&=\lambda^{2}_{K}(N(\mathcal{I},\mathcal{I})\tilde{\Theta}(\mathcal{I},\mathcal{I}))/\lambda_{1}(\Pi'\tilde{\Theta}^{2}\Pi)\\
	&\geq(\mathrm{min}_{i\in \mathcal{I}}\tilde{\theta}(i)/\|V(i,:)\|_{F})^{2}/\lambda_{1}(\Pi'\tilde{\Theta}^{2}\Pi)\\
	&\geq\frac{\lambda_{K}(\Pi'\tilde{\Theta}^{2}\Pi)}{\lambda_{1}(\Pi'\tilde{\Theta}^{2}\Pi)} \qquad \mathrm{by~the~proof~of~Lemma~}\ref{P2}\\
	&\geq\frac{\tilde{\theta}^{2}_{\mathrm{min}}\lambda_{K}(\Pi'\Pi)}{\tilde{\theta}^{2}_{\mathrm{max}}\lambda_{1}(\Pi'\Pi)} \qquad \mathrm{by~the~proof~of~Lemma~}\ref{P2}\\
	&=\frac{\tilde{\theta}^{2}_{\mathrm{min}}\kappa^{-1}(\Pi'\Pi)}{\tilde{\theta}^{2}_{\mathrm{max}}}.
	\end{align*}
\end{proof}
\subsection{Proof of Lemma \ref{P4}}
\begin{proof}
	Set $H=P\Pi'\tilde{\Theta}^{2}\Pi P\in\mathbb{R}^{K\times K}$. By basic algebra, $H$ is full rank and positive definite. Then we have 
	\begin{align*}
	|\lambda_{K}|&=|\lambda_{K}(\mathscr{L}_{\tau})|=|\lambda_{K}(\tilde{\Theta}\Pi P\Pi'\tilde{\Theta})|=\sqrt{\lambda_{K}(\tilde{\Theta}\Pi P\Pi'\tilde{\Theta}^{2}\Pi P\Pi'\tilde{\Theta})}\\
	&=\sqrt{\lambda_{K}(\tilde{\Theta}\Pi H\Pi'\tilde{\Theta})}=\sqrt{\lambda_{K}(\tilde{\Theta}\Pi H^{1/2}H^{1/2}\Pi'\tilde{\Theta})}=\sqrt{\lambda_{K}(H^{1/2}\Pi'\tilde{\Theta}^{2}\Pi H^{1/2})}\\
	&=\sqrt{\lambda_{K}(H\Pi'\tilde{\Theta}^{2}\Pi)}\geq \sqrt{\lambda_{K}(H)\lambda_{K}(\Pi'\tilde{\Theta}^{2}\Pi)}=\sqrt{\lambda_{K}(P\Pi'\tilde{\Theta}^{2}\Pi P)\lambda_{K}(\Pi'\tilde{\Theta}^{2}\Pi)}\\
	&=\sqrt{\lambda_{K}(P^{2}\Pi'\tilde{\Theta}^{2}\Pi)\lambda_{K}(\Pi'\tilde{\Theta}^{2}\Pi)}\geq \sqrt{\lambda_{K}(P^{2})\lambda^{2}_{K}(\Pi'\tilde{\Theta}^{2}\Pi)}\geq \sqrt{\lambda^{2}_{K}(P)\lambda^{2}_{K}(\Pi'\tilde{\Theta}^{2}\Pi)}\\
	&=|\lambda_{K}(P)|\lambda_{K}(\Pi'\tilde{\Theta}^{2}\Pi)\geq |\lambda_{K}(P)|\tilde{\theta}^{2}_{\mathrm{min}}\lambda_{K}(\Pi'\Pi),
	\end{align*}
	where we have use the fact that for any matrix $T\in\mathbb{R}^{n\times K}$ with rank $K<n$, $TT'$ and $T'T$ have the same leading $K$ eigenvalues. Since $\|\mathscr{D}^{-1/2}\Omega\mathscr{D}^{-1/2}\|=1$, we have
	\begin{align*}
	\lambda_{1}&=\|\mathscr{L}_{\tau}\|=\|\mathscr{D}_{\tau}^{-1/2}\Omega\mathscr{D}_{\tau}^{-1/2}\|=\|\mathscr{D}^{-1/2}_{\tau}\mathscr{D}^{1/2}\mathscr{D}^{-1/2}\Omega\mathscr{D}^{-1/2}\mathscr{D}^{1/2}\mathscr{D}^{-1/2}_{\tau}\|\\
	&\leq \|\mathscr{D}^{-1/2}_{\tau}\mathscr{D}^{1/2}\|^{2}\|\mathscr{D}^{-1/2}\Omega\mathscr{D}^{-1/2}\|=\|\mathscr{D}^{-1}_{\tau}\mathscr{D}\|=\mathrm{max}_{1\leq i\leq n}\frac{\mathscr{D}(i,i)}{\tau+\mathscr{D}(i,i)}\leq 1.
	\end{align*}
	Similarly, we have $\hat{\lambda}_{1}=\|L_{\tau}\|\leq 1$.
\end{proof}

\section{Proof of consistency for Mixed-RSC}
\subsection{Proof of Lemma 3.1}
\begin{proof}
	Since
	\begin{align*}
	\|L_{\tau}-\mathscr{L}_{\tau}\|&=\|D_{\tau}^{-1/2}AD^{-1/2}_{\tau}-\mathscr{D}^{-1/2}_{\tau}\Omega\mathscr{D}^{-1/2}_{\tau}\|\\
	&=\|D_{\tau}^{-1/2}AD^{-1/2}_{\tau}-\mathscr{D}^{-1/2}_{\tau}A\mathscr{D}^{-1/2}_{\tau}+\mathscr{D}^{-1/2}_{\tau}A\mathscr{D}^{-1/2}_{\tau}-\mathscr{D}^{-1/2}_{\tau}\Omega\mathscr{D}^{-1/2}_{\tau}\|\\
	&\leq\|\mathscr{D}^{-1/2}_{\tau}A\mathscr{D}^{-1/2}_{\tau}-\mathscr{D}^{-1/2}_{\tau}\Omega\mathscr{D}^{-1/2}_{\tau}\|+\|D_{\tau}^{-1/2}AD^{-1/2}_{\tau}-\mathscr{D}^{-1/2}_{\tau}A\mathscr{D}^{-1/2}_{\tau}\|,
	\end{align*}
	we next bound the two terms of the last inequality separately.
	
	For the first term, since $\|\mathscr{D}^{-1/2}_{\tau}A\mathscr{D}^{-1/2}_{\tau}-\mathscr{D}^{-1/2}_{\tau}\Omega\mathscr{D}^{-1/2}_{\tau}\|=\|\mathscr{D}_{\tau}^{-1/2}(A-\Omega)\mathscr{D}^{-1/2}_{\tau}\|\leq \|\mathscr{D}^{-1}_{\tau}\|\|A-\Omega\|=\frac{\|A-\Omega\|}{\tau+\delta_{\mathrm{min}}}$, we only need to bound $\|A-\Omega\|$. We apply Theorem 1.4 (Bernstein inequality) in \cite{tropp2012user} to bound $\|A-\Omega\|$, and this theorem is written as below
	\begin{thm}\label{Bern}
		Consider a finite sequence $\{X_{k}\}$ of independent, random, self-adjoint matrices with dimension $d$. Assume that each random matrix satisfies
		\begin{align*}
		\mathbb{E}[X_{k}]=0, \mathrm{and~}\lambda_{\mathrm{max}}(X_{k})\leq R~\mathrm{almost~surely}.
		\end{align*}
		Then, for all $t\geq 0$,
		\begin{align*}
		\mathbb{P}(\lambda_{\mathrm{max}}(\sum_{k}X_{k})\geq t)\leq d\cdot \mathrm{exp}(\frac{-t^{2}/2}{\sigma^{2}+Rt/3}),
		\end{align*}
		where $\sigma^{2}:=\|\sum_{k}\mathbb{E}[X^{2}_{k}]\|$.
	\end{thm}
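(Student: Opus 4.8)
The plan is to prove this via the matrix Laplace transform method, the standard route to matrix Bernstein-type bounds. The starting point is a matrix analogue of the Chernoff argument. Writing $Y=\sum_k X_k$, for any $\theta>0$ Markov's inequality applied to the increasing map $\lambda\mapsto e^{\theta\lambda}$, together with the elementary bound $e^{\theta\lambda_{\mathrm{max}}(Y)}=\lambda_{\mathrm{max}}(e^{\theta Y})\leq\mathrm{tr}\,e^{\theta Y}$, yields
$$\mathbb{P}(\lambda_{\mathrm{max}}(Y)\geq t)\leq e^{-\theta t}\,\mathbb{E}[\mathrm{tr}\,e^{\theta Y}].$$
Thus everything reduces to controlling the trace of the matrix moment generating function $\mathbb{E}[\mathrm{tr}\,\exp(\theta\sum_k X_k)]$ and then optimizing over $\theta$.

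The central step is to decouple the summands. Because the $X_k$ are independent but do not commute, one cannot simply factor the exponential; instead I would invoke Lieb's concavity theorem, which states that $A\mapsto\mathrm{tr}\,\exp(H+\log A)$ is concave on positive-definite $A$. Combining Lieb's theorem with Jensen's inequality and independence gives subadditivity of the matrix cumulant generating function,
$$\mathbb{E}[\mathrm{tr}\,\exp(\theta\textstyle\sum_k X_k)]\leq\mathrm{tr}\,\exp\!\Big(\sum_k\log\mathbb{E}[e^{\theta X_k}]\Big).$$
It then remains to bound each $\log\mathbb{E}[e^{\theta X_k}]$ in the Loewner order. For this I would use the scalar inequality $e^{\theta x}\leq 1+\theta x+g(\theta)x^2$, valid for all $x\leq R$, where $g(\theta)=(e^{\theta R}-\theta R-1)/R^2$; this holds because $x\mapsto(e^{\theta x}-\theta x-1)/x^2$ is increasing. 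Transferring this to the self-adjoint $X_k$ (whose eigenvalues satisfy $\lambda_{\mathrm{max}}(X_k)\leq R$) via the spectral transfer rule gives $e^{\theta X_k}\preceq I+\theta X_k+g(\theta)X_k^2$. Taking expectations and using $\mathbb{E}[X_k]=0$, then $I+A\preceq e^A$ and the operator monotonicity of $\log$, yields $\log\mathbb{E}[e^{\theta X_k}]\preceq g(\theta)\,\mathbb{E}[X_k^2]$.

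Assembling these pieces, I would sum the per-summand bounds, then use monotonicity of the trace exponential and the crude inequality $\mathrm{tr}\,e^{M}\leq d\,\lambda_{\mathrm{max}}(e^M)=d\,e^{\lambda_{\mathrm{max}}(M)}$ to obtain
$$\mathbb{P}(\lambda_{\mathrm{max}}(Y)\geq t)\leq d\,\exp\!\big(-\theta t+g(\theta)\,\sigma^2\big),\qquad \sigma^2=\|\textstyle\sum_k\mathbb{E}[X_k^2]\|.$$
Finally I would optimize the exponent over $\theta$. Rather than differentiating $g$ exactly, the cleanest path is the series estimate $g(\theta)\leq(\theta^2/2)/(1-\theta R/3)$ for $0<\theta<3/R$ (from the termwise bound $2\cdot3^{\,j-2}\leq j!$ in the Taylor series of $e^{\theta R}$), followed by the explicit choice $\theta=t/(\sigma^2+Rt/3)$, which satisfies $1-\theta R/3=\sigma^2/(\sigma^2+Rt/3)$ and collapses the exponent to $-t^2/\big(2(\sigma^2+Rt/3)\big)$, giving exactly the claimed bound.

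The main obstacle is the noncommutative decoupling in the second step: the inequality $\mathbb{E}[\mathrm{tr}\,\exp(\sum_k\theta X_k)]\leq\mathrm{tr}\,\exp(\sum_k\log\mathbb{E}[e^{\theta X_k}])$ is precisely where matrix analysis goes beyond the scalar Chernoff bound, and it rests on the deep Lieb concavity theorem; a weaker substitute such as Golden--Thompson (valid only for two factors and then iterated) would lose constants and fail to produce the sharp $Rt/3$ term. Everything else---the scalar transfer inequality and the final scalar optimization---is routine once this subadditivity is in hand.
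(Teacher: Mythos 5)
Your proposal is correct, but note that the paper does not prove this statement at all: it is quoted verbatim as Theorem 1.4 of \cite{tropp2012user} inside the proof of Lemma 3.1, so the only ``proof'' on record is Tropp's. Your sketch faithfully reconstructs exactly that canonical argument --- the matrix Laplace transform bound, subadditivity of the matrix cumulant generating function via Lieb's concavity theorem, the scalar transfer bound $e^{\theta x}\leq 1+\theta x+g(\theta)x^{2}$ with $g(\theta)=(e^{\theta R}-\theta R-1)/R^{2}$, and the choice $\theta=t/(\sigma^{2}+Rt/3)$ --- and all the details you give (including the estimate $g(\theta)\leq(\theta^{2}/2)/(1-\theta R/3)$ and the resulting exponent $-(t^{2}/2)/(\sigma^{2}+Rt/3)$) check out.
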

	Let $e_{i}$ be an $n\times 1$ vector, where $e_{i}(i)=1$ and 0 elsewhere, for nodes $1\leq i\leq n$. For convenience, set $W=A-\Omega$. Then we can write $W$ as $W=\sum_{i=1}^{n}\sum_{j=1}^{n}W(i,j)e_{i}e'_{j}$. Set $W^{(i,j)}$ as the $n\times n$ matrix such that $W^{(i,j)}=W(i,j)(e_{i}e'_{j}+e_{j}e_{i}')$, which gives that $W=\sum_{1\leq i< j\leq n}W^{(i,j)}$. Then we have $\mathbb{E}[W^{(i,j)}]=0$ and
	\begin{align*} \|W^{(i,j)}\|&=\|W(i,j)(e_{i}e'_{j}+e_{j}e_{i})\|=|W(i,j)|\|(e_{i}e'_{j}+e_{j}e_{i}')\|=|W(i,j)|=|A(i,j)-\Omega(i,j)|\leq1.
	\end{align*}
	Next we consider the variance parameter
	\begin{align*}
	\sigma^{2}:=\|\sum_{1\leq i<j\leq n}\mathbb{E}[(W^{(i,j)})^{2}]\|.
	\end{align*}
	We obtain the bound of $\mathbb{E}(W^{2}(i,j))$ as below
	\begin{align*} \mathbb{E}(W^{2}(i,j))&=\mathbb{E}((A(i,j)-\Omega(i,j))^{2})=\mathbb{E}((A(i,j)-\mathbb{E}(A(i,j)))^{2})=\mathrm{Var}(A(i,j))\\
	&= \Omega(i,j)(1-\Omega(i,j))\leq \Omega(i,j)=\theta(i)\theta(j)\Pi(i,:)P\Pi'(j,:)\leq \theta(i)\theta(j),
	\end{align*}
	where we have used the fact that $\Pi(i,:)P\Pi'(j,:)\leq 1$.
	Next we bound $\sigma^{2}$ as below
	\begin{align*}
	\sigma^{2}&=\|\sum_{1\leq i<j\leq n}\mathbb{E}(W^{2}(i,j))(e_{i}e_{j}'+e_{j}e_{i}')(e_{i}e_{j}'+e_{j}e_{i}')\|\\
	&=\|\sum_{1\leq i<j\leq n}\mathbb{E}[W^{2}(i,j)(e_{i}e'_{i}+e_{j}e_{j}')]\|\\
	&\leq\underset{1\leq i\leq n}{\mathrm{max}}|\sum_{j=1}^{n}\mathbb{E}(W^{2}(i,j))|\leq \underset{1\leq i\leq n}{\mathrm{max}}\sum_{j=1}^{n}\theta(i)\theta(j)\leq \theta_{\mathrm{max}}\|\theta\|_{1}.
	\end{align*}
	Set $t=\sqrt{\frac{32}{3}\theta_{\mathrm{max}}\|\theta\|_{1}\mathrm{log}(n^{\alpha}K^{-\beta})}$, according to Theorem \ref{Bern} with $\sigma^{2}\leq \theta_{\mathrm{max}}\|\theta\|_{1}, R=1, d=n$, we have
	\begin{align*}
	\mathbb{P}(\|W\|\geq t)&=\mathbb{P}(\|\sum_{1\leq i<j\leq n}W^{(i,j)}\|\geq t)\leq n \mathrm{exp}(\frac{-t^{2}/2}{\sigma^{2}+Rt/3})\leq n\mathrm{exp}(\frac{-\frac{16}{3}\mathrm{log}(n^{\alpha}K^{-\beta})}{1+\frac{1}{3}\sqrt{\frac{32\mathrm{log}(n^{\alpha}K^{-\beta})}{3\theta_{\mathrm{max}}\|\theta\|_{1}}}})\leq \frac{K^{4\beta}}{n^{4\alpha-1}},
	\end{align*}
	where we have  use assumption (A1) such that $1+\frac{1}{3}\sqrt{32\mathrm{log}(n^{\alpha}K^{-\beta})/(3\theta_{\mathrm{max}}\|\theta\|_{1})}\leq \frac{4}{3}$ for sufficiently large $n$ in the  last inequality. Hence, with probability at least $1-o(\frac{K^{4\beta}}{n^{4\alpha-1}})$, we have
	\begin{align*}
	\|\mathscr{D}^{-1/2}_{\tau}A\mathscr{D}^{-1/2}_{\tau}-\mathscr{D}^{-1/2}_{\tau}\Omega\mathscr{D}^{-1/2}_{\tau}\|\leq \frac{C\sqrt{\theta_{\mathrm{max}}\|\theta\|_{1}\mathrm{log}(n^{\alpha}K^{-\beta})}}{\tau+\delta_{\mathrm{min}}}.
	\end{align*}
	Now, set $\tilde{t}=\frac{C\sqrt{\theta_{\mathrm{max}}\|\theta\|_{1}\mathrm{log}(n^{\alpha}K^{-\beta})}}{\tau+\delta_{\mathrm{min}}}$ for convenience.
	
	For the second term $\|D_{\tau}^{-1/2}AD^{-1/2}_{\tau}-\mathscr{D}^{-1/2}_{\tau}A\mathscr{D}^{-1/2}_{\tau}\|$. Since
	\begin{align*}
	\|L_{\tau}\|&=\|D^{-1/2}_{\tau}AD^{-1/2}_{\tau}\|=\|D^{-1/2}_{\tau}D^{1/2}D^{-1/2}AD^{-1/2}D^{1/2}D^{-1/2}_{\tau}\|\\
	&\leq\|D^{-1/2}_{\tau}D^{1/2}\|\|D^{-1/2}AD^{-1/2}\|\|D^{1/2}D^{-1/2}_{\tau}\|=\|D^{-1/2}_{\tau}D^{1/2}\|\|D^{1/2}D^{-1/2}_{\tau}\|\leq 1,
	\end{align*}
	we have
	\begin{align*}
	&\|D_{\tau}^{-1/2}AD^{-1/2}_{\tau}-\mathscr{D}^{-1/2}_{\tau}A\mathscr{D}^{-1/2}_{\tau}\|\\
	&=\|D_{\tau}^{-1/2}AD^{-1/2}_{\tau}-\mathscr{D}_{\tau}^{-1/2}D_{\tau}^{1/2}L_{\tau}D_{\tau}^{1/2}\mathscr{D}_{\tau}^{-1/2}\|\\
	&=\|(I-\mathscr{D}^{-1/2}_{\tau}D^{1/2}_{\tau})L_{\tau}D_{\tau}^{1/2}\mathscr{D}_{\tau}^{-1/2}+L_{\tau}(I-D_{\tau}^{1/2}\mathscr{D}_{\tau}^{-1/2})\|\\
	&\leq\|I-\mathscr{D}^{-1/2}_{\tau}D^{1/2}_{\tau}\|\|L_{\tau}\|\|D_{\tau}^{1/2}\mathscr{D}_{\tau}^{-1/2}\|+\|L_{\tau}\|\|I-D_{\tau}^{1/2}\mathscr{D}_{\tau}^{-1/2}\|\\
	&\leq\|I-\mathscr{D}^{-1/2}_{\tau}D^{1/2}_{\tau}\|\|D_{\tau}^{1/2}\mathscr{D}_{\tau}^{-1/2}\|+\|I-D_{\tau}^{1/2}\mathscr{D}_{\tau}^{-1/2}\|\\
	&\leq\|I-\mathscr{D}^{-1/2}_{\tau}D^{1/2}_{\tau}\|\|D_{\tau}^{1/2}\mathscr{D}_{\tau}^{-1/2}-I+I\|+\|I-D_{\tau}^{1/2}\mathscr{D}_{\tau}^{-1/2}\|\\
	&\leq\|I-\mathscr{D}^{-1/2}_{\tau}D^{1/2}_{\tau}\|(\|D_{\tau}^{1/2}\mathscr{D}_{\tau}^{-1/2}-I\|+\|I\|)+\|I-D_{\tau}^{1/2}\mathscr{D}_{\tau}^{-1/2}\|\\
	&=\|I-\mathscr{D}^{-1/2}_{\tau}D^{1/2}_{\tau}\|(\|D_{\tau}^{1/2}\mathscr{D}_{\tau}^{-1/2}-I\|+1)+\|I-D_{\tau}^{1/2}\mathscr{D}_{\tau}^{-1/2}\|\\
	&=2\|I-D_{\tau}^{1/2}\mathscr{D}_{\tau}^{-1/2}\|+\|I-D_{\tau}^{1/2}\mathscr{D}_{\tau}^{-1/2}\|^{2}.
	\end{align*}
	Next we bound $\|I-D_{\tau}^{1/2}\mathscr{D}_{\tau}^{-1/2}\|$. Apply the two sided concentration inequality for each $i, 1\leq i\leq n$, (see for example \cite{chung2006complex}, chap. 2)
	\begin{align*}
	\mathbb{P}(|D(i,i)-\mathscr{D}(i,i)|\geq \varrho)&\leq \mathrm{exp}(-\frac{\varrho^{2}}{2\mathscr{D}(i,i)})+\mathrm{exp}(-\frac{\varrho^{2}}{2\mathscr{D}(i,i)+\frac{2}{3}\varrho}).
	\end{align*}
	Let $\varrho=\tilde{t}(\mathscr{D}(i,i)+\tau)$, we have
	\begin{align*}
	&\mathbb{P}(|D(i,i)-\mathscr{D}(i,i)|\geq \tilde{t}(\mathscr{D}(i,i)+\tau))\leq \mathrm{exp}(\frac{-\tilde{t}^{2}(\mathscr{D}(i,i)+\tau)^{2}}{2\mathscr{D}(i,i)})+\mathrm{exp}(\frac{-\tilde{t}^{2}(\mathscr{D}(i,i)+\tau)^{2}}{2\mathscr{D}(i,i)+\frac{2}{3}\tilde{t}(\mathscr{D}(i,i)+\tau)})\\
	&\leq 2\mathrm{exp}(-\frac{\tilde{t}^{2}(\mathscr{D}(i,i)+\tau)^{2}}{(2+\frac{2}{3}\tilde{t})(\mathscr{D}(i,i)+\tau)})=2\mathrm{exp}(-\frac{\tilde{t}^{2}(\mathscr{D}(i,i)+\tau)}{2+\frac{2}{3}\tilde{t}})\leq 2\mathrm{exp}(-\frac{\tilde{t}^{2}(\delta_{\mathrm{min}}+\tau)}{2+\frac{2}{3}\tilde{t}})\\
	&=2\mathrm{exp}(-4\mathrm{log}(n^{\alpha}K^{-\beta})\frac{1}{\frac{8(\tau+\delta_{\mathrm{min}})}{C^{2}\theta_{\mathrm{max}}\|\theta\|_{1}}+\frac{8}{3C}\sqrt{\frac{\mathrm{log}(n^{\alpha}K^{-\beta})}{\theta_{\mathrm{max}}\|\theta\|_{1}}}})\leq 2\frac{K^{4\beta}}{n^{4\alpha}},
	\end{align*}
	where we have used the facts that $\tau+\delta_{\mathrm{min}}\leq C\theta_{\mathrm{max}}\|\theta\|_{1}$ and assumption (A1) in the  last inequality (for sufficiently large $n$,  we have $\frac{8(\tau+\delta_{\mathrm{min}})}{C^{2}\theta_{\mathrm{max}}\|\theta\|_{1}}+\frac{8}{3C}\sqrt{\frac{\mathrm{log}(n^{\alpha}K^{-\beta})}{\theta_{\mathrm{max}}\|\theta\|_{1}}}\leq 1$).
	
	Since
	\begin{align*}
	\|I-D_{\tau}^{1/2}\mathscr{D}_{\tau}^{-1/2}\|=\mathrm{max}_{1\leq i\leq n}|\sqrt{\frac{D(i,i)+\tau}{\mathscr{D}(i,i)+\tau}}-1|\leq\mathrm{max}_{1\leq i\leq n}|\frac{D(i,i)+\tau}{\mathscr{D}(i,i)+\tau}-1|,
	\end{align*}
	we have
	\begin{align*}
	\mathbb{P}(\|I-D_{\tau}^{1/2}\mathscr{D}_{\tau}^{-1/2}\|\geq \tilde{t})&\leq \mathbb{P}(\mathrm{max}_{1\leq i\leq n}|\frac{D(i,i)+\tau}{\mathscr{D}(i,i)+\tau}-1|\geq \tilde{t})\\
	&\leq\mathbb{P}(\cup_{1\leq i\leq n}\{|(D(i,i)+\tau)-(\mathscr{D}(i,i)+\tau)|\geq \tilde{t}(\mathscr{D}(i,i)+\tau)\})\\
	&=\mathbb{P}(\cup_{1\leq i\leq n}\{|D(i,i)-\mathscr{D}(i,i)|\geq \tilde{t}(\mathscr{D}_{\tau}(i,i)+\tau)\})\\
	&\leq 2\frac{K^{4\beta}}{n^{4\alpha-1}}.
	\end{align*}
	Therefore, we have
	\begin{align*}
	\|D_{\tau}^{-1/2}AD^{-1/2}_{\tau}-\mathscr{D}^{-1/2}_{\tau}A\mathscr{D}^{-1/2}_{\tau}\|\leq2\|I-D_{\tau}^{1/2}\mathscr{D}_{\tau}^{-1/2}\|+\|I-D_{\tau}^{1/2}\mathscr{D}_{\tau}^{-1/2}\|^{2}\leq 2\tilde{t}+\tilde{t}^{2},
	\end{align*}
	with probability at least $1-o(\frac{K^{4\beta}}{n^{4\alpha-1}})$.
	
	Combining the two parts yields
	\begin{align*}
	&\|L_{\tau}-\mathscr{L}_{\tau}\|\leq \tilde{t}^{2}+3\tilde{t}=O(\frac{\theta_{\mathrm{max}}\|\theta\|_{1}\mathrm{log}(n^{\alpha}K^{-\beta})}{(\tau+\delta_{\mathrm{min}})^{2}})+O(\frac{\sqrt{\theta_{\mathrm{max}}\|\theta\|_{1}\mathrm{log}(n^{\alpha}K^{-\beta})}}{\tau+\delta_{\mathrm{min}}})\notag\\
	&=\begin{cases}
	O(\frac{\sqrt{\theta_{\mathrm{max}}\|\theta\|_{1}\mathrm{log}(n^{\alpha}K^{-\beta})}}{\tau+\delta_{\mathrm{min}}}), & \mbox{when } C\sqrt{\theta_{\mathrm{max}}\|\theta\|_{1}\mathrm{log}(n^{\alpha}K^{-\beta})}\leq \tau+\delta_{\mathrm{min}}\leq C\theta_{\mathrm{max}}\|\theta\|_{1}, \\
	O(\frac{\theta_{\mathrm{max}}\|\theta\|_{1}\mathrm{log}(n^{\alpha}K^{-\beta})}{(\tau+\delta_{\mathrm{min}})^{2}}), & \mbox{when~} \tau+\delta_{\mathrm{min}}<C\sqrt{\theta_{\mathrm{max}}\|\theta\|_{1}\mathrm{log}(n^{\alpha}K^{-\beta})}.
	\end{cases},
	\end{align*}
	with probability at least $1-o(\frac{K^{4\beta}}{n^{4\alpha-1}})$.
\end{proof}

\subsection{Proof of Lemma 3.2}
\begin{proof}
	To prove this lemma, we apply Theorem 4.2.1 \citep{chen2020spectral} and Lemma 5.1 \citep{lei2015consistency} where Lemma 5.1 \citep{lei2015consistency} is obtained based on the Davis-Kahan theorem \citep{yu2015a}. First, we use Theorem 4.2.1 \citep{chen2020spectral} to bound $\|\hat{V}\mathrm{sgn}(H)-V\|_{2\rightarrow\infty}$ where $\mathrm{sgn}(H)$ is defined below. Let $H=\hat{V}'V$, and $H=U_{H}\Sigma_{H}V'_{H}$ be the SVD decomposition of $H$ with $U_{H},V_{H}\in \mathbb{R}^{n\times K}$, where $U_{H}$ and $V_{H}$ represent respectively the left and right singular matrices of $H$. Define $\mathrm{sgn}(H)=U_{H}V'_{H}$. Since $\mathbb{E}(A(i,j)-\Omega(i,j))=0$, $\mathbb{E}[(L_{\tau}(i,j)-\mathscr{L}_{\tau}(i,j))^{2}]=\mathbb{E}[(\frac{A(i,j)}{\sqrt{(\tau+D(i,i))(\tau+D(j,j))}}-\frac{\Omega(i,j)}{\sqrt{(\tau+\mathscr{D}(i,i))(\tau+\mathscr{D}(j,j))}})^{2}]\leq \frac{\mathbb{E}[(A(i,j)-\Omega(i,j))^{2}]}{\mathrm{min}((\tau+1)^{2},(\tau+\delta_{\mathrm{min}})^{2})}=\frac{\mathrm{Var}(A(i,j))}{\tilde{\tau}^{2}}=\Omega(i,j)(1-\Omega(i,j))/\tilde{\tau}^{2}\leq\Omega(i,j)/\tilde{\tau}^{2}\leq \frac{\theta^{2}_{\mathrm{max}}}{\tilde{\tau}^{2}}, |L_{\tau}(i,j)-\mathscr{L}_{\tau}(i,j)|\leq \mathrm{max}(\frac{1}{\tau+1},\frac{1}{\tau+\delta_{\mathrm{min}}})=\frac{1}{\tilde{\tau}}$ where we set $\tilde{\tau}=\mathrm{min}(\tau+1,\tau+\delta_{\mathrm{min}})$, then by assumption (A1), Lemma \ref{P2} and basic algebra, we have $c_{b}=\frac{1}{\tilde{\tau} \frac{\theta_{\mathrm{max}}}{\tilde{\tau}}\sqrt{n/(\mu \mathrm{log}(n))}}=\frac{\|V\|_{2\rightarrow\infty}}{\theta_{\mathrm{max}}}\sqrt{\frac{\mathrm{log}(n)}{K}}\leq \frac{1}{\theta_{\mathrm{min}}}\sqrt{\frac{\tau+\delta_{\mathrm{max}}}{\tau+\delta_{\mathrm{min}}}}\sqrt{\frac{\mathrm{log}(n)}{K\lambda_{K}(\Pi'\Pi)}}=C\frac{1}{\theta_{\mathrm{min}}}\sqrt{\frac{\tau+\delta_{\mathrm{max}}}{\tau+\delta_{\mathrm{min}}}}\sqrt{\frac{\mathrm{log}(n)}{n}}=C\frac{\theta_{\mathrm{max}}}{\theta_{\mathrm{min}}}\sqrt{\frac{\tau+\delta_{\mathrm{max}}}{\tau+\delta_{\mathrm{min}}}}\sqrt{\frac{\mathrm{log}(n)}{\theta^{2}_{\mathrm{max}}n}}\leq C\frac{\theta_{\mathrm{max}}}{\theta_{\mathrm{min}}}\sqrt{\frac{\mathrm{log}(n)}{\theta_{\mathrm{max}}\|\theta_{1}\|}}\leq O(1)$ where $\mu=\frac{n\|V\|^{2}_{2\rightarrow\infty}}{K}$. Meanwhile, since we can simply set $O((\tau+\delta_{\mathrm{min}})/\tilde{\tau})=O(1)$,the requirement $|\lambda_{K}|\geq C\frac{\theta_{\mathrm{max}}}{\tilde{\tau}}\sqrt{n\mathrm{log}(n)}$ in Theorem 4.2.1. \cite{chen2020spectral}  reads $|\lambda_{K}|\geq C\frac{\theta_{\mathrm{max}}\sqrt{n\mathrm{log}(n)}}{\tau+\delta_{\mathrm{min}}}$. Now, Theorem 4.2.1. \cite{chen2020spectral} gives that with high probability,
	\begin{align*}
	\|\hat{V}\mathrm{sgn}(H)-V\|_{2\rightarrow\infty}\leq \frac{\kappa(\mathscr{L}_{\tau})\theta_{\mathrm{max}}\sqrt{K\mu}+\theta_{\mathrm{max}}\sqrt{K\mathrm{log}(n)}}{|\lambda_{K}|(\tau+\delta_{\mathrm{min}})}.
	\end{align*}
	By Lemma \ref{P2},  $\mu\leq\frac{(\tau+\delta_{\mathrm{max}})\theta^{2}_{\mathrm{max}}n}{(\tau+\delta_{\mathrm{min}})\theta^{2}_{\mathrm{min}}K\lambda_{K}(\Pi'\Pi)}$. Since $\tilde{\theta}^{2}_{\mathrm{min}}\geq \frac{\theta^{2}_{\mathrm{min}}}{\tau+\delta_{\mathrm{max}}}$, by Lemma \ref{P4}, we have $|\lambda_{K}|\geq\frac{\theta^{2}_{\mathrm{min}}}{\tau+\delta_{\mathrm{max}}}|\lambda_{K}(P)|\lambda_{K}(\Pi'\Pi)$. Then we have
	\begin{align*}	&\|\hat{V}\mathrm{sgn}(H)-V\|_{2\rightarrow\infty}\leq \frac{\kappa(\mathscr{L}_{\tau})\theta_{\mathrm{max}}\sqrt{K\mu}+\theta_{\mathrm{max}}\sqrt{K\mathrm{log}(n)}}{|\lambda_{K}|(\tau+\delta_{\mathrm{min}})}\\
	&\leq C\frac{(\tau+\delta_{\mathrm{max}})\theta_{\mathrm{max}}}{(\tau+\delta_{\mathrm{min}})\theta^{2}_{\mathrm{min}}|\lambda_{K}(P)|\lambda_{K}(\Pi'\Pi)}(\frac{\kappa(\mathscr{L}_{\tau})\theta_{\mathrm{max}}}{\theta_{\mathrm{min}}}\sqrt{\frac{(\tau+\delta_{\mathrm{max}})n}{(\tau+\delta_{\mathrm{min}})\lambda_{K}(\Pi'\Pi)}}+\sqrt{K\mathrm{log}(n)})\\
	&=O(\frac{(\tau+\delta_{\mathrm{max}})\theta_{\mathrm{max}}}{(\tau+\delta_{\mathrm{min}})\theta^{2}_{\mathrm{min}}|\lambda_{K}(P)|\lambda_{K}(\Pi'\Pi)}\mathrm{max}(\frac{\kappa(\mathscr{L}_{\tau})\theta_{\mathrm{max}}}{\theta_{\mathrm{min}}}\sqrt{\frac{(\tau+\delta_{\mathrm{max}})n}{(\tau+\delta_{\mathrm{min}})\lambda_{K}(\Pi'\Pi)}},\sqrt{K\mathrm{log}(n)}))\\
	&=O(\frac{(\tau+\delta_{\mathrm{max}})\theta_{\mathrm{max}}\sqrt{K\mathrm{log}(n)}}{(\tau+\delta_{\mathrm{min}})\theta^{2}_{\mathrm{min}}|\lambda_{K}(P)|\lambda_{K}(\Pi'\Pi)}).
	\end{align*}	
	Second, we apply the principal subspace perturbation introduced in Lemma 5.1 \citep{lei2015consistency} to bound $\|V-\hat{V}\mathrm{sgn}(H)\|_{F}$. We write this lemma as below
	\begin{lem}\label{PSP}
		(Principal subspace perturbation \citep{lei2015consistency}). Assume that $X\in\mathbb{R}^{n\times n}$ is a rank $K$ symmetric matrix with smallest nonzero singular value $\sigma_{K}(X)$. Let $\hat{X}$ be any symmetric matrix and $\hat{U},U\in\mathbb{R}^{n\times K}$ be the $K$ leading eigenvectors of $\hat{X}$ and $X$, respectively. Then there exists a $K\times K$ orthogonal matrix $\hat{O}$ such that
		\begin{align*}
		\|U-\hat{U}\hat{O}\|_{F}\leq \frac{2\sqrt{2K}\|\hat{X}-X||}{\sigma_{K}(X)}.
		\end{align*}
	\end{lem}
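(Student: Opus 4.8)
The final statement is a Frobenius-norm Davis--Kahan bound attributed to \cite{lei2015consistency}, so the plan is to reduce it entirely to the ``useful variant'' of the $\sin\Theta$ theorem established in \cite{yu2015a} (which is already invoked elsewhere in this paper). First I would record the spectral setup. Since $X$ is symmetric of rank $K$, its eigenvalues, ordered by magnitude, satisfy $|\lambda_1(X)|\ge\cdots\ge|\lambda_K(X)|=\sigma_K(X)>0$ while $\sigma_{K+1}(X)=\cdots=\sigma_n(X)=0$. Hence the column space of $U$ is exactly the range of $X$, and there is a clean spectral gap of size $\sigma_K(X)-\sigma_{K+1}(X)=\sigma_K(X)$ separating the $K$ retained eigenvalues from the discarded zero block; the matrix $\hat U$ collects the $K$ leading (by magnitude) eigenvectors of the symmetric perturbation $\hat X$.

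Second, I would quote the two-matrix form of the Davis--Kahan theorem from \cite{yu2015a}: there exists a $K\times K$ orthogonal matrix $\hat O$ with
\begin{align*}
\|\hat U\hat O-U\|_{F}\le \frac{2^{3/2}\,\min\{\sqrt{K}\,\|\hat X-X\|,\ \|\hat X-X\|_{F}\}}{\sigma_K(X)-\sigma_{K+1}(X)}.
\end{align*}
The reason to use this particular version is that it controls the \emph{aligned} difference $\|\hat U\hat O-U\|_{F}$ directly, rather than only the canonical-angle quantity $\|\sin\Theta\|_{F}$; this is precisely the form needed in the continuation of the proof of Lemma 3.2, where $\hat U\hat O$ is compared entrywise to $U$.

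Third, I would simplify using the rank hypothesis and match constants. Because $X$ has rank $K$, the denominator collapses to $\sigma_K(X)$; bounding the minimum in the numerator by its first argument gives $2^{3/2}\sqrt{K}\,\|\hat X-X\|$, and since $2^{3/2}\sqrt{K}=2\sqrt{2K}$ this yields exactly $\|U-\hat U\hat O\|_{F}\le \frac{2\sqrt{2K}\,\|\hat X-X\|}{\sigma_K(X)}$, as claimed. The only genuine obstacle is justifying that the eigengap entering Davis--Kahan is indeed $\sigma_K(X)$ rather than a smaller internal gap; this is where both the rank-$K$ and symmetry assumptions are essential, since they guarantee that the target subspace is the entire range of $X$ with a single clean separation $\sigma_K(X)-0$, so no eigenvalue-interlacing or partial-block subtlety arises. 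Everything else is a direct substitution of constants and requires no lengthy computation.
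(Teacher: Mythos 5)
Your overall strategy---deriving the bound from the Davis--Kahan variant of \cite{yu2015a}---is reasonable; indeed the paper itself gives no proof of this statement but simply imports it as Lemma 5.1 of \cite{lei2015consistency}, and your constant matching $2^{3/2}\sqrt{K}=2\sqrt{2K}$ is correct. However, the step you explicitly dismiss as a non-issue is exactly where your argument breaks. Theorem 2 of \cite{yu2015a} applies to a \emph{contiguous} block $v_r,\dots,v_s$ of eigenvectors in the \emph{signed} eigenvalue ordering $\lambda_1\geq\cdots\geq\lambda_n$, with gap $\mathrm{min}(\lambda_{r-1}-\lambda_r,\lambda_s-\lambda_{s+1})$. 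In this paper ``leading'' means leading in \emph{magnitude} (see the notation section), and the rank-$K$ symmetric matrix $X$ may be indefinite: its nonzero eigenvalues split into $K_+$ positive ones (indices $1,\dots,K_+$) and $K_-$ negative ones (indices $n-K_-+1,\dots,n$), which is \emph{not} a contiguous block, so the theorem cannot be invoked in one shot; relatedly, your denominator ``$\sigma_K(X)-\sigma_{K+1}(X)$'' conflates singular values with the signed eigenvalue gaps that the theorem actually uses. This is not a corner case here: $X=\mathscr{L}_{\tau}=\tilde{\Theta}\Pi P\Pi'\tilde{\Theta}$ with $P$ assumed only full rank, and the paper stresses that Lemma 3.2 ``considers both positive and negative eigenvalues'' of $\mathscr{L}_{\tau}$ and $L_{\tau}$. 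Symmetry plus rank $K$ do not rescue the one-shot application; they only guarantee a magnitude gap, not a single interval gap.

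The gap is fixable, but it needs two further ideas. First, if $\|\hat{X}-X\|\geq\sigma_K(X)/2$ the bound is trivial, since one always has an orthogonal $\hat{O}$ with $\|U-\hat{U}\hat{O}\|_F\leq\sqrt{2}\,\|\sin\Theta(\hat{U},U)\|_F\leq\sqrt{2K}\leq 2\sqrt{2K}\|\hat{X}-X\|/\sigma_K(X)$. Second, if $\|\hat{X}-X\|<\sigma_K(X)/2$, Weyl's inequality guarantees that the $K$ leading-by-magnitude eigenvalues of $\hat{X}$ have the same sign split $(K_+,K_-)$ as those of $X$ (the perturbed nonzero eigenvalues keep magnitude above $\sigma_K/2$ while the perturbed zero eigenvalues stay below it); only then do the columns of $\hat{U}$ correspond block-wise to those of $U$. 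Now apply the theorem of \cite{yu2015a} separately to the top block (gap $\lambda_{K_+}-\lambda_{K_++1}\geq\sigma_K(X)$) and to the bottom block (gap $\lambda_{n-K_-}-\lambda_{n-K_-+1}\geq\sigma_K(X)$), obtaining $\|\hat{V}_{\pm}\hat{O}_{\pm}-V_{\pm}\|_F\leq 2^{3/2}\sqrt{K_{\pm}}\,\|\hat{X}-X\|/\sigma_K(X)$, and combine with the block-diagonal choice $\hat{O}=\mathrm{diag}(\hat{O}_+,\hat{O}_-)$, which gives $\|\hat{U}\hat{O}-U\|_F^2\leq 8(K_++K_-)\|\hat{X}-X\|^2/\sigma_K^2(X)$, i.e.\ exactly the claimed $2\sqrt{2K}\|\hat{X}-X\|/\sigma_K(X)$. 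Without the Weyl/sign-split step and the two-block decomposition, your proof is valid only when $X$ is positive semidefinite, which is strictly narrower than what this paper needs.
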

	Let $\hat{X}=L_{\tau}, X=\mathscr{L}_{\tau}, U=V, \hat{U}=\hat{V}, \sigma_{K}(X)=|\lambda_{K}|$, by Lemma \ref{PSP}, there exists a $K\times K$ orthogonal matrix $\hat{O}$ such that
	\begin{align*}
	\|V-\hat{V}\hat{O}\|_{F}\leq \frac{2\sqrt{2K}\|L_{\tau}-\mathscr{L}_{\tau}||}{|\lambda_{K}|}.
	\end{align*}
	By the proof of Theorem 2 \citep{yu2015a}, we know that $\hat{O}=\mathrm{sgn}(H)$, combine it with Lemma \ref{P4} and 3.1, we have
	\begin{align*}
	\|V-\hat{V}\hat{O}\|_{F}\leq C\frac{\sqrt{K}err_{n}}{\tilde{\theta}^{2}_{\mathrm{min}}|\lambda_{K}(P)|\lambda_{K}(\Pi'\Pi)}.
	\end{align*}
	Now we are ready to bound $\|\hat{V}\hat{V}'-VV'\|_{2\rightarrow\infty}$. Since
	\begin{align*}
	&\|\hat{V}\hat{V}'-VV'\|_{2\rightarrow\infty}=\mathrm{max}_{1\leq i\leq n}\|e'_{i}(VV'-\hat{V}\hat{V}')\|_{F}\\
	&=\mathrm{max}_{1\leq i\leq n}\|e'_{i}(VV'-\hat{V}\mathrm{sgn}(H)V'+\hat{V}\mathrm{sgn}(H)V'-\hat{V}\hat{V}')\|_{F}\\
	&\overset{\mathrm{By~Lemma~A.1~Yu~et~al.~(2015)}}{\leq}\mathrm{max}_{1\leq i\leq n}\|e'_{i}(V-\hat{V}\mathrm{sgn}(H))\|_{F}+\mathrm{max}_{1\leq i\leq n}\|e'_{i}\hat{V}(\mathrm{sgn}(H)V'-\hat{V}')\|_{F}\\	&=\|V-\hat{V}\mathrm{sgn}(H)\|_{2\rightarrow\infty}+\mathrm{max}_{1\leq i\leq n}\|e'_{i}\hat{V}(\mathrm{sgn}(H)V'-\hat{V}')\|_{F}\\ &\leq\|V-\hat{V}\mathrm{sgn}(H)\|_{2\rightarrow\infty}+\mathrm{max}_{1\leq i\leq n}\|e'_{i}\hat{V}\|_{F}\|\mathrm{sgn}(H)V'-\hat{V}'\|_{F}\\ &=\|V-\hat{V}\mathrm{sgn}(H)\|_{2\rightarrow\infty}+\mathrm{max}_{1\leq i\leq n}\|e'_{i}\hat{V}\|_{F}\|V-\hat{V}\mathrm{sgn}(H)\|_{F}\\ &=\|V-\hat{V}\mathrm{sgn}(H)\|_{2\rightarrow\infty}+\mathrm{max}_{1\leq i\leq n}\|e'_{i}(\hat{V}\mathrm{sgn}(H)-V+V)\|_{F}\|V-\hat{V}\mathrm{sgn}(H)\|_{F}\\ &\leq\|V-\hat{V}\mathrm{sgn}(H)\|_{2\rightarrow\infty}+(\|\hat{V}\mathrm{sgn}(H)-V\|_{2\rightarrow\infty}+\|V\|_{2\rightarrow\infty})\|V-\hat{V}\mathrm{sgn}(H)\|_{F}\\ &\overset{\mathrm{By~Lemma~}\ref{P2}}{\leq}\|V-\hat{V}\mathrm{sgn}(H)\|_{2\rightarrow\infty}+(\|\hat{V}\mathrm{sgn}(H)-V\|_{2\rightarrow\infty}+\frac{\tilde{\theta}_{\mathrm{max}}}{\tilde{\theta}_{\mathrm{min}}\sqrt{\lambda_{K}(\Pi'\Pi)}})\|V-\hat{V}\mathrm{sgn}(H)\|_{F}\\ &=O(\frac{(\tau+\delta_{\mathrm{max}})\theta_{\mathrm{max}}\sqrt{K\mathrm{log}(n)}}{(\tau+\delta_{\mathrm{min}})\theta^{2}_{\mathrm{min}}|\lambda_{K}(P)|\lambda_{K}(\Pi'\Pi)}).
	\end{align*}
\end{proof}

\subsection{Proof of Lemma 3.3}
\begin{proof}
	By Lemma 2.6, we see that $V_{*,1}(\mathcal{I},:)$ satisfies condition 1 in \cite{MaoSVM}.
	Meanwhile, since $(V_{*,1}(\mathcal{I},:)V'_{*,1}(\mathcal{I},:))^{-1}\mathbf{1}>0$, we have $(V_{*,1}(\mathcal{I},:)V'_{*,1}(\mathcal{I},:))^{-1}\mathbf{1}\geq \eta\mathbf{1}$, hence $V_{*,1}(\mathcal{I},:)$ satisfies condition 2 in \cite{MaoSVM}.
	
	By Lemma 2.9, we have $V_{*,2}(\mathcal{I},:)V'_{*,2}(\mathcal{I},:)=V_{*,1}(\mathcal{I},:)V'_{*,1}(\mathcal{I},:)$, hence $V_{*,2}(\mathcal{I},:)$ also satisfies conditions 1 and 2 in \cite{MaoSVM}.  The above analysis shows that we can directly apply Lemma F.1 of \cite{MaoSVM} since the Ideal Mixed-RSC satisfies conditions 1 and 2 in \cite{MaoSVM}. Let $\hat{V}_{*,2}$ and $K$ be the inputs of SVM-cone algorithm, there exists a permutation matrix $\mathcal{P}\in\mathbb{R}^{K\times K}$ such that
	\begin{align*}
	\|\hat{V}_{*,2}(\mathcal{\hat{I}},:)-\mathcal{P}V_{*,2}(\mathcal{I},:)\|_{F}= O(\frac{K\zeta\epsilon}{\lambda^{1.5}_{K}(V_{*,2}(\mathcal{I},:))V'_{*,2}(\mathcal{I},:)}),
	\end{align*}
	where $\zeta\leq \frac{4K}{\eta\lambda^{1.5}_{K}(V_{*,2}(\mathcal{I},:)V'_{*,2}(\mathcal{I},:))}=O(\frac{K}{\eta\lambda^{1.5}_{K}(V_{*,1}(\mathcal{I},:)V'_{*,1}(\mathcal{I},:))})$, and $\epsilon=\mathrm{max}_{1\leq i\leq n}\|\hat{V}_{*,2}(i,:)-(V_{*,2}(i,:)\|_{F}$. Next we bound $\epsilon$. Since
	\begin{align*}
	&\|\hat{V}_{*,2}(i,:)-V_{*,2}(i,:)\|_{F}\leq\|\frac{\hat{V}_{2}(i,:)\|V_{2}(i,:)\|_{F}-V_{2}(i,:)\|\hat{V}_{2}(i,:)\|_{F}}{\|\hat{V}_{2}(i,:)\|_{F}\|V_{2}(i,:)\|_{F}}\|_{F}\leq\frac{2\|\hat{V}_{2}(i,:)-V_{2}(i,:)\|_{F}}{\|V_{2}(i,:)\|_{F}}\\
	&\leq \frac{2\|\hat{V}_{2}-V_{2}\|_{2\rightarrow\infty}}{\|V_{2}(i,:)\|_{F}}\leq\frac{2\varpi_{1}}{\|V_{2}(i,:)\|_{F}}=\frac{2\varpi_{1}}{\|(VV')(i,:)\|_{F}}=\frac{2\varpi_{1}}{\|V(i,:)V'\|_{F}}=\frac{2\varpi_{1}}{\|V(i,:)\|_{F}}\\
	&\leq \frac{2\tilde{\theta}_{\mathrm{max}}\varpi_{1}\sqrt{K\lambda_{1}(\Pi'\Pi)}}{\tilde{\theta}_{\mathrm{min}}},
	\end{align*}
	where the last inequality holds by Lemma \ref{P2}, we have $\epsilon=(\frac{\tilde{\theta}_{\mathrm{max}}\varpi_{1}\sqrt{K\lambda_{1}(\Pi'\Pi)}}{\tilde{\theta}_{\mathrm{min}}})$. Finally, by Lemma \ref{P3}, we have
	\begin{align*}
	\|\hat{V}_{*,2}(\mathcal{\hat{I}},:)-\mathcal{P}V_{*,2}(\mathcal{I},:)\|_{F}=O(\frac{\tilde{\theta}^{7}_{\mathrm{max}}K^{2.5}\varpi_{1}\kappa^{3}(\Pi'\Pi)\sqrt{\lambda_{1}(\Pi'\Pi)}}{\eta \tilde{\theta}^{7}_{\mathrm{min}}}).
	\end{align*}
\end{proof}

\subsection{Proof of Lemma 3.4}
\begin{proof}
	For convenience, we set $V_{*,1}(\mathcal{I},:)=V_{C}, \hat{V}_{*,1}(\mathcal{\hat{I}},:)=\hat{V}_{C}$.Then we have
	\begin{align*}
	&\|e'_{i}(\hat{Y}_{\bullet}-Y_{\bullet}\mathcal{P})\|_{F}=\|e'_{i}(\mathrm{max}(\hat{V}\hat{V}'_{C}(\hat{V}_{C}\hat{V}'_{C})^{-1},0)-VV'_{C}(V_{C}V'_{C})^{-1}\mathcal{P})\|_{F}\\
	&\leq\|e'_{i}(\hat{V}\hat{V}'_{C}(\hat{V}_{C}\hat{V}'_{C})^{-1}-VV'_{C}(V_{C}V'_{C})^{-1}\mathcal{P})\|_{F}\\
	&=\|e'_{i}(\hat{V}-V(V'\hat{V}))\hat{V}'_{C}(\hat{V}_{C}\hat{V}'_{C})^{-1}+e'_{i}(V(V'\hat{V})\hat{V}'_{C}(\hat{V}_{C}\hat{V}'_{C})^{-1}\\
	&~~~-V(V'\hat{V})(\mathcal{P}'(V_{C}V'_{C})(V'_{C})^{-1}(V'\hat{V}))^{-1})\|_{F}\\
	&\leq\|e'_{i}(\hat{V}-V(V'\hat{V}))\hat{V}'_{C}(\hat{V}_{C}\hat{V}'_{C})^{-1}\|_{F}+\|e'_{i}V(V'\hat{V})(\hat{V}'_{C}(\hat{V}_{C}\hat{V}'_{C})^{-1}\\
	&~~~-(\mathcal{P}'(V_{C}V'_{C})(V'_{C})^{-1}(V'\hat{V}))^{-1})\|_{F}\\
	&\leq \|e'_{i}(\hat{V}-V(V'\hat{V}))\|_{F}\|\hat{V}'_{C}\|_{F}\|(\hat{V}_{C}\hat{V}'_{C})^{-1}\|_{F}+\|e'_{i}V(V'\hat{V})(\hat{V}'_{C}(\hat{V}_{C}\hat{V}'_{C})^{-1}\\
	&~~~-(\mathcal{P}'(V_{C}V'_{C})(V'_{C})^{-1}(V'\hat{V}))^{-1})\|_{F}\\
	&= \sqrt{K}\|e'_{i}(\hat{V}-V(V'\hat{V}))\|_{F}\|(\hat{V}_{C}\hat{V}'_{C})^{-1}\|_{F}+\|e'_{i}V(V'\hat{V})(\hat{V}^{-1}_{C}-(\mathcal{P}'V_{C}(V'\hat{V}))^{-1})\|_{F}\\
	&\leq K\|e'_{i}(\hat{V}-V(V'\hat{V}))\|_{F}/\lambda_{K}(\hat{V}_{C}\hat{V}'_{C})+\|e'_{i}V(V'\hat{V})(\hat{V}^{-1}_{C}-(\mathcal{P}'V_{C}(V'\hat{V}))^{-1})\|_{F}\\
	&=
	K\|e'_{i}(\hat{V}\hat{V}'-VV')\hat{V}\|_{F}O(\frac{\tilde{\theta}^{2}_{\mathrm{max}}\kappa(\Pi'\Pi)}{\tilde{\theta}^{2}_{\mathrm{min}}})+\|e'_{i}V(V'\hat{V})(\hat{V}^{-1}_{C}-(\mathcal{P}'V_{C}(V'\hat{V}))^{-1})\|_{F}\\
	&\mathrm{By~Lemma~A.1~in~} Yu~et~al.~(2015)\mathrm{~or~Remark~3.2~in~Mao~et~al.~(2020)}\\
	&\leq K\|e'_{i}(\hat{V}\hat{V}'-VV')\|_{F}O(\frac{\tilde{\theta}^{2}_{\mathrm{max}}\kappa(\Pi'\Pi)}{\tilde{\theta}^{2}_{\mathrm{min}}})+\|e'_{i}V(V'\hat{V})(\hat{V}^{-1}_{C}-(\mathcal{P}'V_{C}(V'\hat{V}))^{-1})\|_{F}\\
	&\leq O(\frac{\tilde{\theta}^{2}_{\mathrm{max}}K\varpi_{1}\kappa(\Pi'\Pi)}{\tilde{\theta}^{2}_{\mathrm{min}}})+\|e'_{i}V(V'\hat{V})(\hat{V}^{-1}_{C}-(\mathcal{P}'V_{C}(V'\hat{V}))^{-1})\|_{F},
	\end{align*}
	where we have used similar idea in the proof of Lemma G.3 \cite{mao2020estimating} such that apply $O(\frac{1}{\lambda_{K}(V_{C}V'_{C})})$ to estimate $\frac{1}{\lambda_{K}(\hat{V}_{C}\hat{V}'_{C})}$, then by Lemma \ref{P3}, we have $\frac{1}{\lambda_{K}(\hat{V}_{C}\hat{V}'_{C})}\leq O(\frac{\tilde{\theta}^{2}_{\mathrm{max}}\kappa(\Pi'\Pi)}{\tilde{\theta}^{2}_{\mathrm{min}}})$.
	
	Now we aim to bound $\|e'_{i}V(V'\hat{V})(\hat{V}^{-1}_{C}-(\mathcal{P}'V_{C}(V'\hat{V}))^{-1})\|_{F}$. For convenience, set $T=V'\hat{V}, S=\mathcal{P}'V_{C}T$. We have
	\begin{align*}
	&\|e'_{i}V(V'\hat{V})(\hat{V}^{-1}_{C}-(\mathcal{P}'V_{C}(V'\hat{V}))^{-1})\|_{F}=\|e'_{i}VTS^{-1}(S-\hat{V}_{C})\hat{V}^{-1}_{C}\|_{F}\\
	&\leq\|e'_{i}VTS^{-1}(S-\hat{V}_{C})\|_{F}\|\hat{V}^{-1}_{C}\|_{F}\leq\|e'_{i}VTS^{-1}(S-\hat{V}_{C})\|_{F}\frac{\sqrt{K}}{|\lambda_{K}(\hat{V}_{C})|}\\
	&=\|e'_{i}VTS^{-1}(S-\hat{V}_{C})\|_{F}\frac{\sqrt{K}}{|\sqrt{\lambda_{K}(\hat{V}_{C}\hat{V}'_{C})}|}\leq \|e'_{i}VTS^{-1}(S-\hat{V}_{C})\|_{F}O(\frac{\tilde{\theta}_{\mathrm{max}}\sqrt{K\kappa(\Pi'\Pi)}}{\tilde{\theta}_{\mathrm{min}}})\\
	&=\|e'_{i}VTT^{-1}V'_{C}(V_{C}V'_{C})^{-1}\mathcal{P}(S-\hat{V}_{C})\|_{F}O(\frac{\tilde{\theta}_{\mathrm{max}}\sqrt{K\kappa(\Pi'\Pi)}}{\tilde{\theta}_{\mathrm{min}}})\\
	&=\|e'_{i}VV'_{C}(V_{C}V'_{C})^{-1}\mathcal{P}(S-\hat{V}_{C})\|_{F}O(\frac{\tilde{\theta}_{\mathrm{max}}\sqrt{K\kappa(\Pi'\Pi)}}{\tilde{\theta}_{\mathrm{min}}})\\
	&=\|e'_{i}Y_{\bullet}\mathcal{P}(S-\hat{V}_{C})\|_{F}O(\frac{\tilde{\theta}_{\mathrm{max}}\sqrt{K\kappa(\Pi'\Pi)}}{\tilde{\theta}_{\mathrm{min}}})\\
	&\leq\|e'_{i}Y_{\bullet}\|_{F}\|S-\hat{V}_{C}\|_{F}O(\frac{\tilde{\theta}_{\mathrm{max}}\sqrt{K\kappa(\Pi'\Pi)}}{\tilde{\theta}_{\mathrm{min}}})\\
	&\overset{\mathrm{By~the~proof~of~Lemma~}3.5}{\leq}\frac{\tilde{\theta}^{2}_{\mathrm{max}}\sqrt{K\kappa(\Pi'\Pi)}}{\tilde{\theta}^{2}_{\mathrm{min}}\sqrt{\lambda_{K}(\Pi'\Pi)}}\|S-\hat{V}_{C}\|_{F}O(\frac{\tilde{\theta}_{\mathrm{max}}\sqrt{K\kappa(\Pi'\Pi)}}{\tilde{\theta}_{\mathrm{min}}})\\
	&=\|\hat{V}_{C}-\mathcal{P}'V_{C}V'\hat{V}\|_{F}\frac{\tilde{\theta}^{3}_{\mathrm{max}}K\kappa(\Pi'\Pi)}{\tilde{\theta}^{3}_{\mathrm{min}}\sqrt{\lambda_{K}(\Pi'\Pi)}}=\|(\hat{V}_{C}\hat{V}'-\mathcal{P}'V_{C}V')\hat{V}\|_{F}\frac{\tilde{\theta}^{3}_{\mathrm{max}}K\kappa(\Pi'\Pi)}{\tilde{\theta}^{3}_{\mathrm{min}}\sqrt{\lambda_{K}(\Pi'\Pi)}}\\
	&\leq\|\hat{V}_{C}\hat{V}'-\mathcal{P}'V_{C}V'\|_{F}\frac{\tilde{\theta}^{3}_{\mathrm{max}}K\kappa(\Pi'\Pi)}{\tilde{\theta}^{3}_{\mathrm{min}}\sqrt{\lambda_{K}(\Pi'\Pi)}}\\
	&\overset{\mathrm{By~Lemma~}2.9}{=}\|\hat{V}_{2C}-\mathcal{P}'V_{2C}\|_{F}\frac{\tilde{\theta}^{3}_{\mathrm{max}}K\kappa(\Pi'\Pi)}{\tilde{\theta}^{3}_{\mathrm{min}}\sqrt{\lambda_{K}(\Pi'\Pi)}}\\
	&\leq(\|\hat{V}_{2C}-\mathcal{P}V_{2C}\|_{F}+\|(\mathcal{P}-\mathcal{P}')V_{2C}\|_{F})\frac{\tilde{\theta}^{3}_{\mathrm{max}}K\kappa(\Pi'\Pi)}{\tilde{\theta}^{3}_{\mathrm{min}}\sqrt{\lambda_{K}(\Pi'\Pi)}}\\
	&\overset{\mathrm{By~Lemma~}3.3}{=}(O(\frac{\tilde{\theta}^{7}_{\mathrm{max}}K^{2.5}\varpi_{1}\kappa^{3}(\Pi'\Pi)\sqrt{\lambda_{1}(\Pi'\Pi)}}{\eta \tilde{\theta}^{7}_{\mathrm{min}}})+\|(\mathcal{P}-\mathcal{P}')V_{2C}\|_{F})\frac{\tilde{\theta}^{3}_{\mathrm{max}}K\kappa(\Pi'\Pi)}{\tilde{\theta}^{3}_{\mathrm{min}}\sqrt{\lambda_{K}(\Pi'\Pi)}}\\
	&\leq(O(\frac{\tilde{\theta}^{7}_{\mathrm{max}}K^{2.5}\varpi_{1}\kappa^{3}(\Pi'\Pi)\sqrt{\lambda_{1}(\Pi'\Pi)}}{\eta \tilde{\theta}^{7}_{\mathrm{min}}})+K\sqrt{2})\frac{\tilde{\theta}^{3}_{\mathrm{max}}K\kappa(\Pi'\Pi)}{\tilde{\theta}^{3}_{\mathrm{min}}\sqrt{\lambda_{K}(\Pi'\Pi)}}\\
	&=O(\frac{\tilde{\theta}^{10}_{\mathrm{max}}K^{3.5}\varpi_{1}\kappa^{4.5}(\Pi'\Pi)}{\eta \tilde{\theta}^{10}_{\mathrm{min}}}).
	\end{align*}
	Then, we have
	\begin{align*}
	&\|e'_{i}(\hat{Y}_{\bullet}-Y_{\bullet}\mathcal{P})\|_{F}\leq O(\frac{\tilde{\theta}^{2}_{\mathrm{max}}K\varpi_{1}\kappa(\Pi'\Pi)}{\tilde{\theta}^{2}_{\mathrm{min}}})+\|e'_{i}V(V'\hat{V})(\hat{V}^{-1}_{C}-(\mathcal{P}'V_{C}(V'\hat{V}))^{-1})\|_{F}\\
	&=O(\frac{\tilde{\theta}^{2}_{\mathrm{max}}K\varpi_{1}\kappa(\Pi'\Pi)}{\tilde{\theta}^{2}_{\mathrm{min}}})+O(\frac{\tilde{\theta}^{10}_{\mathrm{max}}K^{3.5}\varpi_{1}\kappa^{4.5}(\Pi'\Pi)}{\eta \tilde{\theta}^{10}_{\mathrm{min}}})\\
	&=O(\frac{\tilde{\theta}^{10}_{\mathrm{max}}K^{3.5}\varpi_{1}\kappa^{4.5}(\Pi'\Pi)}{\eta \tilde{\theta}^{10}_{\mathrm{min}}}).
	\end{align*}
\end{proof}

\subsection{Proof of Lemma 3.5}
\begin{proof}
	For convenience, denote $\varpi_{2}=\|\hat{V}_{2,*}(\mathcal{\hat{I}},:)-\mathcal{P}V_{2,*}(\mathcal{I},:)\|_{F}$.
	We begin the proof by providing bounds for several items used in our proof.
	\begin{itemize}
		\item For $1\leq i\leq n$, by Lemma \ref{P2}, we have $N(i,i)=\frac{1}{\|V(i,:)\|_{F}}\leq\frac{\tilde{\theta}_{\mathrm{max}}\sqrt{K\lambda_{1}(\Pi'\Pi)}}{\tilde{\theta}_{\mathrm{min}}}$ and $N(i,i)\geq\frac{\tilde{\theta}_{\mathrm{min}}\sqrt{\lambda_{K}(\Pi'\Pi)}}{\tilde{\theta}_{\mathrm{max}}}$.
		\item  Since $\mathscr{L}_{\tau}(\mathcal{I},\mathcal{I})=\tilde{\Theta}(\mathcal{I},\mathcal{I})P\tilde{\Theta}(\mathcal{I},\mathcal{I})=V(\mathcal{I},:)EV'(\mathcal{I},:)$ and $P$ has unit diagonal entries, for $1\leq k\leq K$,  we have $\sqrt{(\mathrm{diag}(V(\mathcal{I},:)EV'(\mathcal{I},:)))(k,k)}=(\tilde{\Theta}(\mathcal{I},\mathcal{I}))(k,k)\geq \tilde{\theta}_{\mathrm{min}}$.  For $1\leq k\leq K$, let $J_{k}$ be the $k$-th diagonal entry of $J$, we have
		\begin{align*}
		J_{k}&=(N(\mathcal{I},\mathcal{I}))(k,k)\sqrt{(\mathrm{diag}(V(\mathcal{I},:)EV'(\mathcal{I},:)))(k,k)}\geq (N(\mathcal{I},\mathcal{I}))(k,k)\tilde{\theta}_{\mathrm{min}}\\
		&\geq\frac{\tilde{\theta}^{2}_{\mathrm{min}}\sqrt{\lambda_{K}(\Pi'\Pi)}}{\tilde{\theta}_{\mathrm{max}}},
		\end{align*}
		and $J_{k}\leq\frac{\tilde{\theta}^{2}_{\mathrm{max}}\sqrt{K\lambda_{1}(\Pi'\Pi)}}{\tilde{\theta}_{\mathrm{min}}}.$ Meanwhile, we also have $\|J\|_{F}\leq\frac{\tilde{\theta}^{2}_{\mathrm{max}}K\sqrt{\lambda_{1}(\Pi'\Pi)}}{\tilde{\theta}_{\mathrm{min}}}$.
		\item For $1\leq i\leq n$, since $Y_{\bullet}=VV'_{*}(\mathcal{I},:)(V_{*}(\mathcal{I},:)V'_{*}(\mathcal{I},:))^{-1}=Y_{\bullet}V^{-1}_{*}(\mathcal{I},:)$, we have
		\begin{align*}
		&\|e'_{i}Y_{\bullet}\|_{F}=\|V(i,:)V^{-1}_{*}(\mathcal{I},:)\|_{F}\leq\|V(i,:)\|_{F}\|V^{-1}_{*}(\mathcal{I},:)\|_{F}\leq \|V(i,:)\|_{F}\frac{\sqrt{K}}{|\lambda_{K}(V_{*}(\mathcal{I},:))|}\\
		&= \|V(i,:)\|_{F}\frac{\sqrt{K}}{|\lambda^{0.5}_{K}(V_{*}(\mathcal{I},:)V'_{*}(\mathcal{I},:))|}\leq\frac{\tilde{\theta}^{2}_{\mathrm{max}}\sqrt{K\kappa(\Pi'\Pi)}}{\tilde{\theta}^{2}_{\mathrm{min}}\sqrt{\lambda_{K}(\Pi'\Pi)}}.
		\end{align*}
	\end{itemize}
	For $1\leq k\leq K$, let $\hat{J}_{k}$ be the $k$-th diagonal entries of $\hat{J}$. Since in Lemma 3.3, we consider permutation matrix $\mathcal{P}$, let $p(k)$ be the index of the $k$-th row of $\hat{V}_{*,2}(\hat{\mathcal{I}},:)$ after considering permutation matrix $\mathcal{P}$. Since $V_{*,1}=V_{*,2}V, \hat{V}_{*,1}=\hat{V}_{*,2}\hat{V}$, we have $V_{*,1}(\mathcal{I},:)=V_{*,2}(\mathcal{I},:)V, \hat{V}_{*,1}(\mathcal{\hat{I}},:)=\hat{V}_{*,2}(\mathcal{\hat{I}},:)\hat{V}$, which gives that $$J=\sqrt{\mathrm{diag}(V_{*,1}(\mathcal{I},:)EV'_{*,1}(\mathcal{I},:))}=\sqrt{\mathrm{diag}(V_{*,2}(\mathcal{I},:)VEV'V'_{*,2}(\mathcal{I},:))},$$ $$\hat{J}=\sqrt{\mathrm{diag}(\hat{V}_{*,1}(\mathcal{\hat{I}},:)\hat{E}\hat{V}'_{*,1}(\mathcal{\hat{I}},:))}=\sqrt{\mathrm{diag}(\hat{V}_{*,2}(\mathcal{\hat{I}},:)\hat{V}\hat{E}\hat{V}'\hat{V}'_{*,2}(\mathcal{\hat{I}},:))}.$$  Again, for convenience, set $\hat{V}_{2C}=\hat{V}_{*,2}(\mathcal{\hat{I}},:), V_{2C}=V_{*,2}(\mathcal{I},:)$. Since $\|V\|=1, \|\hat{V}\|=1, \|E\|=\|\mathscr{L}_{\tau}-L_{\tau}+L_{\tau}\|\leq err_{n}+1=O(1), \|\hat{E}\|=\|\hat{E}-E+E\|\leq O(1)$ by Weyl's inequality, and $\|e'_{k}\mathcal{P}'\hat{V}_{2C}\|=\|\mathcal{P}'\hat{V}_{2C}e_{k}\|= \|\hat{V}_{2C}e_{k}\|\leq \|e'_{k}\hat{V}_{2C}\|_{F}=1$,  we have
	\begin{align*}
	&|J^{2}_{k}-\hat{J}^{2}_{p(k)}|=\|e'_{k}V_{2C}VEV'V'_{2C}e_{k}-e'_{k}\mathcal{P}'\hat{V}_{2C}\hat{V}\hat{E}\hat{V}'\hat{V}'_{2C}\mathcal{P}e_{k}\|\\
	&\leq \|e'_{k}(V_{2C}-\mathcal{P}'\hat{V}_{2C})VEV'V'_{2C}e_{k}\|+\|e'_{k}\mathcal{P}'\hat{V}_{2C}(VEV'-\hat{V}\hat{E}\hat{V}')V'_{2C}e_{k}\|\\
	&~~~+\|e'_{k}\mathcal{P}'\hat{V}_{2C}\hat{V}\hat{E}\hat{V}'(V'_{2C}-\hat{V}'_{2C}\mathcal{P})e_{k}\|\\
	&\leq \|e'_{k}(V_{2C}-\mathcal{P}'\hat{V}_{2C})\|\|V\|\|E\|\|V'\|\|V'_{2C}e_{k}\|+\|e'_{k}\mathcal{P}'\hat{V}_{2C}\|\|VEV'-\hat{V}\hat{E}\hat{V}'\|\|V'_{2C}e_{k}\|\\
	&~~~+\|e'_{k}\mathcal{P}'\hat{V}_{2C}\|\|\hat{V}\|\|\hat{E}\|\|\hat{V}'\|\|(V'_{2C}-\hat{V}'_{2C}\mathcal{P})e_{k}\|\\
	&= \|e'_{k}(V_{2C}-\mathcal{P}'\hat{V}_{2C})\|\|E\|+\|VEV'-\hat{V}\hat{E}\hat{V}'\|+\|(V'_{2C}-\hat{V}'_{2C}\mathcal{P})e_{k}\|\|\hat{E}\|\\
	&\leq \|e'_{k}(V_{2C}-\mathcal{P}'\hat{V}_{2C})\|O(1)+\|VEV'-\hat{V}\hat{E}\hat{V}'\|+\|(V'_{2C}-\hat{V}'_{2C}\mathcal{P})e_{k}\|O(1)\\
	&= \|(\mathcal{P}V_{2C}-\hat{V}_{2C})e_{k}\|_{F}O(1)+\|VEV'-\hat{V}\hat{E}\hat{V}'\|+\|e'_{k}(V_{2C}-\mathcal{P}'\hat{V}_{2C})\|O(1)\\
	&\leq 2\varpi_{2}O(1)+\|VEV'-\hat{V}\hat{E}\hat{V}'\|\leq O(\varpi_{2})+O(err_{n}),
	\end{align*}
	where the last inequality holds by below analysis: from the properties of the SVD, we know that $\hat{V}\hat{E}\hat{V}'$ is the best rank $K$ approximation to $L_{\tau}$ in spectral norm, therefore $\|\hat{V}\hat{E}\hat{V}'-L_{\tau}\|\leq \|\mathscr{L}_{\tau}-L_{\tau}\|$ since $\mathscr{L}_{\tau}=VEV'$ with rank $K$ and $\mathscr{L}_{\tau}$ can also be viewed as a rank $K$ approximation to $L_{\tau}$. This leads to $\|\hat{V}\hat{E}\hat{V}'-VEV'\|=\|\hat{V}\hat{E}\hat{V}'-L_{\tau}+L_{\tau}-\mathscr{L}_{\tau}\|\leq 2\|L_{\tau}-\mathscr{L}_{\tau}\|\leq O(err_{n})$.
	Then, we have
	\begin{align*}
	&|J_{k}-\hat{J}_{p(k)}|= \frac{|J^{2}_{k}-\hat{J}^{2}_{p(k)}|}{J_{k}+\hat{J}_{p(k)}}\leq \frac{|J^{2}_{k}-\hat{J}^{2}_{p(k)}|}{J_{k}}\leq |J^{2}_{k}-\hat{J}^{2}_{p(k)}|\frac{\tilde{\theta}_{\mathrm{max}}}{\tilde{\theta}^{2}_{\mathrm{min}}\sqrt{\lambda_{K}(\Pi'\Pi)}}
	\\
	&\leq(O(\varpi_{2})+O(err_{n}))\frac{\tilde{\theta}_{\mathrm{max}}}{\tilde{\theta}^{2}_{\mathrm{min}}\sqrt{\lambda_{K}(\Pi'\Pi)}}=O(\frac{\tilde{\theta}^{8}_{\mathrm{max}}K^{2.5}\varpi_{1}\kappa^{3.5}(\Pi'\Pi)}{\eta \tilde{\theta}^{9}_{\mathrm{min}}}).
	\end{align*}
	Then, for $1\leq i\leq n$, since $Z=Y_{\bullet}J, \hat{Z}=\hat{Y}_{\bullet}\hat{J}$, we have
	\begin{align*}
	&\|e'_{i}(\hat{Z}-Z\mathcal{P})\|_{F}=\|e'_{i}(\hat{Y}_{\bullet}-Y_{\bullet}\mathcal{P})\hat{J}+e'_{i}Y_{\bullet}\mathcal{P}(\hat{J}-\mathcal{P}'J\mathcal{P})\|_{F}\\
	&\leq\|e'_{i}(\hat{Y}_{\bullet}-Y_{\bullet}\mathcal{P})\|_{F}\|\hat{J}\|_{F}+\|e'_{i}Y_{\bullet}\mathcal{P}\|_{F}\|\hat{J}-\mathcal{P}'J\mathcal{P}\|_{F}\\
	&\leq\|e'_{i}(\hat{Y}_{\bullet}-Y_{\bullet}\mathcal{P})\|_{F}\|\hat{J}-\mathcal{P}'J\mathcal{P}+\mathcal{P}'J\mathcal{P}\|_{F}+\|e'_{i}Y_{\bullet}\mathcal{P}\|_{F}\|\hat{J}-\mathcal{P}'J\mathcal{P}\|_{F}\\
	&\leq\|e'_{i}(\hat{Y}_{\bullet}-Y_{\bullet}\mathcal{P})\|_{F}(\|\hat{J}-\mathcal{P}'J\mathcal{P}\|_{F}+\|J\|_{F})+\|e'_{i}Y_{\bullet}\|_{F}\|\hat{J}-\mathcal{P}'J\mathcal{P}\|_{F}\\
	&=\|e'_{i}(\hat{Y}_{\bullet}-Y_{\bullet}\mathcal{P})\|_{F}(\|J-\mathcal{P}\hat{J}\mathcal{P}'\|_{F}+\|J\|_{F})+\|e'_{i}Y_{\bullet}\|_{F}\|J-\mathcal{P}\hat{J}\mathcal{P}'\|_{F}\\
	&\leq O(\frac{\tilde{\theta}^{10}_{\mathrm{max}}K^{3.5}\varpi_{1}\kappa^{4.5}(\Pi'\Pi)}{\eta \tilde{\theta}^{10}_{\mathrm{min}}})(O(\frac{\tilde{\theta}^{8}_{\mathrm{max}}K^{3}\varpi_{1}\kappa^{3.5}(\Pi'\Pi)}{\eta \tilde{\theta}^{9}_{\mathrm{min}}})+\frac{\tilde{\theta}^{2}_{\mathrm{max}}K\sqrt{\lambda_{1}(\Pi'\Pi)}}{\tilde{\theta}_{\mathrm{min}}})\\
	&~~~+\frac{\tilde{\theta}^{2}_{\mathrm{max}}\sqrt{K\kappa(\Pi'\Pi)}}{\tilde{\theta}^{2}_{\mathrm{min}}\sqrt{\lambda_{K}(\Pi'\Pi)}}O(\frac{\tilde{\theta}^{8}_{\mathrm{max}}K^{3}\varpi_{1}\kappa^{3.5}(\Pi'\Pi)}{\eta \tilde{\theta}^{9}_{\mathrm{min}}})\\
	&=O(\frac{\tilde{\theta}^{10}_{\mathrm{max}}K^{3.5}\kappa^{4}(\Pi'\Pi)\varpi_{1}}{\tilde{\theta}^{11}_{\mathrm{min}}\eta\sqrt{\lambda_{K}(\Pi'\Pi)}}).
	\end{align*}
\end{proof}
\subsection{Proof of Theorem 3.6}
\begin{proof}
	Since the difference between the row-normalized projection coefficients $\Pi$ and $\hat{\Pi}$ can be bounded by the difference between $Z$ and $\hat{Z}$, for $1\leq i\leq n$, we have
	\begin{align*}	\|e'_{i}(\hat{\Pi}-\Pi\mathcal{P})\|_{F}&=\|\frac{e'_{i}\hat{Z}}{\|e'_{i}\hat{Z}\|_{F}}-\frac{e'_{i}Z\mathcal{P}}{\|e'_{i}Z\mathcal{P}\|_{F}}\|_{F}=\|\frac{e'_{i}\hat{Z}\|e'_{i}Z\|_{F}-e'_{i}Z\mathcal{P}\|e'_{i}\hat{Z}\|_{F}}{\|e'_{i}\hat{Z}\|_{F}\|e'_{i}Z\|_{F}}\|_{F}\\	&=\|\frac{e'_{i}\hat{Z}\|e'_{i}Z\|_{F}-e'_{i}\hat{Z}\|e'_{i}\hat{Z}\|_{F}+e'_{i}\hat{Z}\|e'_{i}\hat{Z}\|_{F}-e'_{i}Z\mathcal{P}\|e'_{i}\hat{Z}\|_{F}}{\|e'_{i}\hat{Z}\|_{F}\|e'_{i}Z\|_{F}}\|_{F}\\
	&\leq \frac{\|e'_{i}\hat{Z}\|e'_{i}Z\|_{F}-e'_{i}\hat{Z}\|e'_{i}\hat{Z}\|_{F}\|_{F}+\|e'_{i}\hat{Z}\|e'_{i}\hat{Z}\|_{F}-e'_{i}Z\mathcal{P}\|e'_{i}\hat{Z}\|_{F}\|_{F}}{\|e'_{i}\hat{Z}\|_{F}\|e'_{i}Z\|_{F}}\\
	&=\frac{\|e'_{i}\hat{Z}\|_{F}|\|e'_{i}Z\|_{F}-\|e'_{i}\hat{Z}\|_{F}|+\|e'_{i}\hat{Z}\|_{F}\|e'_{i}\hat{Z}-e'_{i}Z\mathcal{P}\|_{F}}{\|e'_{i}\hat{Z}\|_{F}\|e'_{i}Z\|_{F}}\\
	&=\frac{|\|e'_{i}Z\|_{F}-\|e'_{i}\hat{Z}\|_{F}|+\|e'_{i}\hat{Z}-e'_{i}Z\mathcal{P}\|_{F}}{\|e'_{i}Z\|_{F}}\leq\frac{2\|e'_{i}(\hat{Z}-Z\mathcal{P})\|_{F}}{\|e'_{i}Z\|_{F}}\\
	&\leq \frac{2\|e'_{i}(\hat{Z}-Z\mathcal{P})\|_{F}}{\mathrm{min}_{1\leq j\leq n}\|e'_{j}Z\|_{F}}.
	\end{align*}
	Set $m_{Z}=\mathrm{min}_{1\leq i\leq n}\|e'_{i}Z\|_{F}$ for notation convenience.
	Next, we give a lower bound for $m_{Z}$. Since $Z=Y_{\bullet}J=N^{-1}N_{M_{1}}\Pi$, where $N_{M_{1}}$ is defined in the proof of Lemma 2.3, i.e.,
	$N_{M_{1}}=\begin{bmatrix}
	\frac{1}{\|M_{1}(1,:)\|_{F}} &  & & \\
	& \frac{1}{\|M_{1}(2,:)\|_{F}}& &\\
	& & \ddots&\\
	&&&\frac{1}{\|M_{1}(n,:)\|_{F}}
	\end{bmatrix},$
	where $M_{1}=\Pi\tilde{\Theta}^{-1}(\mathcal{I},\mathcal{I})V(\mathcal{I},:)$. Thus, for $1\leq i\leq n$, we have
	\begin{align*}
	&\|e'_{i}Z\|_{F}=\|N^{-1}(i,i)N_{M_{1}}(i,i)\Pi(i,:)\|_{F}=N^{-1}(i,i)N_{M_{1}}(i,i)\|e'_{i}\Pi\|_{F}\\
	&\geq \mathrm{min}_{1\leq j\leq n}N^{-1}(j,j)\mathrm{min}_{1\leq j\leq n}N_{M_{1}}(j,j)\mathrm{min}_{1\leq j\leq n}\|e'_{j}\Pi\|_{F}\\
	&\geq\frac{1}{K^{0.5}}\mathrm{min}_{1\leq j\leq n}N^{-1}(j,j)\mathrm{min}_{1\leq j\leq n}N_{M_{1}}(j,j)\\
	&=\frac{K^{-0.5}}{\mathrm{max}_{1\leq j\leq n}N(j,j)\mathrm{max}_{1\leq j\leq n}\|e'_{j}M_{1}\|_{F}},
	\end{align*}
	where we use the fact that $\mathrm{min}_{i}\|e'_{i}\Pi\|_{F}\geq 1/K^{0.5}$. Since for any $1\leq i\leq n$, we have
	\begin{align*}
	\|e'_{i}M_{1}\|_{F}&=\|e'_{i}\Pi\tilde{\Theta}^{-1}(\mathcal{I},\mathcal{I})V(\mathcal{I},:)\|_{F}\leq \|e'_{i}\Pi\|_{F}\|\tilde{\Theta}^{-1}(\mathcal{I},\mathcal{I})V(\mathcal{I},:)\|_{F}\\
	&\leq \|\tilde{\Theta}^{-1}(\mathcal{I},\mathcal{I})V(\mathcal{I},:)\|_{F}\leq \|\tilde{\Theta}^{-1}(\mathcal{I},\mathcal{I})\|_{F}\|V(\mathcal{I},:)\|_{F}\leq \frac{\sqrt{K}}{\tilde{\theta}_{\mathrm{min}}}\|V(\mathcal{I},:)\|_{F}\leq \frac{K}{\tilde{\theta}_{\mathrm{min}}}\|V\|_{2\rightarrow\infty}\\
	&\overset{\mathrm{By~Lemma~}\ref{P2}}{\leq}\frac{\tilde{\theta}_{\mathrm{max}} K}{\tilde{\theta}^{2}_{\mathrm{min}}\sqrt{\lambda_{K}(\Pi'\Pi)}}.
	\end{align*}
	By the proof of Lemma 3.5, we have $N(i,i)\leq\frac{\tilde{\theta}_{\mathrm{max}}\sqrt{K\lambda_{1}(\Pi'\Pi)}}{\tilde{\theta}_{\mathrm{min}}}$.
	Combine the uppers bound of $\|e'_{i}M_{1}\|_{F}$  and $N(i,i)$, we have $m_{Z}\geq \frac{\tilde{\theta}^{3}_{\mathrm{min}}}{\tilde{\theta}^{2}_{\mathrm{max}}K^{2}\sqrt{\kappa(\Pi'\Pi)}}$, which gives that
	\begin{align*}
	&\|e'_{i}(\hat{\Pi}-\Pi\mathcal{P})\|_{F}\leq \frac{2\|e'_{i}(\hat{Z}-Z\mathcal{P})\|_{F}}{m_{Z}}\leq\frac{2\|e'_{i}(\hat{Z}-Z\mathcal{P})\|_{F}\tilde{\theta}^{2}_{\mathrm{max}}K^{2}\sqrt{\kappa(\Pi'\Pi)}}{\tilde{\theta}^{3}_{\mathrm{min}}}\\
	&\overset{\mathrm{By~Lemma~}3.5}{\leq}O(\frac{\tilde{\theta}^{12}_{\mathrm{max}}K^{5.5}\varpi_{1}\kappa^{4.5}(\Pi'\Pi)}{\eta \tilde{\theta}^{14}_{\mathrm{min}}\sqrt{\lambda_{K}(\Pi'\Pi)}}).
	\end{align*}
	Now, we give a lower bound for $\eta$. By the proof of Lemma \ref{P2}, we have $(V(\mathcal{I},:)V'(\mathcal{I},:))^{-1}=\tilde{\Theta}^{-1}(\mathcal{I},\mathcal{I})\Pi'\tilde{\Theta}^{2}\Pi\tilde{\Theta}^{-1}(\mathcal{I},\mathcal{I})$, which gives that
	\begin{align*}
	&(V_{*}(\mathcal{I},:)V'_{*}(\mathcal{I},:))^{-1}=(N(\mathcal{I},\mathcal{I})V(\mathcal{I},:)V'(\mathcal{I},:)N(\mathcal{I},\mathcal{I}))^{-1}\\
	&=N^{-1}(\mathcal{I},\mathcal{I})\tilde{\Theta}^{-1}(\mathcal{I},\mathcal{I})\Pi'\tilde{\Theta}^{2}\Pi\tilde{\Theta}^{-1}(\mathcal{I},\mathcal{I})N^{-1}(\mathcal{I},\mathcal{I})\\
	&\geq \frac{\tilde{\theta}^{2}_{\mathrm{min}}}{\tilde{\theta}^{2}_{\mathrm{max}}N^{2}_{\mathrm{max}}}\Pi'\Pi,
	\end{align*}
	where we set $N_{\mathrm{max}}=\mathrm{max}_{1\leq i\leq n}N(i,i)$. By the proof of Lemma 3.5, we have $N_{\mathrm{max}}\leq\frac{\tilde{\theta}_{\mathrm{max}}\sqrt{K\lambda_{1}(\Pi'\Pi)}}{\tilde{\theta}_{\mathrm{min}}}$, which gives that
	\begin{align*}
	(V_{*}(\mathcal{I},:)V'_{*}(\mathcal{I},:))^{-1}\geq \frac{\tilde{\theta}^{4}_{\mathrm{min}}}{\tilde{\theta}^{4}_{\mathrm{max}}K\lambda_{1}(\Pi'\Pi)}\Pi'\Pi.
	\end{align*}
	Since $\mathrm{min}_{1\leq k\leq K}e'_{k}\Pi'\Pi\mathbf{1}=\pi_{\mathrm{min}}$, we have
	$\eta=\mathrm{min}_{1\leq k\leq K}((V_{*}(\mathcal{I},:)V'_{*}(\mathcal{I},:))^{-1}\mathbf{1})(k)\geq\frac{\tilde{\theta}^{4}_{\mathrm{min}}\pi_{\mathrm{min}}}{\tilde{\theta}^{4}_{\mathrm{max}}K\lambda_{1}(\Pi'\Pi)}$, which gives that
	\begin{align*}
	\mathrm{max}_{1\leq i\leq n}\|e'_{i}(\hat{\Pi}-\Pi\mathcal{P})\|_{F}\leq O(\frac{\tilde{\theta}^{12}_{\mathrm{max}}K^{5.5}\varpi_{1}\kappa^{4.5}(\Pi'\Pi)}{\eta \tilde{\theta}^{14}_{\mathrm{min}}\sqrt{\lambda_{K}(\Pi'\Pi)}})\leq O(\frac{\tilde{\theta}^{16}_{\mathrm{max}}K^{6.5}\varpi_{1}\kappa^{4.5}(\Pi'\Pi)\lambda_{1}(\Pi'\Pi)}{\tilde{\theta}^{18}_{\mathrm{min}}\pi_{\mathrm{min}}\sqrt{\lambda_{K}(\Pi'\Pi)}}).
	\end{align*}
	Since $\tilde{\theta}_{\mathrm{max}}=\mathrm{max}_{i}\frac{\theta(i)}{\sqrt{\tau+\mathscr{D}(i,i)}}\leq\frac{\theta_{\mathrm{max}}}{\sqrt{\tau+\delta_{\mathrm{min}}}}$ and $\tilde{\theta}_{\mathrm{min}}=\mathrm{min}_{i}\frac{\theta(i)}{\sqrt{\tau+\mathscr{D}(i,i)}}\geq\frac{\theta_{\mathrm{min}}}{\sqrt{\tau+\delta_{\mathrm{max}}}}$, by Lemma 3.1, we have
	\begin{align*}
	\mathrm{max}_{1\leq i\leq n}\|e'_{i}(\hat{\Pi}-\Pi\mathcal{P})\|_{F}\leq O(\frac{\theta^{16}_{\mathrm{max}}(\tau+\delta_{\mathrm{max}})^{9}K^{6.5}\varpi_{1}\kappa^{4.5}(\Pi'\Pi)\lambda_{1}(\Pi'\Pi)}{\theta^{18}_{\mathrm{min}}(\tau+\delta_{\mathrm{min}})^{8}\pi_{\mathrm{min}}\sqrt{\lambda_{K}(\Pi'\Pi)}}).
	\end{align*}
	By Lemma 3.2, since $\varpi_{1}=O(\frac{(\tau+\delta_{\mathrm{max}})\theta_{\mathrm{max}}\sqrt{K\mathrm{log}(n)}}{(\tau+\delta_{\mathrm{min}})\theta^{2}_{\mathrm{min}}|\lambda_{K}(P)|\lambda_{K}(\Pi'\Pi)})$,
	we have
	\begin{align*}
	\mathrm{max}_{1\leq i\leq n}\|e'_{i}(\hat{\Pi}-\Pi\mathcal{P})\|_{F}\leq O(\frac{\theta^{17}_{\mathrm{max}}(\tau+\delta_{\mathrm{max}})^{10}K^{7}\kappa^{4.5}(\Pi'\Pi)\lambda_{1}(\Pi'\Pi)\sqrt{\mathrm{log}(n)}}{\theta^{20}_{\mathrm{min}}(\tau+\delta_{\mathrm{min}})^{9}|\lambda_{K}(P)|\pi_{\mathrm{min}}\lambda^{1.5}_{K}(\Pi'\Pi)}).
	\end{align*}
	Since $\delta_{\mathrm{max}}=\mathrm{max}_{i}\mathscr{D}(i,i)=\mathrm{max}_{i}\sum_{j=1}^{n}\Omega(i,j)=\mathrm{max}_{i}\theta(j)\sum_{j=1}^{n}\theta(j)P(g_{i},g_{j})\leq \theta_{\mathrm{max}}\|\theta\|_{1}$, combining it with the fact that $\tau+\delta_{\mathrm{min}}\leq C\theta_{\mathrm{max}}\|\theta\|_{1}$ by Lemma 3.1, we have $\tau+\delta_{\mathrm{max}}\leq C\theta_{\mathrm{max}}\|\theta\|_{1}$, which gives that
	\begin{align*}
	\mathrm{max}_{1\leq i\leq n}\|e'_{i}(\hat{\Pi}-\Pi\mathcal{P})\|_{F}\leq O(\frac{\theta^{17}_{\mathrm{max}}(\tau+\delta_{\mathrm{max}})^{9}K^{7}\kappa^{4.5}(\Pi'\Pi)\lambda_{1}(\Pi'\Pi)\theta_{\mathrm{max}}\|\theta\|_{1}\sqrt{\mathrm{log}(n)}}{\theta^{20}_{\mathrm{min}}(\tau+\delta_{\mathrm{min}})^{9}|\lambda_{K}(P)|\pi_{\mathrm{min}}\lambda^{1.5}_{K}(\Pi'\Pi)}).
	\end{align*}
\end{proof}
\subsection{Proofs of Corollaries 3.7 and 3.8}
\begin{proof}
	For Corollary 3.7, since $\tau_{\mathrm{opt}}=O(\theta_{\mathrm{max}}\|\theta\|_{1})$ and $\delta_{\mathrm{max}}\leq \theta_{\mathrm{max}}\|\theta\|_{1}$, we have $(\frac{\tau+\delta_{\mathrm{max}}}{\tau+\delta_{\mathrm{min}}})^{9}=O(1)$,
	hence Corollary 3.7's first result follows. For the sparest case, simply use $\mathrm{log}^{1+2\gamma}(n)$ to replace $\theta_{\mathrm{max}}\|\theta\|_{1}$, and then we can obtain the result.
	
	For Corollary 3.8, when $\theta_{\mathrm{max}}\leq C\theta_{\mathrm{min}}$, we have $\theta_{\mathrm{min}}\sqrt{n}=\sqrt{\theta^{2}_{\mathrm{min}}n}=O(\sqrt{\theta^{2}_{\mathrm{max}}n})=O(\sqrt{\theta_{\mathrm{max}}\|\theta\|_{1}})$. Now, simply substitute $K=O(1), \pi_{\mathrm{min}}=O(n/K)=O(n), \lambda_{1}(\Pi'\Pi)=O(n/K)=O(n), \kappa(\Pi'\Pi)=O(1)$ into Corollary 3.7 and Corollary 3.8 follows.
\end{proof}

\begin{rem}
	In Corollary 3.8, for consistency estimation (i.e., $\mathrm{max}_{1\leq i\leq n}\|e'_{i}(\hat{\Pi}-\Pi\mathcal{P})\|_{F}\leq 1$), we need $|\lambda_{K}(P)|\geq O(\sqrt{\frac{\mathrm{log}(n)}{\theta_{\mathrm{max}}\|\theta\|_{1}}})$.
	
	Recall the condition $|\lambda_{K}|\geq C\frac{\theta_{\mathrm{max}}\sqrt{n\mathrm{log}(n)}}{\tau+\delta_{\mathrm{min}}}$ in Lemma 3.2. Set $\tau$ as $\tau_{\mathrm{opt}}$ in Eq (9), this condition reads $|\lambda_{K}|\geq C\frac{\sqrt{n\mathrm{log}(n)}}{\|\theta\|_{1}}$. By Lemma \ref{P4}, we know that $|\lambda_{K}|\geq \tilde{\theta}^{2}_{\mathrm{min}}|\lambda_{K}(P)|\lambda_{K}(\Pi'\Pi)\geq \frac{\theta^{2}_{\mathrm{min}}|\lambda_{K}(P)|\lambda_{K}(\Pi'\Pi)}{\tau+\delta_{\mathrm{max}}}=O(\frac{\theta^{2}_{\mathrm{min}}|\lambda_{K}(P)|\lambda_{K}(\Pi'\Pi)}{\theta_{\mathrm{max}}\|\theta\|_{1}})$. Under the settings of Corollary 3.8, to make the condition $|\lambda_{K}|\geq C\frac{\sqrt{n\mathrm{log}(n)}}{\|\theta\|_{1}}$ always holds, by Lemma \ref{P4}, we only need $\frac{\theta^{2}_{\mathrm{min}}|\lambda_{K}(P)|\lambda_{K}(\Pi'\Pi)}{\theta_{\mathrm{max}}\|\theta\|_{1}}\geq C\frac{\sqrt{n\mathrm{log}(n)}}{\|\theta\|_{1}}\Leftrightarrow \theta_{\mathrm{max}}|\lambda_{K}(P)|\sqrt{n}\geq C\sqrt{\mathrm{log}(n)}\Leftrightarrow |\lambda_{K}(P)|\sqrt{\theta_{\mathrm{max}}\|\theta\|_{1}}\geq C\sqrt{\mathrm{log}(n)}\Leftrightarrow|\lambda_{K}(P)|\geq O(\sqrt{\frac{\mathrm{log}(n)}{\theta_{\mathrm{max}}\|\theta\|_{1}}})$, which is consistent with the consistency estimation requirement on $|\lambda_{K}(P)|$. Therefore, under the settings of Corollary 3.8, for consistency estimation when the lower bound requirement on $|\lambda_{K}(P)|$ in Lemma 3.2 holds, $\lambda_{K}(P)$ should satisfy
	$$|\lambda_{K}(P)|\geq O(\sqrt{\frac{\mathrm{log}(n)}{\theta_{\mathrm{max}}\|\theta\|_{1}}}).$$
\end{rem}

\section{One-Class SVM and SVM-cone algorithm}\label{OneClassSVMandSVMcone}
In this section, we briefly introduce one-class SVM and SVM-cone algorithm given in \cite{MaoSVM}.

As mentioned in Problem 1 in \cite{MaoSVM}, if a matrix $S\in\mathbb{R}^{n\times m}$ has the form $S=HS_{C}$, where $H\in\mathrm{R}^{n\times K}$ with nonnegative entries, no row of $H$ is 0, and $S_{C}\in\mathbb{R}^{K\times m}$ corresponding to $K$ rows of $S$ (i.e., there exists an index set $\mathcal{I}$ with $K$ entries such that $S_{C}=S(\mathcal{I},:)$), and each row of $S$ has unit $l_{2}$ norm. Then problem of inferring $H$ from $S$ is called the ideal cone problem. The ideal cone problem can be solved by one-class SVM applied to the rows of $S$. the $K$ normalized corners in $S_{C}$ are the support vectors found by a one-class SVM:
\begin{align}\label{OneClassSVM}
\mathrm{maximize~}b~~\mathrm{s.t.}~~\textbf{w}'S(i,:)\geq b(\mathrm{~for~}i=1,2,\ldots,n)~\mathrm{and~~}\|\textbf{w}\|_{F}\leq 1.
\end{align}
The solution  $(\textbf{w}, b)$ for the ideal cone problem when $(S_{C}S'_{C})^{-1}\mathbf{1}>0$ is given by
\begin{align}\label{SolutionOfOneClassSVM}
\textbf{w}=b^{-1}\cdot S'_{C}\frac{(S_{C}S'_{C})^{-1}\mathbf{1}}{\mathbf{1}'(S_{C}S'_{C})^{-1}\mathbf{1}},~~~ b=\frac{1}{\sqrt{\mathbf{1}'(S_{C}S'_{C})^{-1}\mathbf{1}}}.
\end{align}
for the empirical case, if we are given a matrix  $\hat{S}\in\mathbb{R}^{n\times m}$ such that all rows of $\hat{S}$  have unit $l_{2}$ norm, infer $H$ from $\hat{S}$ with given $K$ is called the empirical cone problem (i.e., Problem 2 in \cite{MaoSVM}). For the empirical cone problem, we can apply one-class SVM to all rows of $\hat{S}$ to obtain $\textbf{w}$ and $b$'s estimations $\hat{\textbf{w}}$ and $\hat{b}$. Then apply K-means algorithm to rows of $\hat{S}$ that are close to the hyperplane into $K$ clusters, the $K$ clusters can give the estimation of the  index set $\mathcal{I}$. Below is the SVM-cone algorithm given in \cite{MaoSVM}.
\begin{algorithm}
	\caption{\textbf{SVM-cone}}
	\label{alg:SVMcone}
	\begin{algorithmic}[1]
		\Require $\hat{S}\in \mathbb{R}^{n\times m}$ with rows have unit $l_{2}$ norm, number of corners $K$, estimated distance corners from hyperplane $\gamma$.
		\Ensure The near-corner index set $\mathcal{\hat{I}}$.
		\State Run one-class SVM on $\hat{S}(i,:)$ to get $\hat{\textbf{w}}$ and $\hat{b}$
		\State Run K-means algorithm to the set $\{\hat{S}(i,:)| \hat{S}(i,:)\hat{\textbf{w}}\leq \hat{b}+\gamma\}$ that are close to the hyperplane into $K$ clusters
		\State Pick one point from each cluster to get the near-corner set $\mathcal{\hat{I}}$
	\end{algorithmic}
\end{algorithm}
As suggested in \cite{MaoSVM}, we can start $\gamma=0$ and incrementally increase it until $K$ distinct clusters are found.

Now turn to our Mixed-RSC algorithm. Set $\textbf{w}_{1}=b_{1}^{-1}V'_{*,1}(\mathcal{I},:)\frac{(V_{*,1}(\mathcal{I},:)V'_{*,1}(\mathcal{I},:))^{-1}\mathbf{1}}{\mathbf{1}'(V_{*,1}(\mathcal{I},:)V'_{*,1}(\mathcal{I},:))^{-1}}, b_{1}=\frac{1}{\sqrt{\mathbf{1}'(V_{*,1}(\mathcal{I},:)V'_{*,1}(\mathcal{I},:))^{-1}\mathbf{1}}}$, and $\textbf{w}_{2}=b_{2}^{-1}V'_{*,2}(\mathcal{I},:)\frac{(V_{*,2}(\mathcal{I},:)V'_{*,2}(\mathcal{I},:))^{-1}\mathbf{1}}{\mathbf{1}'(V_{*,2}(\mathcal{I},:)V'_{*,2}(\mathcal{I},:))^{-1}}, b_{2}=\frac{1}{\sqrt{\mathbf{1}'(V_{*,2}(\mathcal{I},:)V'_{*,2}(\mathcal{I},:))^{-1}\mathbf{1}}}$ such that $\textbf{w}_{1}$ and $b_{1}$ are solutions of the one-class SVM in Eq (\ref{OneClassSVM}) by setting $S=V_{*,1}$, and $\textbf{w}_{2}$ and $b_{2}$ are solutions of the one-class SVM in Eq (\ref{OneClassSVM}) by setting $S=V_{*,2}$ . By Lemma \ref{WhyUseKmeansInSVMcone}, we see that if node $i$ is a pure node, then we have $V_{*,1}(i,:)\textbf{w}_{1}=b_{1}$, which suggests that in the SVM-cone algorithm, if the input matrix is $V_{*,1}$, by setting $\gamma=0$, we can find all pure nodes, i.e., the set $\{V_{*,1}(i,:)|V_{*}(i,:)\textbf{w}_{1}=b_{1}\}$ contain all rows of $V_{*,1}$ respective to pure nodes while including mixed nodes. By Lemma 2.3, we see that these  pure nodes belong to $K$ distinct clusters such that if nodes $i,j$ are in the same clusters, then we have $V_{*,1}(i,:)=V_{*,1}(j,:)$, and this is the reason that we need to apply K-means algorithm on the set obtained in step 2 in the SVM-cone algorithm to obtain the $K$ distinct clusters, and this is also the reason that we said SVM-cone returns the index set $\mathcal{I}$ (the $K$ indexes of $\mathcal{I}$ denote the indexes of $K$ pure rows of $V_{*,1}$, one from each cluster) when the input is $V_{*,1}$ in the explanation of Figure 1. Similar arguments hold when the input is $V_{*,2}$ in the SVM-cone algorithm.
\begin{lem}\label{WhyUseKmeansInSVMcone}
	Under $DCMM(n,P,\Theta,\Pi)$, for $1\leq i\leq n$, $V_{*,1}(i,:)$, if node $i$ is a pure node such that $\Pi(i,k)=1$ for certain $k$, we have
	\begin{align*}
	V_{*,1}(i,:)\textbf{w}_{1}=b_{1}\mathrm{~~~and~~~}V_{*,2}(i,:)\textbf{w}_{2}=b_{2},
	\end{align*}
	Meanwhile, if node $i$ is not a pure node, then the above equalities do not hold.
\end{lem}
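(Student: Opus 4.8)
The plan is to exploit the scaled-convex-combination representation of the rows of $V_{*,1}$ supplied by Lemma \ref{P1} together with the explicit form of the one-class SVM solution. For notational convenience write $S_{C}=V_{*,1}(\mathcal{I},:)$, so that by Lemma \ref{P1} every row can be written as $V_{*,1}(i,:)=r_{1}(i)\Phi_{1}(i,:)S_{C}$ with $r_{1}(i)\geq 1$, where $\Phi_{1}(i,:)$ has nonnegative entries summing to one (since $\Phi_{1}(i,:)=Y(i,:)/(Y(i,:)\mathbf{1})$), and $r_{1}(i)=1$, $\Phi_{1}(i,:)=e'_{k}$ exactly when $i$ is pure with $\Pi(i,k)=1$, while $r_{1}(i)>1$ when $i$ is mixed. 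The whole argument then reduces to the single geometric fact that the $K$ corner rows lie on the separating hyperplane, i.e. $S_{C}\textbf{w}_{1}=b_{1}\mathbf{1}$.

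First I would verify this fact by direct substitution. Plugging $\textbf{w}_{1}=b_{1}^{-1}S_{C}'(S_{C}S_{C}')^{-1}\mathbf{1}/(\mathbf{1}'(S_{C}S_{C}')^{-1}\mathbf{1})$ into $S_{C}\textbf{w}_{1}$ and using $S_{C}S_{C}'(S_{C}S_{C}')^{-1}=I$, the middle factors cancel to leave
\begin{align*}
S_{C}\textbf{w}_{1}=b_{1}^{-1}\frac{\mathbf{1}}{\mathbf{1}'(S_{C}S_{C}')^{-1}\mathbf{1}}=b_{1}^{-1}b_{1}^{2}\mathbf{1}=b_{1}\mathbf{1},
\end{align*}
where the last two equalities use the definition $b_{1}=(\mathbf{1}'(S_{C}S_{C}')^{-1}\mathbf{1})^{-1/2}$. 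This confirms that each corner row satisfies $S_{C}(k,:)\textbf{w}_{1}=b_{1}$.

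Next I would evaluate $V_{*,1}(i,:)\textbf{w}_{1}$ for a general node. Using the representation and the hyperplane identity $S_{C}\textbf{w}_{1}=b_{1}\mathbf{1}$,
\begin{align*}
V_{*,1}(i,:)\textbf{w}_{1}=r_{1}(i)\Phi_{1}(i,:)S_{C}\textbf{w}_{1}=r_{1}(i)b_{1}\,\Phi_{1}(i,:)\mathbf{1}=r_{1}(i)b_{1},
\end{align*}
since $\Phi_{1}(i,:)\mathbf{1}=1$. For a pure node $r_{1}(i)=1$ gives $V_{*,1}(i,:)\textbf{w}_{1}=b_{1}$, and for a mixed node $r_{1}(i)>1$ together with $b_{1}>0$ gives $V_{*,1}(i,:)\textbf{w}_{1}=r_{1}(i)b_{1}>b_{1}$, so the equality fails. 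Positivity $b_{1}>0$ is guaranteed by Lemma \ref{LSVM}, which ensures $(S_{C}S_{C}')^{-1}\mathbf{1}>0$ and hence $\mathbf{1}'(S_{C}S_{C}')^{-1}\mathbf{1}>0$.

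The $V_{*,2}$ statement then follows verbatim: Lemma \ref{P1} supplies the same representation $V_{*,2}(i,:)=r_{2}(i)\Phi_{2}(i,:)V_{*,2}(\mathcal{I},:)$ with identical behavior of $r_{2}(i)$, and by Lemma \ref{Equivalence} we have $V_{*,2}(\mathcal{I},:)V'_{*,2}(\mathcal{I},:)=V_{*,1}(\mathcal{I},:)V'_{*,1}(\mathcal{I},:)$, so $b_{2}=b_{1}>0$ and the corner-hyperplane identity $V_{*,2}(\mathcal{I},:)\textbf{w}_{2}=b_{2}\mathbf{1}$ holds by the same substitution, yielding $V_{*,2}(i,:)\textbf{w}_{2}=r_{2}(i)b_{2}$. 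The only real subtlety, and hence the main thing to get right, is the bookkeeping in the hyperplane identity $S_{C}\textbf{w}_{1}=b_{1}\mathbf{1}$ and confirming, via Lemma \ref{P1}, that the coefficient vector $\Phi_{1}(i,:)$ is genuinely a probability vector, so that the scalar $r_{1}(i)$ alone controls whether a node sits on, versus strictly above, the hyperplane.
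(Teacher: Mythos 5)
Your proof is correct and follows essentially the same route as the paper's own: both arguments rest on the representation $V_{*,1}(i,:)=r_{1}(i)\Phi_{1}(i,:)V_{*,1}(\mathcal{I},:)$ from Lemma \ref{P1} combined with the closed-form one-class SVM solution in Eq.~(\ref{SolutionOfOneClassSVM}). In fact, your write-up is tighter than the paper's on the mixed-node direction: the paper only argues that a mixed node's row differs from every corner row $e'_{k}V_{*,1}(\mathcal{I},:)$, which by itself does not preclude that row from lying elsewhere on the hyperplane $\{x: x\textbf{w}_{1}=b_{1}\}$, whereas your explicit computation $V_{*,1}(i,:)\textbf{w}_{1}=r_{1}(i)\Phi_{1}(i,:)V_{*,1}(\mathcal{I},:)\textbf{w}_{1}=r_{1}(i)b_{1}>b_{1}$ (using $\Phi_{1}(i,:)\mathbf{1}=1$, $r_{1}(i)>1$ for mixed nodes, and $b_{1}>0$ via Lemma \ref{LSVM}) closes that gap cleanly; the same remark applies to the $V_{*,2}$ case via Lemma \ref{Equivalence}.
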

\begin{proof}
	We only prove that $V_{*,1}(i,:)\textbf{w}_{1}=b_{1}$ when $\Pi(i,k)=1$, since the second equality can be proved similarly. By Lemma \ref{P1}, we know that when node $i$ is a pure node such that $\Pi(i,k)=1$, $V_{*,1}(i,:)$ can be written as $V_{*,1}(i,:)=e'_{k}V_{*,1}(\mathcal{I},:)$, then we have $V_{*,1}(i,:)\textbf{w}_{1}=b_{1}$ surely. And if $i$ is a mixed node, by Lemma \ref{P1}, we know that $r(i)>1$ and $\Phi(i,:)\neq e_{k}$ for any $k=1,2,\ldots, K$, hence $V_{*,1}(i,:)\neq e'_{k}V_{*,1}(\mathcal{I},:)$ if $i$ is mixed, which gives the result.
\end{proof}

\end{document}